\newtheorem{theorem}{Theorem}
\newtheorem{corollary}{Corollary}
\newtheorem*{question*}{Question}
\newtheorem{definition}{Definition}
\theoremstyle{remark}
\newtheorem{example}{Example}
\newcommand{\mf}[1]{\textcolor{black}{#1}}
\definecolor{nblue}{rgb}{0.3,0.3,1.0}
\definecolor{ngreen}{rgb}{0.2,0.7,0.2}
\definecolor{nred}{rgb}{0.9,0.1,0}
\definecolor{red2}{rgb}{0.6,0.2,0.2}
\definecolor{npurple}{rgb}{0.8,0.2,0.8}
\definecolor{golden}{rgb}{0.8,0.6,0.1}
\definecolor{nsilver}{rgb}{0.3,0.4,0.5}
\definecolor{nbrown}{rgb}{0.8,0.4,0.15}
\definecolor{nrose}{rgb}{0.7,0,0.35}
\definecolor{nviol}{rgb}{0.5,0,1.0}
\definecolor{nazur}{rgb}{0,0.35,0.7}
\definecolor{nchart}{rgb}{0.2,0.4,0}
\definecolor{nbrick}{rgb}{0.55,0.25,0.15}
\newcommand{\blk}{\color{black}}
\newcommand{\red}{\color{black}}
\newcommand{\pur}{\color{black}}
\newcommand{\tr}[1]{\ensuremath{\mathrm{Tr}}#1}
\newcommand\ra{\rightarrow}
\newcommand\mc[1]{\mathcal{#1}}
\newcommand\cH{\mc{H}}
\newcommand\cM{\mc{M}}
\newcommand\Ain{{\cH_{A^{\mathrm{in}}}}}
\newcommand\Aout{{\cH_{A^{\mathrm{out}}}}}
\newcommand\Bin{{\cH_{B^{\mathrm{in}}}}}
\newcommand\Ai{{A^{\mathrm{in}}}}
\newcommand\Ao{{A^{\mathrm{out}}}}
\newcommand\Bi{{B^{\mathrm{in}}}}
\newcommand\out{\mathrm{out}}
\newcommand\ins{\mathrm{in}}
\newcommand\Pa{\mathrm{Pa}}
\newcommand\Ch{\mathrm{Ch}}
\newcommand{\bA}{\mathbf{A}}
\newcommand{\ba}{\mathbf{a}}
\newcommand{\bB}{\mathbf{B}}
\newcommand{\bb}{\mathbf{b}}
\newcommand{\bC}{\mathbf{C}}
\newcommand{\bc}{\mathbf{c}}
\newcommand{\bE}{\mathbf{E}}
\newcommand{\be}{\mathbf{e}}
\newcommand{\bF}{\mathbf{F}}
\newcommand{\bT}{\mathbf{T}}
\newcommand{\bt}{\mathbf{t}}
\newcommand{\bU}{\mathbf{U}}
\newcommand{\bu}{\mathbf{u}}
\newcommand{\bV}{\mathbf{V}}
\newcommand{\bv}{\mathbf{v}}
\newcommand{\bX}{\mathbf{X}}
\newcommand{\bx}{\mathbf{x}}
\newcommand{\bY}{\mathbf{Y}}
\newcommand{\by}{\mathbf{y}}
\newcommand{\bZ}{\mathbf{Z}}
\newcommand{\bz}{\mathbf{z}}
\newcommand{\M}{M} 
\newcommand{\Q}{M_Q} 
\newcommand{\bL}{\mathbf{\Lambda}}
\begin{document}

\vspace{-2cm}
\title{\textbf{A Semantics for Counterfactuals\\ in Quantum Causal Models}}
\author{Ardra Kooderi Suresh\thanks{a.kooderisuresh@griffith.edu.au} $^1$, Markus Frembs\thanks{m.frembs@griffith.edu.au} $^1$, Eric G. Cavalcanti\thanks{e.cavalcanti@griffith.edu.au} $^1$}
\affil{$^1$\normalsize{Centre for Quantum Dynamics, Griffith University,\\ Yugambeh Country, Gold Coast, QLD 4222, Australia}}

\date{}

\maketitle

\vspace{-1cm}
	
\begin{abstract}
We introduce a formalism for the evaluation of counterfactual queries in the framework of quantum causal models, generalising Pearl's semantics for counterfactuals in classical causal models~\cite{Pearl2000}, thus completing the last rung in the quantum analogue of Pearl’s “ladder of causation”. To this end, we define a suitable extension of Pearl's notion of a `classical structural causal model', which we denote analogously by \emph{`quantum structural causal model'}, and a corresponding extension of Pearl's three-step procedure of abduction, action, and prediction. We show that every classical (probabilistic) structural causal model can be extended to a quantum structural causal model, and prove that counterfactual queries that can be formulated within a classical structural causal model agree with their corresponding queries in the quantum extension -- but the latter is more expressive. Counterfactuals in quantum causal models come in different forms: we distinguish between \emph{active} and \emph{passive} counterfactual queries, depending on whether or not an intervention is to be performed in the action step. This is in contrast to the classical case, where counterfactuals are always interpreted in the active sense. Another distinctive feature of our formalism is that it break the connection between causal and counterfactual dependence that exists in the classical case: quantum counterfactuals allow for \emph{counterfactual dependence without causal dependence}. This distinction between classical and quantum causal models may shed light on how the latter can reproduce quantum correlations that violate Bell inequalities while being faithful to the relativistic causal structure.
\end{abstract}

\vspace{.3cm}

\section{Introduction}

The world of alternative possibilities has been pondered upon and analyzed routinely, in many fields of study including but not limited to social \cite{fearon1996causes} and public policy \cite{loi2012note}, psychiatry \cite{tagini2021counterfactual}, economy \cite{ravallion2021poverty}, weather and climate change \cite{woo2021counterfactual}, artificial intelligence \cite{holtman2021counterfactual}, philosophy and causality \cite{hoerl2011understanding, collins2004causation}. For example, questions involving counterfactuals can have important social and legal implications, such as ``Given that the patient has died after treatment, would they have survived had they been given a different treatment?''\pur , or ``how many lives could the US have saved had it authorized booster vaccines sooner?"\cite{black2023covid}.\blk

The status of counterfactual questions also figures centrally in debates about quantum mechanics \cite{vaidman2009counterfactuals}, where results such as Bell's theorem \cite{Bell1964} and the Kochen-Specker theorem~\cite{KochenSpecker1967} have been interpreted as requiring the abandonment of ``counterfactual definiteness''\cite{skyrms1982counterfactual}, encapsulated in Peres' famous dictum ``unperformed experiments have no results'' \cite{Peres1978}. \red Could this assertion be used by a lawyer in an argument to dismiss a medical malpractice lawsuit as meaningless? \blk Presumably not. Dismissing \emph{all} counterfactual questions as meaningless due to quantum theory \red thus \blk seems too strong. Here, we seek to delineate what \red counterfactual questions involving quantum systems \blk can be unambiguously answered when unambiguously formulated, and to provide some direction for resolving the ambiguity that is inherent in counterfactual questions that are not so carefully constructed.

The semantics of counterfactuals has a controversial history. In one of the early accounts, David Lewis \cite{lewis1973counterfactuals} proposed to evaluate counterfactuals via a similarity analysis of possible worlds, where ``a counterfactual ‘If it were that A, then it would be that C’ is (non-vacuously) true if and only if some (accessible) world where both A and C are true is more similar to our actual world, overall, than is any world where A is true but C is false'' \cite{lewis1973counterfactuals}. This analysis is inevitably vague, as it requires an account of ``similarity'' among possible worlds, which Lewis attempts to resolve via a system of priorities. The goal is to identify \emph{closest worlds} as possible worlds in which things are kept more or less the same as in our actual world, except for some `minimal changes', required to make the antecedent of a given counterfactual true.

A recent approach, due to Judea Pearl, proposes to define counterfactuals in terms of a sufficiently well-specified causal model for a given situation, denoted by a (classical) \emph{structural causal model} \cite{Pearl2000}. In Pearl's approach, the `minimal changes' required to make the antecedent of a counterfactual true are conceptualised in terms of an \emph{intervention}, which breaks the causal connections into the variable being intervened upon while fixing it to the required counterfactual value. Structural causal models feature at the top of a hierarchy of progressively sophisticated models that can answer progressively sophisticated questions, which Pearl has dubbed the ``ladder of causation''~\cite{pearl2018book} (see Fig.~\ref{fig:ladder}).

As is well known, however, the classical causal model-framework of Pearl fails to reproduce quantum correlations while maintaining faithfulness to the relativistic causal structure---as vividly expressed by Bell's theorem~\cite{Bell1964} and recent `fine-tuning theorems'~\cite{wood2015lesson,cavalcanti2018classical,pearl2019classical}. The program of \emph{quantum causal models}~\cite{LeiferSpekkens2013, cavalcanti2014modifications, Henson2014, Chaves2015, Pienaar2015, costa2016quantum,  AllenEtAl2017, ShrapnelCostaMilburn2018, LorenzOreshkovBarrett2019} aims to resolve this tension by extending the classical causal model framework, while maintaining compatibility with relativistic causality. One of the aims of our work is to complete the last rung in the quantum analogue of Pearl's ``ladder of causation'', by proposing a framework to answer counterfactual queries in quantum causal models.

\pur
Pearl argues that the levels of interventions and counterfactuals are particularly important for human intelligence and understanding, as they are crucial for our internal modeling of the world and of the effects of our actions. In contrast, he argues that current artificial intelligence (AI) models ---however impressive--- are still restricted to level 1 of his causal hierarchy \cite{pearl2018book}. Another motivation of our extension of Pearl’s analysis to the framework of quantum causal models is thus its potential applications for quantum AI. 
\blk

A key distinction from the classical case is that, due to the indeterminism inherent in quantum causal models, counterfactual queries do not always have truth values (unlike in Lewis' and Pearl's accounts). Another difference is that an intervention is not always required in order to make the antecedent of a counterfactual true. This leads to a richer semantics for counterfactuals in the quantum case, which contains Pearl's classical structural causal model as a special case, as we show.

Finally, an important distinction regards the connection between counterfactual dependence and causal dependence. In Pearl's account, counterfactual dependence requires causal dependence. Similarly, Lewis \cite{Lewis1973causation} proposed an analysis of causal dependence based on his own notion of counterfactual dependence. In contrast, in quantum causal models there can be counterfactual dependence among events without causal dependence. This fact sheds new light on the nature of the compatibility with relativistic causality that is offered by quantum causal models. It can be thought of as a clarification and generalisation of Shimony's notion of ``passion at a distance''~\cite{Shimony1984}.

The rest of the paper is organised as follows. In Sec.~\ref{sec: classical causal models}, we review the basic ingredients to Pearl's ladder of causation (see Fig.~\ref{fig:ladder}), as well as his three-step procedure for evaluating counterfactuals based on the notion of (classical) structural causal models. In Sec.~\ref{sec: quantum violations}, we highlight the issues in accommodating quantum theory within this framework, in the light of Bell's theorem and the assumption of ``no fine-tuning'' \cite{wood2015lesson, cavalcanti2018classical}. The framework of quantum causal models aims to resolve this discrepancy. We introduce some key notions and notation of the latter in Sec.~\ref{sec: quantum causal models}, which will set the stage for our definition of quantum counterfactuals and their semantics based on a novel notion of quantum structural causal models in Sec.~\ref{sec: counterfactuals in QSM}. In Sec.~\ref{sec: quantum extension}, we show that Pearl's classical formalism of counterfactuals naturally embeds into our framework; conversely, in Sec.~\ref{sec:analysis} we elaborate on how our framework generalizes Pearl's formalism, by distinguishing passive from active counterfactuals in quantum causal models. This results in a difference between causal and counterfactual dependence in quantum causal models, which pinpoints a remarkable departure from classical counterfactual reasoning. We discuss this using the pertinent example of the Bell scenario in Sec.~\ref{sec: Bell scenario}. \pur In Sec.~\ref{sec: CFD}, we briefly review the debate surrounding the notion of counterfactual definiteness in quantum mechanics, and how although it fails in our formalism, this is not the particularly distinctive feature of quantum counterfactuals, and this rejection by itself cannot resolve Bell's theorem. \blk Sec.~\ref{sec: quantum Bayes} reflects on some of the key assumptions to our notion of quantum counterfactuals in the context of recent developments on quantum statistical inference. Sec.~\ref{sec: conclusion} concludes with a brief summary and discussion of the results, and questions for further work.
\begin{figure}
\centering
	\includegraphics[width=1\linewidth]{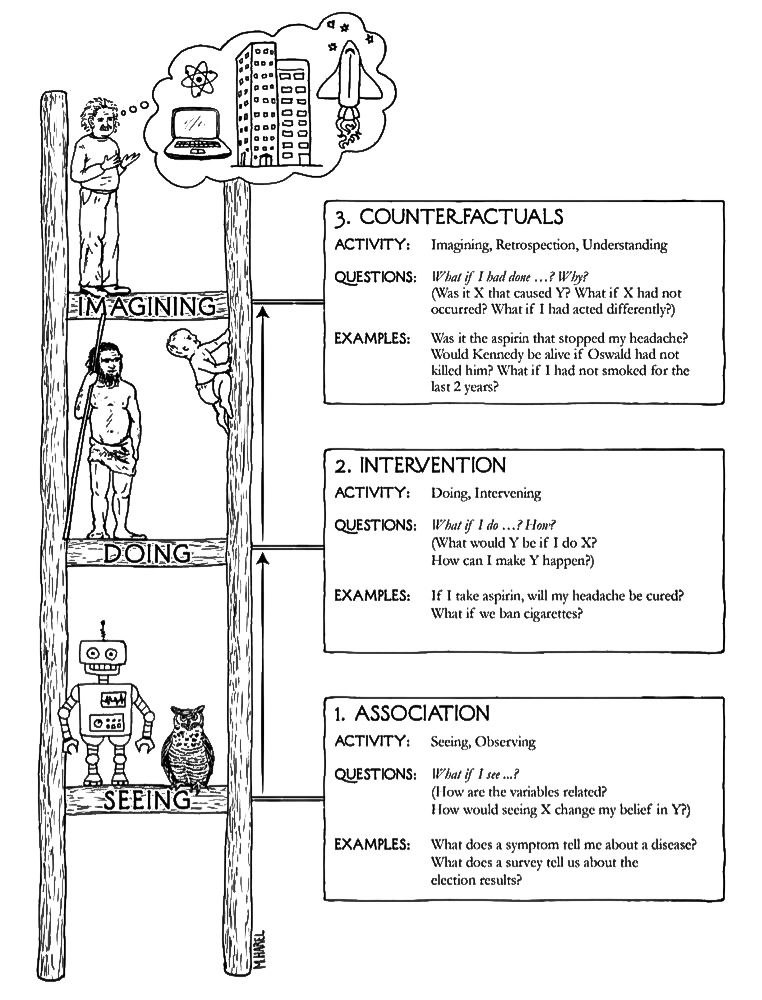}
	\caption{A depiction of ``The Ladder of Causation". [Republished with permission from Ref.~\cite{pearl2018book}.]}
	\label{fig:ladder}
\end{figure}

\section{The Classical Causal Model Framework}\label{sec: classical causal models}

This section contains the minimal background on classical causal models and the evaluation of counterfactuals required for the generalization to the quantum case in Sec.~\ref{sec: counterfactuals in QSM}. We will review a small fraction of the framework outlined in much more detail in Ref.~\cite{Pearl2000}; readers familiar with the latter may readily skip this section.

In his book on causality \cite{Pearl2000}, Judea Pearl identifies a hierarchy of progressively sophisticated models that are capable of answering progressively sophisticated causal queries. This hierarchy is often depicted as the three-rung `Ladder of Causation' (see Fig. \ref{fig:ladder}).

At the bottom of the ladder is the level of \emph{association} (`level 1'), related to observations and statistical relations. It answers questions such as ``how would seeing $X$ change my belief in $Y$?" The second rung is the level of \emph{intervention} (`level 2'), which considers questions such as “If I take aspirin, will my headache be cured”? The final rung in the ladder of causation is the level of \emph{counterfactuals} (`level 3'), associated with activities such as imagining, retrospecting, and understanding. It considers questions such as ``Was it the aspirin that stopped my headache?”, ``Had I not taken the aspirin, would my headache not have been cured?" etc. In other words, counterfactuals deal with `why'-questions. Formally, levels 1, 2 and 3 are related to \emph{Bayesian networks}, \emph{causal Bayesian networks}, and \emph{structural causal models}, respectively. We will formally define these in the coming subsections. 

\subsection{Level 1 - Bayesian networks}

In Pearl's framework, level 1 of the ladder of causation (Fig. \ref{fig:ladder}) is the level of association, which encodes statistical data in the form of a probability distribution $P(\bv) = P(v_1,\cdots,v_n)$ over random variables $\bV = \{V_i\}_{i=1}^n$.\footnote{Throughout, we will use boldface notation to indicate tuples of variables.} The latter are assumed to take values in a finite set, whose elements are denoted by the corresponding lowercase $v_i$. The proposition `$V_i=v_i$' represents an \textit{event} where the random variable $V_i$ takes the value $v_i$, and $P(v_i) := P(V_i=v_i) $ denotes the probability that this event occurs. 

Statistical independence conditions in a probability distribution can be conveniently  represented graphically using directed acyclic graphs (DAGs), which in this context are also known as \emph{Bayesian networks}. The nodes in a Bayesian network $G$ represent the random variables $\bV = \{V_i\}_{i=1}^n$, while arrows (`$V_j \ra V_k$') in $G$ impose a `kinship' relation: we call $\Pa(V_i) = \Pa_i := \{V_j \in \bV \mid (V_j \ra V_i) \in G\}$ the ``parents'' and $\mathrm{Ch}(V_i) = \Ch_i := \{V_j \in \bV \mid (V_i \ra V_j) \in G\}$ the ``children'' of the node $V_i$. For example, in Fig.~\ref{DAG_basic}, $V_1$ is the parent node of $V_2$ and $V_3$; $V_4$ is a child node of $V_2$, $V_3$ and $V_6$. 

\begin{figure}[b]
     \centering
		\begin{tikzpicture}[node distance={24mm}, thick, main/.style = {draw, circle}] 
			\node[main] (1) {$V_1$}; 
			\node[main] (2) [above right of=1] {$V_2$}; 
			\node[main] (3) [below right of=1] {$V_3$}; 
			\node[main] (4) [above right of=3] {$V_4$}; 
			\node[main] (5) [above right of=4] {$V_5$}; 
			\node[main] (6) [below right of=4] {$V_6$}; 
			\draw[->] (1) -- (2); 
			\draw[->] (1) -- (3);  
			\draw[->] (2) -- (4); 
			\draw[->] (3) -- (4); 
			\draw[->] (4) -- (5); 
			\draw[->] (6) -- (4);
		\end{tikzpicture} 
		\caption{A directed acyclic graph (DAG) with nodes $\bV = \{V_1,\cdots,V_6\}$ representing random variables, and arrows representing (causal) statistical dependencies.} \label{DAG_basic}
\end{figure}
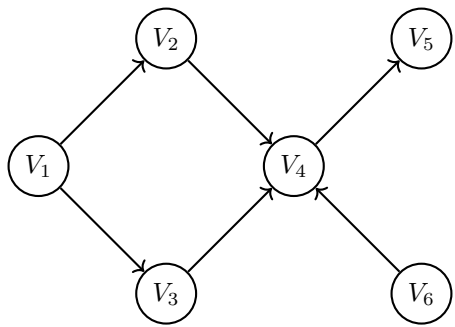

\begin{definition}[Classical Markov condition]\label{def: cmc}
		A joint probability distribution $P(\bv) = P(v_1, \cdots, v_n)$ is said to be \textit{Markov relative to a DAG $G$ with nodes $\bV = \{V_i\}_{i=1}^n$} if and only if there exist conditional probability distributions $P(v_i|pa_i)$ for each $V_i \in \bV$ such that,
		\begin{equation}\label{eq: Markov condition}
			P(\bv) = \prod_{i=1}^{n} P(v_i|pa_i)\; .
		\end{equation}
\end{definition}

In general, a probability distribution may be Markov relative to many Bayesian networks, corresponding to different ways it can be decomposed into conditional distributions. Moreover, a Bayesian network will have many distributions which are Markov with respect to it. Note that at this level (level 1), the DAG $G$ representing a Bayesian network does not carry causal meaning, but is merely a convenient representation of statistical conditional independences.

\subsection{Level 2 - Causal Bayesian networks and classical causal models}\label{sec: causal Bayesian networks}

At level 2 of the hierarchy are \emph{causal (Bayesian) networks}. In contrast to Bayesian networks, the arrows between nodes $\bV = \{V_i\}^n_{i=1}$ in a causal Bayesian network \emph{do} encode causal relationships. In particular, the parents $\Pa(V_i)$ of a node $V_i$ are now interpreted as \emph{direct causes} of $V_i$. Moreover, a causal network is an oracle for interventions. The effect of an intervention is modeled as a ``mini-surgery” in the graph that cuts all incoming arrows into the node being intervened upon and sets it to a specified value. We define the \emph{do-intervention} $\mathrm{do}(\bX = \bx)$ on a subset of nodes $\bX \subset \bV$ as the submodel $\langle G_\bx,P_\bx\rangle$, where $G_\bx$ is the modified DAG with the same nodes as $G$, but with all incoming arrows $V_j \ra V_i$ for $V_i \in \bX$ removed from $G$, and where $P_\bx$ arises from $P$ by setting the values at $\bX$ to $\bx$. More precisely, letting $\bV_\bx = \bV \backslash \bX$
\begin{equation}\label{eq: do-intervention}
    P_\bx(\bv) = \prod_{V_i \in \bX} \delta_{v_i,x_i} \prod_{V_i \in \bV_\bx} P(v_i|pa_i)\;.
\end{equation}

\begin{definition}[Classical Causal Model]\label{def: CCM}
    A \emph{classical causal model} is a pair $\langle G, P \rangle$, consisting of a directed acyclic graph $G$ \mf{with nodes $\bV = \{V_i\}_{i=1}^n$}, a probability distribution $P$ that is Markov with respect to $G$, according to Def.~\ref{def: cmc} \mf{and all its submodels $\langle G_\bx,P_\bx\rangle$, arising from do-interventions $\mathrm{do}(\bX = \bx)$ with $\bX \subset \bV$.}
\end{definition}

\mf{For example, if we perform a do-intervention $\mathrm{do}(V_2=v_2)$ on the classical causal model $\langle G,P\rangle$ with DAG $G$ in Fig.~\ref{DAG_basic}, then $G_{v_2}$ is the DAG shown in Fig.~\ref{DAG_intervened}, and the truncated factorization formula for the remaining variables reads
\begin{equation}
		P_{v_2}(v_1,v_3,v_4,v_5,v_6) = P(v_1) P(v_3|v_1) P(v_4|V_2=v_2,v_3,v_6) P(v_5|v_4) P(v_6)\;.
\end{equation}}

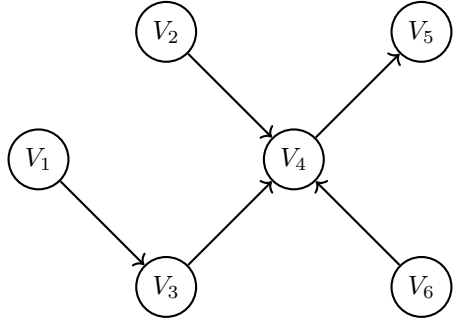
\begin{figure}
         \centering
		\begin{tikzpicture}[node distance={24mm}, thick, main/.style = {draw, circle}] 
			\node[main] (1) {$V_1$}; 
			\node[main] (2) [above right of=1] {$V_2$}; 
			\node[main] (3) [below right of=1] {$V_3$}; 
			\node[main] (4) [above right of=3] {$V_4$}; 
			\node[main] (5) [above right of=4] {$V_5$}; 
			\node[main] (6) [below right of=4] {$V_6$}; 
			\draw[->] (1) -- (3);  
			\draw[->] (2) -- (4); 
			\draw[->] (3) -- (4); 
			\draw[->] (4) -- (5); 
			\draw[->] (6) -- (4);
		\end{tikzpicture} 
		\caption{The directed acyclic graph from Fig.~\ref{DAG_basic} after a do-intervention on node $V_2$. The effect of this do-intervention is graphically represented by removing all the arrows into $V_2$.} \label{DAG_intervened}
\end{figure}

\subsection{Level 3 - Structural causal models and the evaluation of counterfactuals}

At level 3 of the hierarchy are (classical) \emph{structural causal models}. Such models consist of a set of nodes $(\bV,\bU)$, distinguished into \emph{endogenous} variables $\bV$ and \emph{exogenous} variables $\bU$, together with a set of functions $\bF$ that encode structural relations between the variables. The term ``exogenous'' indicates that any causes of such variables lie outside the model; they can be thought of as local `noise variables'.
\begin{definition}[Classical Structural Causal Model]\label{def: classical_structural_cm_def}
    A \emph{(classical) structural causal model (CSM)} $\M$ is a triple $\M = \langle \bU, \bV, \bF\rangle$, where $\bV=\lbrace V_1,\dots,V_n\rbrace$ is a set of \red endogenous \blk  variables, $\bU = \{U_1,\cdots,U_n\}$ is a set of \red exogenous \blk variables and $\bF=\lbrace f_1,\dots,f_n\rbrace$ is a set of functions \mf{such that $v_i = f_i(pa_i,u_i)$ for some $Pa_i \subseteq \mathbf{V}$}.
\end{definition}

Every structural causal model $\M$ is associated with a directed graph $G(\M)$, which represents the causal structure of the model as specified by the relations $G(M) \ni (V_j \stackrel{f_i}{\ra} V_i) \Leftrightarrow V_j \in \Pa_i$. Here, we will restrict CSMs to those defining directed acyclic graphs.
For example, the causal model of Fig.~\ref{DAG_basic} can be extended to a CSM with causal relations as depicted in Fig.~\ref{classical_structural}.
\begin{figure}
	\centering
	\begin{tikzpicture}[node distance={24mm}, thick, main/.style = {draw, circle}] 
\node[main] (1) {$U_1$};
\node[main] (2) [right=0.5cm of 1]{$V_1$}; 
\node[main] (3) [above right of=2] {$V_2$}; 
\node[main] (4) [left=0.5cm of 3] {$U_2$};
\node[main] (5) [below right of=2] {$V_3$}; 
\node[main] (6) [above right of=5] {$V_4$}; 
\node[main] (7) [above right of=6] {$V_5$}; 
\node[main] (8) [below right of=6] {$V_6$}; 
\node[main] (9) [left=0.5cm of 5] {$U_3$};
\node[main] (10) [right=0.5cm of 6] {$U_4$};
\node[main] (11) [right=0.5cm of 7] {$U_5$};
\node[main] (12) [right=0.5cm of 8] {$U_6$};
\draw[->] (1) -- (2); 
\draw[->] (2) -- (3);  
\draw[->] (2) -- (5); 
\draw[->] (4) -- (3);
\draw[->] (9) -- (5);
\draw[->] (3) -- (6); 
\draw[->] (5) -- (6); 
\draw[->] (10) -- (6);
\draw[->] (8) -- (6);
\draw[->] (6) -- (7);
\draw[->] (11) -- (7);
\draw[->] (12) -- (8);
\end{tikzpicture}  
		\caption{A classical structural causal model (CSM) with endogenous nodes $\bV = \{V_1,\cdots,V_6\}$ and exogenous nodes $\bU = \{U_1,\cdots,U_6\}$.} 
		\label{classical_structural}
\end{figure}
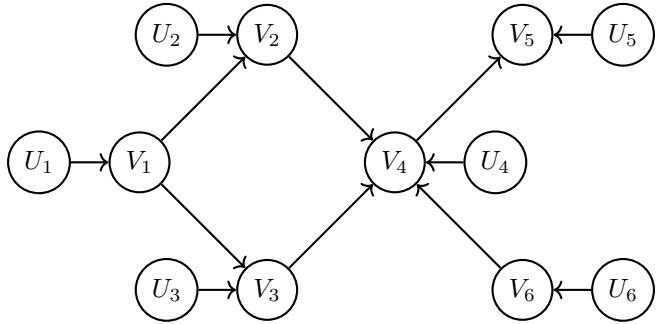

In analogy with the do-interventions for causal Bayesian networks in Sec.~\ref{sec: causal Bayesian networks}, we define do-interventions in a CSM $\M = \langle \bU,\bV,\bF\rangle$. Let $\bX\subset \bV$ with corresponding exogenous variables $\bU(\bX) \subset \bU$ and functions $\bF(\bX) \subset \bF$, and let $\bV_\bx = \bV\backslash \bX$, $\bU_\bx = \bU\backslash \bU(\bX)$ and $\bF_\bx =\bF\backslash \bF(\bX)$. Then the do-intervention $\mathrm{do}(\bX=\bx)$ defines a submodel $\M_\bx = \langle \bU_\bx,(\bV_\bx,\bX=\bx),\bF_\bx\rangle$. In terms of the causal graph $G(\M)$, the action $\mathrm{do}(\bX=\bx)$ removes all incoming arrows to the nodes $X_i$, thus generating a new graph $G(\M_\bx)$.

The submodel $\M_\bx$ represents a \emph{minimal change} to the original model $\M$ such that $\bX=\bx$ is true while keeping the values of the exogenous variables fixed -- which are thought of as ``background conditions''. In turn, we can use $\M_\bx$ to analyze counterfactual statements with antecedent $\bX=\bx$.

\begin{definition}[Counterfactual]\label{def: counterfactual}
    Let $\M = \langle \bU, \bV, \bF\rangle$ be a structural causal model, and let $\bX,\bY \subseteq \bV$. The counterfactual statement ``$\bY$ would have been $\by$, had $\bX$ been $\bx$, in a situation specified by the background variables $\bU=\bu$" is denoted by $\bY_\bx(\bu) = \by$, where $\bY_\bx(\bu)$ is the potential response of $\bY$ to the action $\mathrm{do}(\bX=\bx)$, that is, the solution for $\bY$ of the modified set of equations $\bF_\bx$ in the submodel $\M_\bx$. $\bX=\bx$ is called the \emph{antecedent} and $\bY=\by$ is the \emph{consequent} of the counterfactual.
\end{definition}

Note that given any complete specification $\bU=\bu$ of the exogenous variables, every counterfactual statement of the form above has a truth value.\footnote{\mf{Here, by ``truth values'' we mean \emph{Boolean} truth values `true' and `false'. We do not consider more general truth values such as `indefinite' which may arise e.g. in intuitionistic logic.}} Denoting a ``causal world'' by the pair $\langle \M, \bu\rangle$, we can say that a counterfactual has a truth value in every causal world where it can be defined. This is the case even when the model $\M$ with $\bU=\bu$ determines $\bX$ to have a value different from that specified in the antecedent, because the counterfactual is evaluated relative to the modified submodel $\M_\bx$.

\begin{definition}[Probabilistic structural causal model]\label{def: prob_cscm}
    A \emph{probabilistic structural causal model (PSM)} is defined by a pair $\langle \M, P(\bu)\rangle$, where $\M = \langle \bU,\bV,\bF \rangle$ is a structural causal model (see Def ~\ref{def: classical_structural_cm_def}) and $P(\bu)$ is a probability distribution defined over the exogenous variables $\bU$ of $\M$.
\end{definition}

Since every endogenous variable $V_i \in \bV$ is a function of $U_i$ and its parent nodes, $f_i: U_i \times \Pa_i \ra V_i$, the distribution $P(\bu)$ in a PSM $\langle \M,P(\bu)\rangle$ defines a probability distribution over every subset $\bY \subseteq \bV$ by
\begin{equation}
    P(\by) := P(\bY=\by) = \sum_{\bu|\bY(\bu)=\by} P(\bu) \;.
\end{equation}
In particular, the \emph{probability of the counterfactual} ``$\bY$ would have been $\by$, had $\bX$ been $\bx$'' can be computed using the submodel $\M_\bx$ as
\begin{equation}\label{eq: counterfactual - prediction}
    P(\bY_\bx=\by) = \sum_{\bu|\bY_\bx(\bu)=\by} P(\bu)\; .
\end{equation}

More generally, the probability of a counterfactual query might be conditioned on prior observations `$\be$'. In this case, we first update the probability distribution $P(\bu)$ in the PSM to obtain a modified probability distribution $P(\bu\mid \be)$ conditioned on observed data $\be$ and then use this updated probability distribution to evaluate the probability for the counterfactual as in Eq.~(\ref{eq: counterfactual - prediction}). Combining the above steps, one arrives at the following theorem, proved in Ref.~\cite{Pearl2000}:
	
\begin{theorem}[Pearl \cite{Pearl2000}]\label{thm: evaluation of classical counterfactuals}
	Given a probabilistic structural causal model (PSM) $\langle \M,P(\bu)\rangle$ (see Def ~\ref{def: prob_cscm}), and subsets $\bX,\bY,\bE \subset \bV$, the probability for the counterfactual ``$\bY$ would have been $\by$, had $\bX$ been $\bx$'', given the observation of $\bE=\be$, is denoted by $P(\bY_\bx|\be)$ and can be evaluated systematically by a three-step procedure:
		\begin{itemize}
			\item Step 1: \textbf{Abduction}: using the observed data $\bE=\be$, use Bayesian inference to update the probability distribution $P(\bu)$ in the PSM $\langle \M,P(\bu)\rangle$ to obtain $P(\bu|\be)$.
			\item Step 2: \textbf{Action}: perform a do-intervention $\mathrm{do}(\bX=\bx)$, by which the values of $\bX \subset \bV$ are specified independently of their parent nodes. The resultant model is denoted as $\M_\bx$.
			\item Step 3: \textbf{Prediction}: in the modified model $\langle \M_\bx, P(\bu|\be)\rangle$, compute the probability of $\bY$ via Eq.~(\ref{eq: counterfactual - prediction}).
	\end{itemize} 
\end{theorem}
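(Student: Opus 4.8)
The plan is to unfold the definition of the conditional counterfactual probability $P(\bY_\bx \mid \be)$ and to show that it factorises exactly into the three operations of abduction, action, and prediction. The crucial observation is that, once a specification $\bU=\bu$ of the exogenous variables is fixed, the structural causal model $\M$ is completely deterministic: since $G(\M)$ is a DAG (by the acyclicity restriction imposed just after Def.~\ref{def: classical_structural_cm_def}), the equations $v_i = f_i(pa_i,u_i)$ can be solved recursively in any topological order, so that every subset $\bW \subseteq \bV$ acquires a well-defined value $\bW(\bu)$; the same applies to the submodel $\M_\bx$, yielding $\bW_\bx(\bu)$. In particular, the observation event $\{\bE=\be\}$ and the counterfactual event $\{\bY_\bx = \by\}$ both pull back to subsets of the common sample space of $\bU$, namely $\{\bu : \bE(\bu)=\be\}$ and $\{\bu : \bY_\bx(\bu)=\by\}$. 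This is the point that needs care: the observation $\be$ is read off the \emph{original} model $\M$ (no intervention has yet taken place), whereas $\bY_\bx$ is read off $\M_\bx$; but since both are functions of $\bu$ alone, the joint event lives in the single probability space $\langle \bU, P(\bu)\rangle$, and ordinary Bayesian conditioning applies.

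With this in hand, I would compute directly from the definition of conditional probability,
\begin{equation*}
  P(\bY_\bx = \by \mid \bE = \be) \;=\; \frac{\sum_{\bu \,:\, \bY_\bx(\bu)=\by,\ \bE(\bu)=\be} P(\bu)}{\sum_{\bu \,:\, \bE(\bu)=\be} P(\bu)} \;=\; \sum_{\bu \,:\, \bY_\bx(\bu)=\by} P(\bu \mid \be),
\end{equation*}
where $P(\bu \mid \be)$ is the posterior obtained from $P(\bu)$ by Bayesian updating on $\bE=\be$. Reading this chain from right to left reproduces the theorem: passing from $P(\bu)$ to $P(\bu\mid\be)$ is Step~1 (Abduction); forming the submodel $\M_\bx$ in which the potential response $\bY_\bx(\bu)$ is defined is Step~2 (Action); and the final sum is precisely the prediction formula Eq.~(\ref{eq: counterfactual - prediction}) evaluated in the modified model $\langle \M_\bx, P(\bu\mid\be)\rangle$, i.e.\ Step~3 (Prediction). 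Since all of these ingredients are already available from the definitions recalled above, assembling them in this order completes the argument.

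I do not expect a genuine obstacle here -- once the correct sample space has been identified, the statement is essentially bookkeeping. The one place that warrants care is the well-definedness of $\bY_\bx(\bu)$ and $\bE(\bu)$, which depends essentially on the restriction to acyclic causal graphs: without it, the recursive solution of $\bF$ (respectively $\bF_\bx$) need not exist or be unique, and the pull-back events above would not be defined. A secondary point worth making explicit is that the sets $\bX,\bY,\bE$ need not be disjoint: the derivation never assumes this, because abduction conditions on $\bE=\be$ in the pre-intervention model while prediction evaluates $\bY_\bx$ in the post-intervention model. Finally, the degenerate case $P(\bE=\be)=0$, in which the conditional counterfactual is undefined, mirrors the classical situation and needs no separate treatment.
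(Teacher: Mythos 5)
Your proposal is correct and follows essentially the same route the paper takes: the paper itself defers the proof to Pearl's book, but its surrounding treatment (Eq.~(\ref{eq: counterfactual - prediction}) and the worked computation in Eq.~(\ref{eq: prediction step in CSM})) is exactly your unfolding of $P(\bY_\bx=\by\mid\be)=\sum_{\bu\,:\,\bY_\bx(\bu)=\by}P(\bu\mid\be)$ via Bayesian updating on the common exogenous sample space, with the determinism of $\M$ and $\M_\bx$ given $\bu$ doing the work. Your explicit attention to the well-definedness of $\bY_\bx(\bu)$ and $\bE(\bu)$ under acyclicity, and to the non-disjointness of $\bX,\bY,\bE$, matches the paper's implicit assumptions.
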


As an example, consider the situation where $\bX=\bx$ and $\bY=\by$ are observed, that is, $\bE=(\bX,\bY)$.\footnote{Note that $\bX,\bE$ and $\bY,\bE$ in Thm.~\ref{thm: evaluation of classical counterfactuals} are not necessarily disjoint.} We evaluate the probability of the counterfactual ``$\bY$ would have been $\by'$, had $\bX$ been $\bx'$" as:
\begin{align}\label{eq: prediction step in CSM}
    P(\bY_{\bx'}=\by'|\bX=\bx, \bY=\by)
    &= \sum_{\bu|\bY_{\bx'}(\bu)=
    \by'} P(\bu|\bX=\bx,\bY=\by) = \sum_{\bu|\bY_{\bx'}(\bu)=\by'} \frac{P(\bX=\bx,\bY=\by|\bu)P(\bu)}{P(\bX=\bx,\bY=\by)}\; ,
\end{align}
where we used Eq.~(\ref{eq: counterfactual - prediction}) in the first and Bayes' theorem in the second step.\footnote{Note that a probabilistic structural causal model implies the existence of a joint probability distribution over all variables. In this case, an alternative expression for the probability of the counterfactual in Eq.~(\ref{eq: prediction step in CSM}) reads
\begin{align}\label{eq: prediction based on joint distribution}
    P(\bY_{\bx'}=\by'|\bX=\bx, \bY=\by)
    = \dfrac{P(\bY_{\bx'}=\by', \bX=\bx, \bY=\by)}{P(\bX=\bx, \bY=
    \by)}\; .
\end{align}
In the quantum case such a distribution does not generally exist \cite{KochenSpecker1967}.} 

In temporal metaphors, step 1 explains the past (the exogenous variables $\bU$) in light of the current evidence $\be$; step 2 minimally bends the course of history to comply with the hypothetical antecedent and step 3 predicts the future based on our new understanding of the past and our newly established condition.

\section{Quantum violations of classical causality}\label{sec: quantum violations}

Classical causal models face notorious difficulties in explaining quantum correlations. Firstly, Bell's theorem \cite{Bell1964,Bell1975,WisemanCavalcanti2017} can be interpreted in terms of classical causal models, as proving that such models cannot reproduce all quantum correlations (in particular, those that violate a Bell inequality) while maintaining relativistic causal structure and the assumption of ``free choice''. The latter is the assumption that experimentally controllable parameters like measurement settings can always be chosen via ``free variables'', which can be understood as variables that have no \emph{relevant causes} in a causal model for the experiment. That is, they share no common causes with, nor are caused by, any other variables in the model. Thus, ``free variables'' can be modeled as exogenous variables.

For concreteness, consider the standard Bell scenario with a causal structure represented in the DAG in Fig.~\ref{Bell}, where variables $A$ and $B$ denote the outcomes of experiments performed by two agents, Alice and Bob. Variables $X$ and $Y$ denote their choices of experiment, which are assumed to be ``free variables'' and thus have no incoming arrows. Since Alice and Bob perform measurements in space-like separated regions, no relativistic causal connection is allowed between $X$ and $B$ nor between $Y$ and $A$. In this scenario, Reichenbach's principle of common cause \cite{Reichenbach1991,cavalcanti2014modifications} -- which is a consequence of the classical causal Markov condition -- implies the existence of common causes underlying any correlations between the two sides of the experiment. $\Lambda$ denotes a complete specification of any such common causes. As we are assuming a relativistic causal structure, those must be in the common past light cone of Alice's and Bob's experiments.
	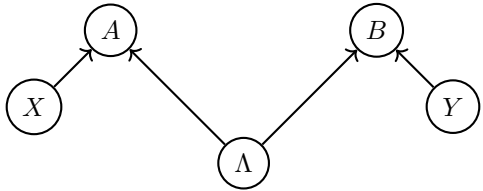
\begin{figure}
		\centering
		\begin{tikzpicture}[node distance={25mm}, thick, main/.style = {draw, circle}] 
                \node[main] (Lambda) {$\Lambda$}; 	
                \node[main] (A) [ above left of=Lambda] {$A$}; 
                \node[main] (B) [ above right of=Lambda] {$B$}; 
                \node[main] [ below left=0.5cm and 0.5cm of A] (X) {$X$};
			\node[main] (Y) [ below right=0.5cm and 0.5cm of B] {$Y$};
			\draw[->] (Lambda) -- (A); 
			\draw[->] (Lambda) -- (B);
			\draw[->] (X) -- (A); 
			\draw[->] (Y) -- (B);
		\end{tikzpicture} 
		\caption{A Directed Acyclic Graph (DAG) depicting the standard Bell scenario.} \label{Bell}
\end{figure}
	
Marginalizing over the common cause variable $\Lambda$, the classical causal Markov condition applied to the DAG in Fig.~\ref{Bell} implies the factorization:
\begin{equation}\label{eq: factorisability}
		P(AB|XY) = \sum_{\Lambda} P(\Lambda) P(A|X\Lambda)P(B|Y\Lambda)\; .
\end{equation}
A model satisfying Eq.~(\ref{eq: factorisability}) is also called a \emph{local hidden variable model}. Importantly, local hidden variable models satisfy the Bell inequalities \cite{Bell1964,Bell1975}, which have been experimentally violated by quantum correlations \cite{CHSH1969,AspectEtAl1981,ZeilingerEtAl2015,ShalmEtAl2015}.\footnote{The 2022 Nobel Prize in Physics was awarded in part for the demonstration of Bell inequality violations.} It follows that no classical causal model can explain quantum correlations under the above assumptions.

More recently, Wood and Spekkens \cite{wood2015lesson} showed that certain Bell inequality violations cannot be reproduced by any classical causal model that satisfies the assumption of ``no fine-tuning''. This is the requirement that any conditional independence between variables in the model be explained as arising from the structure of the causal graph, rather than from fine-tuned model parameters. This assumption is essential for causal discovery -- without it, it is generally not possible to experimentally determine which of a number of candidate graphs faithfully represents a given situation. This result was later generalized to arbitrary Bell and Kochen-Specker inequality violations in Refs.~\cite{cavalcanti2018classical, pearl2019classical}.

These results motivate the search for a generalization of classical causal models that accommodates quantum correlations and allows for causal discovery, while maintaining faithfulness to relativistic causal structure. Ref.~\cite{cavalcanti2014modifications} considers modifications of Reichenbach's principle of common cause \cite{Reichenbach1991}---which is implied by the causal Markov condition in the special case of the common cause scenario in Fig.~\ref{Bell}, as assumed in Bell's theorem \cite{Bell1964,Bell1975}. The authors of Ref.~\cite{cavalcanti2014modifications} argue that one could maintain the \emph{principle of common cause}---the requirement that correlations between two causally disconnected events should be explained via common causes---by relaxing the condition that a full specification of those common causes factorizes the probabilities for the events in question, as by Eq.~(\ref{eq: factorisability}). Using the Leifer-Spekkens formalism for \emph{quantum conditional states}, they instead propose that Eq.~(\ref{eq: factorisability}) should be replaced by the requirement that the \emph{channels} between the common cause and Alice and Bob's labs factorize---or more precisely, the Choi-Jamio{\l}kowski operators corresponding to those channels. This is essentially the type of resolution of Bell's theorem that is provided by \emph{quantum causal models}, to which we now turn. After introducing structural quantum causal models in Sec.~\ref{sec: QSM} and quantum counterfactuals queries in Sec.~\ref{sec: counterfactuals in QSM}, in Sec.~\ref{sec: Bell scenario} we will revisit the Bell scenario from the perspective of counterfactuals in quantum causal models.

\section{Quantum causal models}\label{sec: quantum causal models}

In recent years a growing number of papers have addressed the problem of generalizing the classical causal models formalism to accommodate quantum correlations, in a way that is compatible with relativistic causality and faithfulness. This has led to the development of various frameworks for quantum causal models. The more developed of those are the frameworks by Costa and Shrapnel \cite{costa2016quantum} and Barrett, Lorenz and Oreshkov \cite{LorenzOreshkovBarrett2019}. In this work, we use a combination of the notation and features of both of these formalisms.

\textbf{Quantum nodes and quantum interventions.} Recall that in a classical causal model, a node 
represents a locus for potential interventions. In order to generalize this to the quantum case, 
we start by introducing a \emph{quantum node} $A$, which is associated with two Hilbert spaces $\cH_{A^{\mathrm{in}}}$ and $\cH_{A^{\mathrm{out}}}$, corresponding to the incoming system and the outgoing system, respectively. An intervention at a quantum node $A$ is represented by a \emph{quantum instrument} $\mathcal{I}^z_A$ (see Fig.~\ref{Quantum node}).  This is a set of trace-non-increasing completely positive (CP) maps from the space of linear operators on $\cH_{A^{\mathrm{in}}}$ to the space of linear operators on $\cH_{A^{\mathrm{out}}}$,
\begin{equation}
    \mathcal{I}_A^z = \lbrace\mathcal{M}_A^{a|z}:\mathcal{L}(\Ain)\rightarrow \mathcal{L}(\Aout)\rbrace_a\; ,
\end{equation}
such that $\mathcal{M}_A = \sum_a\mathcal{M}_A^{a|z}$ is a completely positive, trace-preserving (CPTP) map---i.e. a \emph{quantum channel}.\footnote{We sometimes write $\mathcal{M}_A^{|z}$ for this CPTP map to indicate that it is associated with the instrument $\mathcal{I}_A^z$. Note however that a given CPTP map will in general be associated with many different instruments.} Here, $z$ is a label for the (choice of) instrument, and
$a$ labels the classical outcome of the instrument, which occurs with probability $P_z(a) = \mathrm{Tr}[\mathcal{M}_A^{a|z}(\rho_{\Ai})]$ for an input state $\rho_{\Ai} \in \mc{L}(\Ain)$; consequently, the state on the output system conditioned on the outcome of the intervention is given by $\mathcal{M}_A^{a|z}(\rho_{\Ai})/P_z(a)$. For simplicity, we consider finite-dimensional systems only.

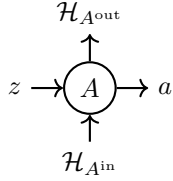
\begin{figure}[h]
		\centering
		\begin{tikzpicture}[node distance={10mm}, thick, main/.style = {draw, circle}] 
			\node[main] (1) {$A$}; 
			\node[] (2) [ left of=1] {$z$}; 
			\node[] (3) [ right of=1] {$a$}; 
			\node[] (4) [ above of=1] {$\mathcal{H}_{\Ao}$}; 
			\node[] (5) [ below of=1] {$\mathcal{H}_{\Ai}$}; 
 			\draw[->] (2) -- (1); 
 			\draw[->] (1) -- (3);
 			\draw[->] (1) -- (4); 
			\draw[->] (5) -- (1);
		\end{tikzpicture} 
		\caption{A quantum node $A$ is associated with an incoming ($\mc{H}_{\Ai}$) and outgoing Hilbert space ($\mc{H}_{\Ao})$. It can be intervened on via a quantum instrument $\mathcal{I}_A^z=\{\cM_A^{a|z}\}_a$, resulting in an outcome `$a$' corresponding to a completely-positive (CP) map $\cM_A^{a|z}$.} \label{Quantum node}
\end{figure}

Using the Choi-Jamio\l kowski (CJ) isomorphism,\footnote{Here, we follow the notation in Ref.~\cite{costa2016quantum}. This differs from the one used in Refs.~\cite{AllenEtAl2017,HoffreumonOreshkov2021,LorenzOreshkovBarrett2019}, which applies a basis-independent version of the Choi-Jamio\l kowski isomorphism, by identifying the Hilbert space associated with outgoing systems with its dual (see also Ref.~\cite{FrembsCavalcanti2022}).} 
we represent a quantum instrument $\mathcal{I}_A^z=\{\cM_A^{a|z}\}_a$ in terms of a positive operator-valued measure $a \mapsto \tau_A^{a|z}$. More precisely, every completely positive map $\cM_A^{a|z}$ is represented by a positive semi-definite operator $\tau_A^{a|z} \in \mc{L}(\cH_{\Ao} \otimes \cH_{\Ai})$ given by
\begin{equation}\label{eq: CIJ for interventions}
	\tau^{a|z}_{A} = \sum_{i,j} \cM_A^{a|z}(|i\rangle \langle j|)^T_{\Ao}\otimes |j\rangle \langle i|_{\Ai}\; .
\end{equation}
In a slight abuse of notation, we will write $\mc{I}^z_A = \{\mc{M}^{a|z}_A\}_a \stackrel{CJ}{\leftrightarrow} \{\tau^{a|z}_A\}_a$ also for the representation of an instrument in terms of positive operators under the Choi-Jamio\l kowski isomorphism. Note that the fact that $\cM_A = \sum_a \cM_A^{a|z}$ is trace-preserving imposes the following trace condition on $\tau^{|z}_A=\sum_a \tau^{a|z}_A$ (cf. Ref.~\cite{Jamiolkowski1972}),
\begin{equation}\label{eq: trace-preserving Choi matrices}
    \tr_{\Ao}[\tau^{|z}_{A}] = \mathbb{I}_{\Ai}\; .
\end{equation}

\textbf{Quantum process operators.} In a quantum causal model we will distinguish between two types of quantum operations: quantum interventions, which are local to a quantum node, and a quantum process operator, which acts between quantum nodes and contains information about the causal (influence) relations between the nodes in the model.

To motivate the general definition (Def.~\ref{def: process operator} below), we first consider the simplest case: for a single quantum node $A$, a quantum process operator is any operator $\sigma_A \in \mc{L}(\cH_{\mathrm{A}^{\mathrm{in}}} \otimes \cH_{\mathrm{A}^{\mathrm{out}}})$ such that the pairing\footnote{With Ref.~\cite{LorenzOreshkovBarrett2019}, we will adopt the shorthand $\tr_A[\cdots] := \tr_{\Ai\Ao}[\cdots]$.}
\begin{equation}\label{eq: gen Born rule bipartite process}
    \tr_A[\sigma_A \tau^{a|z}_A] = \tr_{\Ai\Ao}[\sigma_A \tau^{a|z}_A] =: P_z(a) \in [0,1]\; , 
\end{equation}
defines a probability for every positive semi-definite operator $\tau^{a|z}_A$, and satisfies the normalisation condition
\begin{equation}\label{eq: normalisation condition bipartite process operator}
    \sum_a P_z(a) = \tr_A[\sigma_A \tau^{|z}_A] = 1\; ,
\end{equation}
for every quantum channel (CPTP map) $\tau^{|z}_A$. Consequently, given a process operator $\sigma_A$, we may interpret $P_z(a)$ as the probability to obtain outcome $a$ when performing an instrument $z$.

As a generalisation of the Born rule (on the composite system $\cH_\Ai \otimes \cH_\Ao$), Eq.~(\ref{eq: gen Born rule bipartite process}) in particular implies that $\sigma_A$ is positive, hence, corresponds to a completely positive map $\mathcal{E}: \mathcal{L}(\Aout) \ra \mathcal{L}(\Ain)$. 

More generally, it will be useful to introduce a notation for the positive semi-definite operator $\rho^\mathcal{E}_{\mf{B|A}}$ corresponding to a bipartite channel of the form $\mathcal{E}: \mathcal{L}(\Aout) \ra \mathcal{L}(\Bin)$:
\begin{equation}\label{eq: CIJ for processes}
    \rho^\mathcal{E}_{B|A} = \rho^\mathcal{E}_{\Bi|\Ao} := \sum_{i,j} \mathcal{E}(|i\rangle \langle j|)_{\Bi}\otimes |i\rangle_{\Ao} \langle j|\; .
\end{equation}
Note that $\rho^\mathcal{E}_{B|A}$ is distinguished from the representation of the Choi matrices corresponding to quantum instruments in Eq.~\eqref{eq: CIJ for interventions} by an overall transposition, indicating the different roles played by instruments and processes in the inner product of Eq.~(\ref{eq: gen Born rule bipartite process}). In particular, we have $\sigma_A = \rho^\mathcal{E}_{A|A} = \rho^\mathcal{E}_{\Ai|\Ao}$ for some channel satisfying the normalisation condition in Eq.~(\ref{eq: normalisation condition bipartite process operator}).

Generalizing this idea to finitely many quantum nodes, a quantum process operator is defined as follows.

\begin{definition}[Process operator]\label{def: process operator} A \emph{(quantum) process operator} over quantum nodes $\bA = \{A_1, \cdots,A_n\}$ is a positive semi-definite operator $\sigma_\bA = \sigma_{A_1, \cdots, A_n} \in \mathcal{L}(\bigotimes_{i=1}^n \mathcal{H}_{A_i^{\ins}} \otimes  \mathcal{H}_{A_i^{\out}})_+$, which satisfies the normalisation condition,
\begin{equation}\label{eq: normalisation quantum process operator}
    \mathrm{Tr}_{A_1\cdots A_n}[\sigma_{A_1, \cdots, A_n} (\tau^{|z_1}_{A_1}\otimes\cdots\otimes\tau^{|z_n}_{A_n})]=1\; ,
\end{equation}
for any choice of quantum channels $\tau^{|z_1}_{A_1},\cdots,\tau^{|z_n}_{A_n}$ at nodes $A_1,\cdots,A_n$.\footnote{Every process operator satisfies a trace condition analogous to Eq.~(\ref{eq: trace-preserving Choi matrices}): $\mathrm{Tr}_{A_1^{\ins}\cdots A_n^{\ins}}[\sigma_{A_1, \cdots, A_n}] = \mathbb{I}_{A_1^{\out}}\otimes\cdots\otimes\mathbb{I}_{A_n^{\out}}$, hence, $\sigma_{A_1\cdots A_n}$ defines a CPTP map $\mc{L}(\cH_{A^{\out}_1}\otimes \cdots \otimes \cH_{A^{\out}_n}) \ra \mc{L}(\cH_{A^{\ins}_1}\otimes \cdots \otimes \cH_{A^{\ins}_n})$. Yet, the converse is generally not true.}
\end{definition}

Comparing with Eq.~(\ref{eq: gen Born rule bipartite process}), we define the probability of obtaining outcomes $\ba = \{a_1,\cdots,a_n\}$ when performing interventions $\{\{\tau_{A_1}^{a_{1}|z_1}\}_{a_{1}},\cdots,\{\tau_{A_n}^{a_{n}|z_n}\}_{a_{n}}\}$ at quantum nodes $\bA = \{A_1,\cdots,A_n\}$ by
\begin{equation}\label{eq: generalised Born rule}
    P_\bz(\ba) = \mathrm{Tr}_{A_1\cdots A_n}[\sigma_{A_1, \cdots, A_n}(\tau_{A_1}^{a_{1}|z_1}\otimes\cdots\otimes\tau_{A_n}^{a_{n}|z_n})]\; .
\end{equation}
Eq.~(\ref{eq: generalised Born rule}) defines a generalization of the Born rule (on the composite system $\bigotimes_{i=1}^n \mathcal{H}_{A_i^{\ins}} \otimes  \mathcal{H}_{A_i^{\out}}$) \cite{AraujoEtAl2015,ShrapnelCostaMilburn2018}.

\vspace{0.2cm}

\textbf{Quantum causal models.} With the above ingredients, we obtain quantum generalizations of the causal Markov condition in Def.~\ref{def: cmc} and thereby of classical causal models (causal networks) in Def.~\ref{def: CCM}.

\begin{definition}[Quantum causal Markov condition]\label{def: qmc}
    A quantum process operator $\sigma_\bA = \sigma_{A_1, \cdots, A_n}$ is \emph{Markov} for a given DAG $G$ if and only if there exist \mf{positive} operators $\rho_{A_i|\Pa(A_i)}$ \mf{such that $\tr_{\Pa_i^\out}[\rho_{A_i|\Pa(A_i)}] = \mathbb{I}_{A_i^\ins}$ (corresponding to quantum channels $\mathcal{E}_i: \mathcal{L}(\cH_{\Pa_i^\mathrm{out}}) \ra \mathcal{L}(\cH_{A_i^\mathrm{in}})$)} for each quantum node $A_i$ of $G$ such that\footnote{Here and below, we implicitly assume the individual operators $\rho_{A_i|\Pa(A_i)}$ to be `padded' with identities on all nodes not explicitly involved in $\rho_{A_i|\Pa(A_i)}$ such that the multiplication of operators is well-defined.}
    \begin{equation}
	\sigma_\bA = \prod_{i=1}^{n} \rho_{A_i|\Pa(A_i)}\; ,
    \end{equation}
    and $[\rho_{A_i|\Pa(A_i)},\rho_{A_j|\Pa(A_j)}] = 0$ for all $i,j \in \{1,\cdots,n\}$.
\end{definition} 

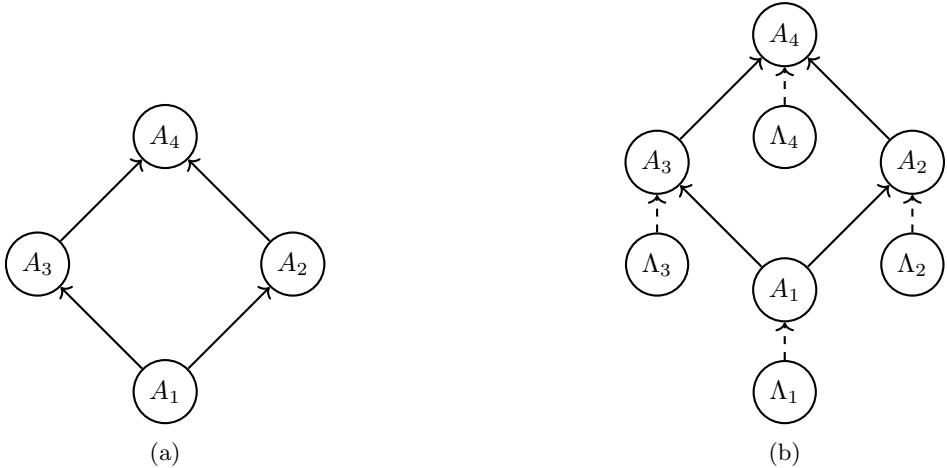
\begin{figure}[h]
    \begin{minipage}[b]{0.49\textwidth}
        \centering
        \begin{tikzpicture}[node distance={24mm}, thick, main/.style = {draw, circle}] 
            \node[main] (1) {$A_1$}; 
            \node[main] (2) [ above left of=1] {$A_3$}; 
            \node[main] (3) [ above right of=1] {$A_2$}; 
            \node[main] (4) [ above right of=2] {$A_4$};
            \draw[->] (1) -- (2); 
            \draw[->] (1) -- (3);
            \draw[->] (2) -- (4); 
            \draw[->] (3) -- (4);
        \end{tikzpicture}
        \subcaption{}
    \end{minipage}
    \begin{minipage}[b]{0.49\textwidth}
	\centering
	\begin{tikzpicture}[node distance={24mm}, thick, main/.style = {draw, circle}] 
		\node[main] (A1) {$A_1$}; 
		\node[main] (A2) [ above right of=1] {$A_2$};
            \node[main] (A3) [ above left of=A1] {$A_3$}; 
            \node[main] (A4) [ above right of=A3] {$A_4$};
			\node[main] (Lambda1) [below=0.5cm of A1] {$\Lambda_1$}; 
			\node[main] (Lambda2) [below=0.5cm of A2] {$\Lambda_2$}; 
			\node[main] (Lambda3) [below=0.5cm of A3] {$\Lambda_3$};
            \node[main] (Lambda4) [below=0.5cm of A4] {$\Lambda_4$};
			
 			\draw[->] (A1) -- (A2); 
 			\draw[->] (A1) -- (A3);
 			\draw[->] (A2) -- (A4); 
 			\draw[->] (A3) -- (A4);
 			\draw[dashed,->] (Lambda1) -- (A1); 
 			\draw[dashed,->] (Lambda2) -- (A2);
 			\draw[dashed,->] (Lambda3) -- (A3); 
 			\draw[dashed,->] (Lambda4) -- (A4);
 	\end{tikzpicture}
	\subcaption{}
    \end{minipage}
    \caption{Example of a quantum causal model, (a) with four endogenous nodes $\bA = \{A_1, A_2, A_3, A_4\}$, (b) including exogenous nodes $\bm{\Lambda} = \{\Lambda_1,\Lambda_2,\Lambda_3,\Lambda_4\}$ in a quantum structural causal model.}
    \label{fig: example quantum causal model w/o endogenous variables}
\end{figure}

\begin{definition}[Quantum causal model] A \emph{quantum causal model} is a pair $\langle G, \sigma_\bA \rangle$, consisting of a DAG $G$, whose vertices represent quantum nodes $\bA = \{A_1,\cdots,A_n\}$, and a quantum process operator that is Markov with respect to $G$, according to Def.~\ref{def: qmc}.
\end{definition}

\subsection{Quantum structural causal models}\label{sec: QSM}
Recall that in the classical case, counterfactuals are evaluated relative to a classical structural causal model (CSM) $\M = \langle \bU,\bV,\bF\rangle$ (see Def.~\ref{def: classical_structural_cm_def}), which associates an exogenous variable $U_i \in \bU$ and a function $\bF \ni f_i: \Pa(A_i) \times U_i \rightarrow A_i$, to every node $V_i \in \bV$. Given a CSM, we thus have full information about the underlying process and any uncertainty arises solely from our lack of knowledge about the values of the variables at exogenous nodes, which is encoded in the probability distribution $P(\bu)$ of the probabilistic structural causal model (PSM) $\langle \M,P(\bu)\rangle$.
 
In order to define a notion of quantum structural causal models, we find it useful to introduce the lack of knowledge on exogenous nodes directly in terms of a special type of quantum instruments,\footnote{Here, our formalism diverges from the one in Ref.~\cite{LorenzOreshkovBarrett2019}, which assigns the lack of knowledge about exogenous degrees of freedom as part of the process operator $\sigma$, and which does not distinguish between different state preparations. This is a change in perspective in so far as we will place our lack of knowledge as a lack of knowledge about \emph{events} at the exogenous nodes, rather than a lack of knowledge about the \emph{process}.}
\begin{equation}\label{eq: exogenous preparation instruments}
	\{\tau^{\lambda}_\Lambda\}_\lambda := \{P(\lambda) (\rho^{\lambda}_{\Lambda^{\out}})^T \otimes \mathbb{I}_{\Lambda^{\ins}}\}_\lambda \; .
\end{equation}
Quantum instruments of this form discard the input to the node $\Lambda$ and with probability $P(\lambda)$ prepare the state $\rho^\lambda$ in the output. In other words, $\{\tau^{\lambda}_\Lambda\}_\lambda$ is a discard-and-prepare instrument. Ignoring the outcome of this instrument, one obtains the channel $\tau^{\rho}_\Lambda = \sum_\lambda \tau^{\lambda}_\Lambda$, corresponding to the preparation of state $\rho = \sum_\lambda P(\lambda) \rho^{\lambda}$ in the output of node $\Lambda$.

Note that the outcome and output of a discard-and-prepare instrument are independent of the input state $\rho_{\Lambda^\ins}$. In order to avoid carrying around arbitrary input states in formulas below (as required for normalization), we will therefore adopt the convention,
\begin{equation}\label{eq: exogenous preparation instruments}
	\{\widetilde{\tau}^{\lambda}_\Lambda\}_\lambda := \left\{P(\lambda) (\rho^{\lambda}_{\Lambda^{\out}})^T \otimes \frac{1}{\mathrm{dim}(\cH_{\Lambda^\ins})}\mathbb{I}_{\Lambda^{\ins}}\right\}_\lambda \; ,
\end{equation}
such that $\tr_{\Lambda^\ins}[\widetilde{\tau}^\lambda_\Lambda] = \tr_{\Lambda^\ins}[\tau^\lambda_\Lambda \rho_{\Lambda^\ins}]$ for any state $\rho_{\Lambda^\ins}$.

\begin{definition}(no-influence condition).\label{def: no-influence conditions}
    Let $\rho_{CD|AB}^{U}$ be the Choi-Jamiolkowski (CJ) representation of the channel corresponding to the unitary transformation $U: \mathcal{H}_A \otimes \mathcal{H}_B \rightarrow \mathcal{H}_C \otimes \mathcal{H}_D$. We say that system $A$ does not influence system $D$ (denoted as $A \nrightarrow D$) if and only if there exists a quantum channel $\cM:\mathcal{L}(\mathcal{H}_B) \rightarrow \mathcal{L}(\mathcal{H}_D)$ with corresponding CJ representation $\rho_{D|B}^{\cM}$ such that $\tr_C[\rho_{CD|AB}^{U}] = \rho_{D|B}^\cM\otimes \mathbb{I}_{A}$.\footnote{We remark that the labels $A,B,C,D$ refer to arbitrary systems, not necessarily nodes in a quantum causal model. Within a quantum causal model, two of those labels, say $A$ and $C$, may refer to output and input Hilbert spaces of the same node.}
\end{definition}

Given these preliminaries, we define a quantum version of the structural causal models in Def.~\ref{def: classical_structural_cm_def}.

\begin{definition}[Quantum structural causal model]\label{def: quantum structural causal model}
    A \emph{quantum structural causal model (QSM)} is a triple\\ $\Q=\langle (\bA, \bL, S), \rho_{\bA S|\bA \bL}^U,\{\tau^{\lambda_i}_{\Lambda_i}\}_{\lambda_i} \rangle$, specified by:
    \begin{itemize}
        \item[(i)] a set of quantum nodes, which are split into
        \begin{itemize}
            \item a set of endogenous nodes $\bA=\{A_1,\cdots,A_n\}$, 
            \item a set of exogenous nodes $\bL = \{\Lambda_1,\cdots,\Lambda_n\}$,
            \item and a sink node $S$;
        \end{itemize}
        \item[(ii)] a unitary $\rho_{\bA S|\bA \bL}^U$ that satisfies the no-influence conditions
        \begin{equation}\label{eq: no-influence conditions I}
            \lbrace \Lambda_j \nrightarrow A_i\rbrace_{j\neq i} 
        \end{equation}
        according to Def.~\ref{def: no-influence conditions}; and 
        \item[(iii)] a set of discard-and-prepare instruments
        $\{\tau^{\lambda_i}_{\Lambda_i}\}_{\lambda_i}$ for every exogenous node $\Lambda_i \in \bL$.
    \end{itemize}
\end{definition}

Note that in general we need to include an additional \emph{sink node} $S$, in order for the process operator $\rho_{\bA S|\bA \bL}^U$ to be unitary. $S$ contains any excess information that is discarded in the process (cf. Ref.~\cite{LorenzOreshkovBarrett2019}).

We emphasize the subtle, but conceptually crucial difference between Def.~4.5 in Ref.~\cite{LorenzOreshkovBarrett2019} and our Def.~\ref{def: quantum structural causal model}. The former specifies the input states on ancillary nodes directly, as part of a `unitary process with inputs', while the latter encodes input states in terms of discard-and-prepare instruments, acting on an arbitrary input state. This will enable us to use classical Bayesian inference on the outcomes of instruments at exogenous nodes in the abduction step of the evaluation of quantum counterfactuals, as we'll see in Sec.~\ref{sec: counterfactuals in QSM} below. In contrast, this is not possible using Def.~4.5 in Ref.~\cite{LorenzOreshkovBarrett2019}, but may require a generalisation of Bayesian inference to the quantum case (see Sec.~\ref{sec: quantum Bayes}).

Following Ref.~\cite{LorenzOreshkovBarrett2019}, we define a notion of \emph{structural compatibility} of a process operator $\sigma_{\bA}$ with a graph $G$.

\begin{definition}\label{def: compatibility}[Compatibility of a quantum process operator with a DAG]
	A quantum process operator $\sigma_{\bA}=\sigma_{A_1\cdots A_n}$ over nodes $\bA=\{A_1,\cdots,A_n\}$ is said to be \emph{structurally compatible with a DAG $G$} if and only if there exists a quantum structural causal model (QSM) $\Q=\langle (\bA, \bL, S), \rho_{\bA S|\bA \bL}^U,\{\tau^{\lambda_i}_{\Lambda_i}\}_{\lambda_i} \rangle$ that recovers $\sigma_{\bA}$ as a marginal,

    \begin{equation}\label{eq: compatibility with process operator}
        \sigma_{\bA} = \tr_{S^{\mathrm{in}}\bL}[\rho_{\bA S|\bA \bL}^U (\widetilde{\tau}^{\rho_1}_{\Lambda_1}\otimes \cdots \otimes \widetilde{\tau}^{\rho_n}_{\Lambda_n})]\; ,
    \end{equation}
    where $\rho_{\bA S|\bA \bL}^U$ satisfies the no-influence relations
    \begin{equation}\label{eq: no-influence conditions II}
        \lbrace A_j \nrightarrow A_i\rbrace_{A_j\notin \Pa(A_i)}\; ,
    \end{equation}
    with $\Pa(A_i)$ defined by $G$.
\end{definition}	

Similar to Thm.~4.10 in Ref.~\cite{LorenzOreshkovBarrett2019}, one shows that a process operator $\sigma_{\bA}$ is structurally compatible with $G$ if and only if it is Markov for $G$.

\begin{theorem}[Equivalence of quantum compatibility and Markovianity]\label{thm: compatibility and Markovianity}
    For a DAG $G$ with nodes $\bA = \{A_1,\cdots,A_n\}$ and a quantum process operator $\sigma_\bA$, the following are equivalent:
    \begin{enumerate}
        \item $\sigma_\bA$ is structurally compatible with $G$.
        \item $\sigma_\bA$ is Markov for $G$.
    \end{enumerate}
\end{theorem}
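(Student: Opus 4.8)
The plan is to follow the strategy of Thm.~4.10 in Ref.~\cite{LorenzOreshkovBarrett2019}, the main adaptation being that our notion of a QSM (Def.~\ref{def: quantum structural causal model}) encodes exogenous inputs via discard-and-prepare instruments rather than fixed ancillary states. So I would first set up a dictionary: given a QSM $\Q = \langle(\bA,\bL,S),\rho^U_{\bA S|\bA\bL},\{\tau^{\lambda_i}_{\Lambda_i}\}_{\lambda_i}\rangle$, contracting the ancillas against the \emph{channels} $\widetilde{\tau}^{\rho_i}_{\Lambda_i} = \sum_{\lambda_i}\widetilde{\tau}^{\lambda_i}_{\Lambda_i}$ (associated to the preparations $\rho_i = \sum_{\lambda_i}P(\lambda_i)\rho^{\lambda_i}$) produces a unitary process together with fixed input states on $\bL$ --- the data of a ``unitary process with inputs'' as in Ref.~\cite{LorenzOreshkovBarrett2019}; and conversely every such object arises this way (a mixed $\rho_i$ is a marginal of a purification whose purifying factor can be routed into the sink $S$, so no generality is lost). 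Under this dictionary ``Markov for $G$'' (Def.~\ref{def: qmc}) is literally the same condition on $\sigma_\bA$, and ``structural compatibility with $G$'' (Def.~\ref{def: compatibility}) coincides with the compatibility notion of Ref.~\cite{LorenzOreshkovBarrett2019}. It is nonetheless worth unpacking the two implications.

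\textbf{Markov $\Rightarrow$ compatible.} Assume $\sigma_\bA = \prod_{i=1}^n \rho_{A_i|\Pa(A_i)}$ with the factors pairwise commuting, each being the Choi--Jamio{\l}kowski operator of a channel $\mathcal{E}_i: \mathcal{L}(\cH_{\Pa(A_i)^\out}) \to \mathcal{L}(\cH_{A_i^\ins})$ (Def.~\ref{def: qmc}). I would Stinespring-dilate each $\mathcal{E}_i$ to a unitary $U_i: \cH_{\Pa(A_i)^\out}\otimes\cH_{\Lambda_i^\out} \to \cH_{A_i^\ins}\otimes\cH_{S_i}$, with $\Lambda_i$ prepared in a pure state $\ket{\psi_i}$ and $\cH_{S_i}$ absorbing the discarded ``junk''. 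The commutation of the $\rho_{A_i|\Pa(A_i)}$ (which is what makes their product a well-defined process operator in the first place) is precisely what permits these dilations to be composed --- along a topological order of $G$, with the output space of each node split among its children and any residue sent to the sink $S = \bigotimes_i S_i$ --- into a single unitary $U$, $\bigotimes_i\cH_{A_i^\out}\otimes\bigotimes_i\cH_{\Lambda_i^\out} \to \bigotimes_i\cH_{A_i^\ins}\otimes\cH_{S^\ins}$. Equipping each $\Lambda_i$ with the discard-and-prepare instrument preparing $\ket{\psi_i}$ (or, if $\mathcal{E}_i$ carries a natural convex decomposition, the $\lambda_i$-labelled one preparing the $\rho^{\lambda_i}$), the no-influence relations \eqref{eq: no-influence conditions I} and \eqref{eq: no-influence conditions II} hold by construction: $A_i^\ins$ is an output of $U_i$, which reads only $\Pa(A_i)^\out$ and $\Lambda_i^\out$. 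Finally Eq.~\eqref{eq: compatibility with process operator} reduces to $\tr_{S^\ins\bL}[\rho^U_{\bA S|\bA\bL}(\widetilde{\tau}^{\rho_1}_{\Lambda_1}\otimes\cdots\otimes\widetilde{\tau}^{\rho_n}_{\Lambda_n})] = \prod_i\rho_{A_i|\Pa(A_i)} = \sigma_\bA$, so $\Q$ witnesses structural compatibility.

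\textbf{Compatible $\Rightarrow$ Markov.} Conversely, given a QSM realizing $\sigma_\bA$ via Eq.~\eqref{eq: compatibility with process operator} and satisfying the no-influence relations \eqref{eq: no-influence conditions I}--\eqref{eq: no-influence conditions II}, I would invoke the structure theorem for unitary channels from Ref.~\cite{LorenzOreshkovBarrett2019}: a unitary $\rho^U_{\bA S|\bA\bL}$ whose no-influence relations are compatible with the extended DAG factorizes into a product of pairwise commuting ``direct-cause'' Choi operators, one channel $\rho^{(i)}_{A_i^\ins|\Pa(A_i)^\out\Lambda_i^\out}$ into each $A_i^\ins$ (times a sink factor), with each node's output space suitably split among its children. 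Contracting away the exogenous discard-and-prepare channels, set $\rho_{A_i|\Pa(A_i)} := \tr_{\Lambda_i}[\rho^{(i)}_{A_i^\ins|\Pa(A_i)^\out\Lambda_i^\out}\,\widetilde{\tau}^{\rho_i}_{\Lambda_i}]$; these are Choi operators of channels $\mathcal{E}_i: \mathcal{L}(\cH_{\Pa(A_i)^\out})\to\mathcal{L}(\cH_{A_i^\ins})$ (both $\rho^{(i)}$ and the discard-and-prepare channel being trace-preserving), they still commute --- contracting commuting operators against operators supported on the mutually disjoint exogenous subsystems $\Lambda_i$ preserves commutation --- and their product equals $\tr_{S^\ins\bL}[\rho^U_{\bA S|\bA\bL}(\bigotimes_i\widetilde{\tau}^{\rho_i}_{\Lambda_i})] = \sigma_\bA$. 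Hence $\sigma_\bA$ is Markov for $G$.

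\textbf{Main obstacle.} The technical heart, common to both directions, is the structure/realization theorem for unitary transformations with prescribed no-influence relations --- that such a unitary can be decomposed into, equivalently built from, a family of commuting direct-cause channels respecting $G$, with the fan-out of a node's output to several children realised via a tensor factorisation of its output space and the sink absorbing the residue. This is exactly the input we import from Ref.~\cite{LorenzOreshkovBarrett2019} and the unitary-decomposition results it rests on. The genuinely new bookkeeping --- translating between discard-and-prepare instruments and fixed ancillary inputs, purifying mixed preparations into the sink, and verifying that this respects both the Markov factorisation and the no-influence relations --- is routine by comparison.
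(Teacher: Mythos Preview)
Your proposal is correct and follows essentially the same strategy as the paper: establish a dictionary between QSMs (with discard-and-prepare instruments) and the ``unitary process with inputs'' of Ref.~\cite{LorenzOreshkovBarrett2019}, then invoke Thm.~4.10 there. The paper's own proof is in fact more minimal --- it only spells out the dictionary (your first paragraph) and cites Thm.~4.10 as a black box, whereas you additionally sketch the content of that theorem in both directions; this extra detail is accurate but not required.
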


\begin{proof}
    The difference between our definition of `structural compatibility' in Def.~\ref{def: compatibility} and that of 'compatibility' in Def.~4.8 in Ref.~\cite{LorenzOreshkovBarrett2019} is that the latter applies to a ``unitary process with inputs'' (see Def.~4.5 in Ref.~\cite{LorenzOreshkovBarrett2019}), while Def.~\ref{def: compatibility} applies to a QSM as defined in Def.~\ref{def: quantum structural causal model}. Yet, we show that $\sigma_\bA$ is compatible with $G$ if and only if it is structurally compatible with $G$. The result then follows from the proof of Thm.~4.10 in Ref.~\cite{LorenzOreshkovBarrett2019}.

    First, let $\sigma_\bA$ be compatible with $G$, then by Def.~4.8 in Ref.~\cite{LorenzOreshkovBarrett2019} there exists a unitary process $\rho^U_{\bA S|\bA\bm{\Lambda}}$ that satisfies the no-influence conditions $\lbrace A_j \nrightarrow A_i\rbrace_{A_j\notin \Pa(A_i)}$ and $\lbrace \Lambda_j \nrightarrow A_i\rbrace_{j\neq i}$, and states $\rho_{\Lambda_1} \otimes \dots \otimes \rho_{\Lambda_n}$ such that $\sigma_\bA$ is recovered as a marginal,\begin{equation}\label{marginalization_Barrett}
        \sigma_{\bA} = \mathrm{Tr}_{S^\mathrm{in}\bL^{\out}}\left [\rho^U_{\bA S|\bA \bL}(\rho^T_{\Lambda_1}\otimes\cdots\otimes \rho^T_{\Lambda_n})\right]\;,
    \end{equation}
    where we traced over the inputs of exogenous nodes $\Lambda_i$. Choosing discard-and-prepare measurements $\{\tau^{\lambda_i}_{\Lambda_i}\}_{\lambda_i}$ such that $\tau^{\rho_{\Lambda_i}}_{\Lambda_i} := \sum_{\lambda_i} \tau^{\lambda_i}_{\Lambda_i}$ (cf. Eq.~(\ref{eq: exogenous preparation instruments})), $\langle (\bA,\bm{\Lambda},S),\rho^U_{\bA S|\bA\bm{\Lambda}}, \{\tau^{\lambda_i}_{\Lambda_i}\}_{\lambda_i}\rangle$ defines a QSM (cf. Def.~\ref{def: quantum structural causal model}): in particular, $\rho^U_{\bA S|\bA\bm{\Lambda}}$ satisfies Eq.~(\ref{eq: no-influence conditions I}). Moreover, $\rho^U_{\bA S|\bA\bm{\Lambda}}$ also satisfies Eq.~(\ref{eq: no-influence conditions II}), and Eq.~(\ref{marginalization_Barrett}) implies Eq.~(\ref{eq: compatibility with process operator}. From this it follows that $\sigma_\bA$ is structurally compatible with $G$.
    
    Conversely, if $\sigma_\bA$ is structurally compatible with $G$ it admits a QSM $\langle (\bA,\bm{\Lambda},S),\rho^U_{\bA S|\bA\bm{\Lambda}}, \{\tau^{\lambda_i}_{\Lambda_i}\}_{\lambda_i}\rangle$, from which we extract the unitary process operator $\rho^U_{\bA S|\bA\bm{\Lambda}}$ satisfying the no-influence conditions in Eq.~(\ref{eq: no-influence conditions I}) and Eq.~(\ref{eq: no-influence conditions II}), and which recovers $\sigma_\bA$ as a marginal in Eq.~(\ref{marginalization_Barrett}) for inputs $\rho_{\Lambda_i} = \tr_{\Lambda^{\ins}}[\widetilde{\tau}^{\rho_{\Lambda_i}}_{\Lambda_i}] = \sum_{\lambda_i} \tr_{\Lambda^{\ins}}[\widetilde{\tau}^{\lambda_i}_{\Lambda_i}]$, as a consequence of Eq.~(\ref{eq: compatibility with process operator}). It then follows that $\sigma_\bA$ is compatible with $G$.
\end{proof}

Theorem \ref{thm: compatibility and Markovianity} establishes that for every process operator that is Markov for a graph $G$, there exists a QSM model over $G$ that reproduces that process. Note however that this does not necessarily give us information about which QSM \emph{correctly} describes a given physical process. This requires that the outcomes of instruments at the exogenous nodes correspond to ``stable events'' (cf. Ref.~\cite{DiBiagioRovelli2021}), e.g.~due to decoherence. \red That is, for a QSM to be taken to correctly describe a physical process, the events represented at the exogenous nodes must be effectively classical events, in line with their treatment as fixed background events. \blk The evaluation of counterfactuals will be relative to a QSM, and different QSMs compatible with the same process $\sigma_\bA$ will in general give different answers to the same counterfactual query. This situation is analogous to the classical case. The question of determining \emph{which} (classical or quantum) structural causal model correctly describes a given physical realisation of a process is an important question, but beyond the scope of this work.

Finally, we need the following notion (cf. Eq.~(19) in Ref.~\cite{ShrapnelCostaMilburn2018}). Given a particular set of outcomes $\bm{\lambda}=(\lambda_1,\cdots,\lambda_n)$ at the exogenous instruments, we define a \emph{conditional process operator} as follows,
\begin{equation}\label{eq: conditional process operator}
    \sigma_{\bA}^{\bm{\lambda}} = \frac{\tr_{S^{\mathrm{in}}\bL}\left[\rho_{\bA S|\bA \bL}^U 
    (\widetilde{\tau}^{\lambda_1}_{\Lambda_1}\otimes \cdots \otimes \widetilde{\tau}^{\lambda_n}_{\Lambda_n})\right]}{P(\lambda_1,\cdots,\lambda_n)}\; .
\end{equation}
This allows us to calculate the conditional probability `$P_{\bz}(\ba|\bm{\lambda})$' to obtain a set of outcomes $\ba = (a_1,\cdots,a_n)$ for a set of instruments $\bz = (z_1,\cdots,z_n)$ at endogenous nodes, given a set of outcomes $\bm{\lambda}$ for the exogenous instruments:
\begin{equation}\label{eq: marginal likelihood}
    P_{\bz}(\ba|\bm{\lambda}) = \tr_{\bA}[\sigma_{\bA}^{\bm{\lambda}}\tau^{\ba|\bz}_{\bA}] \quad \quad \quad \mathrm{with} \quad \quad  \quad \tau^{\ba|\bz}_{\bA} = \tau_{A_1}^{a_1|z_1} \otimes\cdots\otimes \tau_{A_n}^{a_n|z_n}\; .
\end{equation}
	
Assuming that the a QSM correctly describes a given physical scenario, and in particular that the events associated with $\bm{\lambda}$ can be thought of as well-decohered, stable events, we can think of Eq.~\eqref{eq: conditional process operator} as representing the \emph{actual} process realised in a given run of the experiment, where our (prior) ignorance about which process is actually realised is encoded in the subjective probabilities $P(\lambda_1,\cdots,\lambda_n)$.

\section{Counterfactuals in Quantum Causal Models}\label{sec: counterfactuals in QSM}

Classically, a counterfactual query has the form ``Given evidence $\be$, would $\bY$ have been $\by$ had $\bZ$ been $\bz$?''. In Pearl's formalism, the corresponding counterfactual statement can be assigned a truth value given a full specification $\bU=\bu$ of the background conditions in a structural causal model. In that formalism, probabilities only arise out of our lack of knowledge about exogenous variables, and one can define the probability for the counterfactual to be true as the probability that $\bu$ lies in the range of values where the counterfactual is evaluated as true. In contrast, in quantum causal models, a counterfactual statement will in general not have a truth value! This is the case even if we are given maximal information about the process (represented as a unitary process) and maximal information about the events at the exogenous nodes (represented as a full specification of the exogenous variables `$\bm{\Lambda} = \bm{\lambda}$' in a quantum structural causal model\footnote{Here we are assuming that maximal information about an event corresponding to the preparation of a quantum state is given by a (pure) quantum state. This of course assumes that quantum mechanics is ``complete'' in the sense that there are no hidden variables that would further specify the outcomes of instruments. While this is admittedly an important assumption, it is the natural assumption to make in the context of quantum causal models---which aim to maintain compatibility with relativistic causality~\cite{Cavalcanti2021}.}).

In order to avoid the implicit assumption of `counterfactual definiteness' inherent to the notion of a \emph{probability of a counterfactual} as in the classical case (see Def.~\ref{def: counterfactual}), we seek a notion of \emph{counterfactual probability} in the quantum case. 

\red
\begin{definition}[Counterfactual probability]\label{def: quantum counterfactual prob}
    Let $\Q=\langle (\bA, \bL, S), \rho_{\bA S|\bA \bL}^U,\{\tau^{\lambda_i}_{\Lambda_i}\}_{\lambda_i} \rangle$ be a quantum structural causal model. Then the \emph{counterfactual probability} that outcomes $\bc'$ would have obtained for a subset of nodes \textbf{C}, had instruments $\textbf{z}'=(z_1',\cdots,z_n')$ been implemented and outcomes $\bb'$ obtained at a set of nodes \textbf{B} (disjoint from \textbf{C}), in the situation specified by the background variables $\bm{\Lambda} = \bm{\lambda}$, is denoted by $P^{\bm{\lambda}}_{\textbf{z'}}(\bc'|\bb')$ and given by 
    \begin{equation}\label{eq: prediction cf prob}
    P^{\bm{\lambda}}_{\bz'}(\bc'|\bb')
    = \frac{P^{\bm{\lambda}}_{\bz'}(\bc',\bb')}{P^{\bm{\lambda}}_{\bz'}(\bb')} 
    = \frac{\tr_{\bA} \left[
        \sigma^{\lambda}_{\bA} 
        (\tau^{\bb'|\bz'_\bB}_{\bB} \otimes 
        \tau^{\bc'|\bz'_\bC}_{\bC} \otimes 
        \tau^{|\bz'}_{\bA\setminus\bB\cup\bC})
    \right]}
        {\tr_{\bA}
    \left[
        \sigma^{\lambda}_{\bA} 
        (\tau^{\bb'|\bz'_\bB}_{\bB} \otimes 
        \tau^{|\bz'}_{\bA\setminus\bB})
    \right]}\; ,
\end{equation}
where $\tau^{\bb'|\bz'_\bB}_{\bB}=\bigotimes_{B_j\in \bB}\,\tau^{b'_j|z'_j}_{B_j}$, $\tau^{\bc'|\bz'_\bC}_{\bC}=\bigotimes_{C_k\in \bC}\,\tau^{c'_k|z'_k}_{C_k}$,  $\tau^{|\bz'}_{\bA\setminus\bB\cup\bC}=\bigotimes_{A_i\notin \bB\cup\bC} \tau^{|z'_i}_{A_i}$ and $\tau^{|\bz'}_{\bA\setminus\bB}=\bigotimes_{A_i\notin \bB} \tau^{|z'_i}_{A_i}$. For $P_{\bz'}^{\lambda}(\bb') = 0$, we set $P_{\bz'}^{\lambda}(\bc'|\bb') = *$ for counterfactuals with impossible antecedent (`counterpossibles').
\end{definition}

More generally, we want to calculate the \emph{expected value} of the counterfactual probability given some evidence, for which we define a standard quantum counterfactual query as follows.

\begin{definition}[Standard quantum counterfactual query]\label{def: quantum counterfactual query}
    Let $\Q=\langle (\bA, \bL, S), \rho_{\bA S|\bA \bL}^U,\{\tau^{\lambda_i}_{\Lambda_i}\}_{\lambda_i} \rangle$ be a quantum structural causal model. Then a \emph{standard quantum counterfactual query}, denoted by $P_{\bb'|\bz'}^{\ba|\bz}(\bc')$, is the expected probability that outcomes $\bc'$ would have obtained for a subset of nodes \textbf{C}, had instruments $\textbf{z}'=(z_1',\cdots,z_n')$ been implemented and outcomes $\bb'$ obtained at a set of nodes \textbf{B} (disjoint from \textbf{C}), given the evidence that a set of instruments $\textbf{z}=(z_1,\cdots, z_n)$ has been implemented and outcomes $\ba = (a_1,\cdots, a_n)$ obtained.
\end{definition}
\blk

Note that to obtain an unambiguous answer, one needs to specify all the instruments in all the nodes, both actual and counterfactual. Def.~\ref{def: quantum counterfactual query}
may not look general enough to accommodate all types of counterfactuals one can envisage, but we will discuss later how the answer to seemingly different types of counterfactual queries can be obtained from the answer to a standard query after suitable interpretation. At times there will be ambiguity in how to interpret some counterfactual queries, and the task of interpretation will be to reduce any counterfactual query to the appropriate standard query---we will return to this later. We now proceed to show how we can answer a quantum counterfactual query.

\subsection{Evaluation of counterfactuals}\label{sec:cf_q_evaluation}
The evaluation of a standard counterfactual query within a quantum structural causal model proceeds through a \textit{three-step process of abduction, action and prediction}, in analogy with the classical case.

{\bf Abduction.} We infer what the past must have been, given information we have at present, that is, we want to update our information about the instrument outcomes $\lambda_i$ at the exogenous nodes $\Lambda_i$, given that outcomes $a_i$ have been observed upon performing instruments $z_i$ at nodes $A_i$.\footnote{In the language of Ref.~\cite{DiBiagioRovelli2021}, we treat the outcomes $\lambda_i$ at exogenous nodes $\Lambda_i$ as ``stable facts''. In taking this stance we set aside the question of when an instrument outcome can be said to be a stable fact, i.e.~we set aside the measurement problem, which applies to the quantum causal model framework in the same way as to standard quantum theory~\cite{Cavalcanti2016, Cavalcanti2021}.} 
Since we are talking about jointly measured variables, we can perform Bayesian update to calculate the conditional probability\footnote{Here, we assume that $P_\bz(\ba) > 0$ since $\ba|\bz$ is an actually observed event.} 
\begin{equation}\label{eq: Bayesian update in QSM}
    P_\bz(\bm{\lambda}|\ba)= \frac{P_\bz(\ba|\bm{\lambda})P(\bm{\lambda})}{P_\bz(\ba)} 
    = \frac{\tr_{\bA} \left[ \sigma_{\bA}^{\bm{\lambda}} \tau^{\ba|\bz}_{\bA} \right] P(\lambda_1,\cdots,\lambda_n)}
            {\tr_{\bA} \left[ \sigma_{\bA} \tau^{\ba|\bz}_{\bA} \right]}\; .
\end{equation}

{\bf Action.} Next, we modify the instruments at endogenous nodes to $\{\tau^{a'_i|z'_i}_{A_i}\}_{a'_i}$, as required by the antecedent of the counterfactual query. We highlight an important distinction from the classical case: unlike in Pearl's formalism, we do not need to modify the process itself, since an `arrow-breaking' intervention at a node $A$ can always be emulated via some appropriate discard-and-prepare instrument, for example, by the instrument
\begin{equation}\label{eq: quantum do-intervention}
    \tau^{\mathrm{do}(\rho)}_A := (\rho_{A^{\mathrm{out}}})^T \otimes \mathbb{I}_{A^{\mathrm{in}}}\; .
\end{equation}
Deciding what instruments are appropriate for a given counterfactual query not in standard form is part of the interpretational task we will return to in Sec.~\ref{sec:analysis} below. For a standard quantum counterfactual query, this is unambiguous since the counterfactual instruments are defined as part of the query (see Def.~\ref{def: quantum counterfactual query}).

{\bf Prediction.} Finally, \red we calculate the expected value of the counterfactual probability
\begin{equation}\label{eq: prediction step in QSM}
    P_{\bb'|\bz'}^{\ba|\bz}(\bc') 
    = \sum_{\bm{\lambda} \in \bm{\Lambda}} P_\bz(\bm{\lambda}|\ba) P_{\bz'}^{\bm{\lambda}}(\bc'|\bb')\; .
\end{equation}
 Whenever the counterfactual has an impossible antecedent for some values of the background variables with nonzero probability, that is, whenever $P_{\bz'}^{\bm{\lambda}}(\bc'|\bb') = *$ for some $\bm{\lambda} \in \bm{\Lambda}$ with $P_\bz(\bm{\lambda}|\ba) \neq 0$, we set  $P_{\bb'|\bz'}^{\ba|\bz}(\bc') = *$. \blk 

If a counterfactual query can be interpreted as a standard quantum counterfactual query, then it will have an unambiguous answer as above. In Sec.~\ref{sec:analysis}, we will discuss the task of interpreting a general quantum counterfactual query that is not already in standard form. Before doing so, we proceed by proving that the present formalism extends Pearl's classical formalism.

\section{From classical to quantum structural causal models}\label{sec: quantum extension}

Having defined a notion of quantum structural causal models (QSM) in Def.~\ref{def: quantum structural causal model}, it is an important question to ask in what sense this definition extends that of a probabilistic structural causal model (PSM) in Def.~\ref{def: prob_cscm} and, in particular, that of a classical structural causal model (CSM) in Def.~\ref{def: classical_structural_cm_def}. In this section, we show that QSMs indeed provide a generalization of PSMs---by extending an arbitrary PSM $\langle \M,P(\bu)\rangle$ to a QSM $\Q$. In order to do so, we need to take care of two crucial physical differences between Def.~\ref{def: classical_structural_cm_def} and Def.~\ref{def: quantum structural causal model}.

First, note that the structural relations $\bF$ in a CSM $\M = \langle \bU,\bV,\bF\rangle$ are generally not reversible, while unitary evolution in QSMs postulates an underlying reversible process. We therefore need to lift a generic CSM to a reversible CSM, whose structural relations are given in terms of bijective functions, yet whose independence conditions coincide with those of the original CSM.
Second, while classical information (in a CSM) can be copied, quantum information famously cannot. We therefore need to find a mechanism to encode classical copy operations into a QSM. This will require us to introduce auxiliary systems, which also need to preserve the no-influence conditions required between exogenous variables in Def.~\ref{def: quantum structural causal model}, (ii).

The next theorem asserts that an extension of a CSM to a QSM satisfying these constraints always exists.

\begin{theorem}\label{thm: quantum extension}
    Every PSM $\langle \M,P(\bu)\rangle$, consisting of a CSM $\M = \langle \bU,\bV,\bF\rangle$ and a probability distribution $P(\bu)$ over exogenous variables, can be extended to a QSM $\Q=\langle (\bV'',\bm{\Lambda}'', S''), \rho_{\bV''S''|\bV''\bm{\Lambda}''}^W,\{\tau^{u_i}_{\Lambda''_i}\}_{u_i} \rangle$ such that
    \begin{align}
        P(\bv)
        &= \sum_{u_i}\prod_{i=1}^n \delta_{v_i,f_i(pa_i,u_i)} P(u_i) \label{eq: causal Markov condition for M}\\
        &= \sum_\bu\tr_{S''^{\mathrm{in}}\bV''}[\rho_{\bV''S''\mid \bV''\bm{\Lambda}''}^W 
        (\tau^\mathrm{v_1}_{V''_1} \otimes \cdots \otimes \tau^\mathrm{v_n}_{V''_n}) \otimes (\widetilde{\tau}^{u_1}_{\Lambda''_1} \otimes \cdots \otimes \widetilde{\tau}^{u_n}_{\Lambda''_n})]\; ,\label{eq: causal Markov condition for M II}
    \end{align}
    In particular, $\Q$ preserves the independence conditions between variables $\bV$ in $\M$ (as defined by $\bF$),
    \begin{equation}\label{eq: factorisation of \Q}
        \rho_{\bV''S''\mid \bV''\bm{\Lambda}''}^W = \prod_{i=1}^n \rho_{V''_iS''_i\mid Pa''_i\Lambda''_i}^{W_i}\; .
    \end{equation}
\end{theorem}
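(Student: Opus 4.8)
The plan is to construct $\Q$ in two moves that mirror the two obstructions flagged above---irreversibility of the $f_i$ and the impossibility of cloning---and then read off Eqs.~(\ref{eq: causal Markov condition for M})--(\ref{eq: factorisation of \Q}) by inspection. The first equality in the statement is immediate from the definition of a PSM: substitute each $v_j=f_j(pa_j,u_j)$ recursively into $P(\bv)=\sum_{\bu:\,\bV(\bu)=\bv}P(\bu)$, using $P(\bu)=\prod_iP(u_i)$. For the \emph{reversible dilation}, identify the value set of $V_i$ with $\mathbb Z_{d_i}$, $d_i:=|V_i|$, and replace $f_i\colon\Pa_i\times U_i\to V_i$ by the bijection $\hat f_i\colon(pa_i,u_i,w)\mapsto\bigl(pa_i,u_i,\,w+f_i(pa_i,u_i)\bmod d_i\bigr)$ on $\Pa_i\times U_i\times\mathbb Z_{d_i}$. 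This is a Bennett-style embedding: feeding $w=0$ recovers $v_i$ in the scratch register while $pa_i,u_i$ survive (destined for a sink), and the induced joint law of $\bV$---hence all $\bF$-independences among the $V_i$---is unchanged. Because every $V_i$ is read by \emph{each} of its children, I would simultaneously equip node $V_i''$ with an output space $\mathcal H_{{V_i''}^{\out}}:=(\mathbb C^{d_i})^{\otimes|\Ch(V_i)|}$ carrying one copy of $v_i$ per child, so that each downstream gate can be handed its own copy; this is the point where no-cloning forces a genuine change of representation relative to the CSM.

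For the \emph{quantisation and assembly}, enlarge the output space of $\Lambda_i''$ to dimension $|U_i|\,d_i$ and take as the exogenous discard-and-prepare instrument the one preparing $|u_i\rangle\otimes|0\rangle$ with probability $P(u_i)$. Define $W_i$ to be any permutation unitary extending the partial isometry that sends $|pa_i\rangle\otimes|u_i\rangle\otimes|0\rangle$ to $|f_i(pa_i,u_i)\rangle\otimes|pa_i,u_i\rangle$, where the inputs are the dedicated parent copies $\{\mathcal H^{(i)}_{V_j''}\}_{V_j\in\Pa_i}$ together with the output of $\Lambda_i''$, the first output factor is fed into node $V_i''$, and the second is a sink block $S_i''$ of dimension $\prod_{V_j\in\Pa_i}d_j\cdot|U_i|$; one checks directly that the domain and codomain of this partial isometry have equal dimension inside ambient spaces of equal dimension, so the extension exists. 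Set $\rho^W_{\bV''S''\mid\bV''\bm{\Lambda}''}:=\prod_{i=1}^n\rho^{W_i}_{V_i''S_i''\mid\Pa_i''\Lambda_i''}$ (commuting factors supported on disjoint, identity-padded wires), which is exactly Eq.~(\ref{eq: factorisation of \Q}). Since $W_i$ touches only the outputs of $\Pa(V_i)$ and only $\Lambda_i''$ among the exogenous nodes, the no-influence relations $\{\Lambda_j''\nrightarrow V_i''\}_{j\neq i}$ and $\{V_j''\nrightarrow V_i''\}_{V_j''\notin\Pa(V_i)}$ hold automatically; hence $\Q=\langle(\bV'',\bm{\Lambda}'',S''),\rho^W,\{\tau^{u_i}_{\Lambda_i''}\}_{u_i}\rangle$ is a legitimate QSM (Def.~\ref{def: quantum structural causal model}), and by Thm.~\ref{thm: compatibility and Markovianity} its marginal process on $\bV''$ is Markov for $G$.

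For the \emph{verification} of Eq.~(\ref{eq: causal Markov condition for M II}), take as the instrument at each endogenous node $V_i''$ the computational-basis ``measure-and-broadcast'' instrument: on outcome $v_i$ it projects the input onto $|v_i\rangle$ and outputs $|v_i\rangle^{\otimes|\Ch(V_i)|}$; its CJ operators are the $\tau^{v_i}_{V_i''}$ appearing in the statement. Substituting $\rho^W$, these $\tau^{v_i}_{V_i''}$, and the normalised preparations $\widetilde\tau^{u_i}_{\Lambda_i''}$ into the generalised Born rule and evaluating along a topological order of $G$, a short induction shows that the value arriving at the input of $V_i''$ is precisely $f_i(pa_i,u_i)$ for the parent values already fixed upstream; consequently the projectors contribute $\prod_i\delta_{v_i,f_i(pa_i,u_i)}$, the partial trace over the sink contributes $1$, and the preparations contribute $\prod_iP(u_i)$. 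Summing over $\bu$ gives $\sum_{\bu}\prod_i\delta_{v_i,f_i(pa_i,u_i)}P(u_i)=P(\bv)$, completing the chain of equalities.

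I expect the reversibility dilation to be routine and the final computation to be bookkeeping; the delicate point is the copying step. One must reproduce the classical fan-out of every endogenous variable to all of its children without cloning, in a way that \emph{simultaneously} (i) keeps the global process a single unitary $\rho^W$, (ii) respects the strict per-node factorisation $\prod_i\rho^{W_i}$ demanded by Eq.~(\ref{eq: factorisation of \Q}), and (iii) creates no spurious influence among the exogenous nodes. Arranging the enlarged output spaces of the $V_i''$, the scratch registers, and the sink decomposition $S''=\bigotimes_iS_i''$ so that all three hold at once is the crux of the argument; once it is set up correctly, the remaining verifications are routine.
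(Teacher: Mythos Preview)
Your argument is correct and follows the same three-stage strategy as the paper---dilate each $f_i$ to a reversible map, supply a fan-out mechanism to replace classical copying, then promote the result to a unitary process---but your implementation of two of the stages differs substantially from the paper's.

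For reversibility, you use the Bennett embedding $(pa_i,u_i,w)\mapsto(pa_i,u_i,w+f_i(pa_i,u_i))$, which is cleaner than the paper's construction: the paper first passes through a binary encoding (enlarging each $|V_i|$ to the next power of two) and then builds the bijective extension via an explicit lexicographic bookkeeping of the pre-images $f_i^{-1}(v_i)$, adding input ancillae $T_i'$ and output ancillae $S_i'$ of matched cardinality. Your route bypasses both the binary step and the combinatorial labelling, at the cost of a slightly larger sink (you keep all of $(pa_i,u_i)$ rather than a minimal label).

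The more significant divergence is in the fan-out. The paper realises the classical copy \emph{inside the unitary process}: it introduces extra exogenous ancillae $\Lambda'_{ji}$ (one per child of $V_i$), initialised to $|0\rangle$ via the discard-and-prepare instruments, and applies a block of CNOT gates $C_i$ after $\widetilde W_i$ so that $W_i=C_i\widetilde W_i$ writes $|\Ch(V_i)|$ copies of $v_i$ directly into $V_i''^{\ins}$. The paper then has to verify explicitly (via a two-CNOT computation) that these extra copy ancillae respect the no-influence constraints $\{\Lambda'_{ji}\nrightarrow V_k''\}$. You instead put the fan-out in the \emph{endogenous instrument}: the input of $V_i''$ is single-copy, and your measure-and-broadcast instrument with Kraus operator $|v_i\rangle^{\otimes|\Ch(V_i)|}\langle v_i|$ produces the copies only in $V_i''^{\out}$. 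This is legitimate because the split-node structure allows $\dim\cH_{V_i''^{\ins}}\neq\dim\cH_{V_i''^{\out}}$, and it buys you a simpler exogenous sector (no $\Lambda'$ ancillae, hence no extra no-influence check). The price is that the classical fan-out now lives in the chosen instrument rather than in the model; this is harmless for the theorem as stated and for Cor.~\ref{cor: lifting classical counterfactuals} (a do-intervention $\mathrm{do}(V_i=v_i)$ simply prepares $|v_i\rangle^{\otimes|\Ch(V_i)|}$ at the output), but it means your $\Q$ and the paper's $\Q$ are genuinely different QSMs, not just different presentations of the same object.
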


\begin{proof} \textit{(Sketch)} The proof consists of several parts:
    \begin{itemize}
        \item[(i)] we find a binary extension of the CSM $\M$,
        \item[(ii)] we extend the binary CSM to a binary, reversible CSM, where all functional relations are bijective, 
        \item[(iii)] we encode classical copy operations in a QSM using CNOT-gates,
        \item[(iv)] by promoting classical variables to quantum nodes, and by linearly extending bijective functions between classical variables to isometries, we construct a QSM $\Q$, which extends the PSM $\langle \M,P(\bu) \rangle$ as desired.
    \end{itemize}

For details of the proof, see App.~\ref{app: proof quantum extension}.
\end{proof}

We will see in Sec.~\ref{sec:analysis} that a QSM admits different types of counterfactual queries, some of which are genuinely quantum, that is, they do not arise in a CSM. Nevertheless, Thm.~\ref{thm: quantum extension} implies that counterfactual queries arising in a PSM $\langle \M,P(\bu)\rangle$ coincide with the corresponding queries in its quantum extension $\Q$.

\begin{corollary}\label{cor: lifting classical counterfactuals}
    The evaluation of a counterfactual in a (PSM) $\langle \M,P(\bu)\rangle$ coincides with the evaluation of the corresponding do-interventional counterfactual (see also Sec.~\ref{sec:analysis}) in its quantum extension $\Q$.
\end{corollary}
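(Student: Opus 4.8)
The plan is to run the quantum three-step procedure of Sec.~\ref{sec:cf_q_evaluation} on the particular QSM $\Q$ produced by Thm.~\ref{thm: quantum extension}, with a purely ``classical'' family of instruments, and to check that each of abduction, action and prediction collapses onto the corresponding step of Pearl's procedure in Thm.~\ref{thm: evaluation of classical counterfactuals}. First I would fix what ``the corresponding do-interventional counterfactual'' is. Given the classical query $P(\bY_\bx=\by\,|\,\bE=\be)$ in the PSM $\langle \M,P(\bu)\rangle$, I take the standard quantum counterfactual query in $\Q$ with conditioning set $\bB=\emptyset$ and $\bC=\bY$, in which: (a) the actual instruments $\bz$ read out \emph{every} endogenous node $V''_i$ in the basis fixed by the CSM encoding (these are exactly the instruments $\tau^{v_i}_{V''_i}$ appearing in Eq.~\eqref{eq: causal Markov condition for M II}), recording the value $\be$ at the $\bE$-nodes and summing over the outcomes at the remaining endogenous nodes; and (b) the counterfactual instruments $\bz'$ replace the read-out at each node of $\bX$ by the discard-and-prepare do-instrument $\tau^{\mathrm{do}(|x_i\rangle\langle x_i|)}_{V''_i}$ of Eq.~\eqref{eq: quantum do-intervention}, record the outcome $\by$ at the $\bY$-nodes, and again sum over the outcomes of the read-outs at the remaining endogenous nodes. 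Since each do-instrument is itself CPTP (single outcome), it sits inside the $\tau^{|\bz'}$ factor of Eq.~\eqref{eq: prediction cf prob} and the choice $\bB=\emptyset$ is consistent.

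The technical heart of the argument is a \emph{classicality lemma} to be read off from the construction in App.~\ref{app: proof quantum extension}: since $\rho^W$ is built by linearly extending bijective functions between binary classical variables to isometries and by realizing the classical copy operations with CNOT gates, the whole process maps computational-basis product states to computational-basis product states. Hence, once $\Q$ is fed computational-basis inputs at the exogenous nodes (which is precisely what the instruments $\widetilde{\tau}^{u_i}_{\Lambda''_i}$ do) and probed only with computational-basis read-out and do-instruments at the endogenous nodes, every operator occurring inside the traces of Eqs.~\eqref{eq: conditional process operator}, \eqref{eq: Bayesian update in QSM}, \eqref{eq: prediction cf prob} and \eqref{eq: prediction step in QSM} is diagonal in the relevant computational basis; the partial traces over the auxiliary and sink systems $S''$ reduce to resolutions of the identity; and each generalized Born-rule expression Eq.~\eqref{eq: generalised Born rule} becomes a sum over the non-recorded variables of a product $\prod_i \delta_{v_i,f_i(pa_i,u_i)}$ of Kronecker deltas --- exactly the classical (truncated) factorization of the relevant (sub)model. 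Proving this lemma cleanly --- keeping track of the binary extension, the reversible extension, the CNOT copy gates and the sink node --- is the main labour of the proof; conceptually it only says that nothing in the construction ever rotates a computational-basis state out of the computational basis.

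Granting the lemma, the three steps then match termwise. For \emph{abduction}: identifying $\bm{\lambda}$ with $\bu$ (via the binary encoding) and using $P(\bm{\lambda})=P(\bu)$, Eq.~\eqref{eq: Bayesian update in QSM} reads $P_\bz(\bm{\lambda}|\ba)=P_\bz(\ba|\bm{\lambda})P(\bm{\lambda})/P_\bz(\ba)$ with $P_\bz(\ba|\bm{\lambda})=P(\bE=\be|\bU=\bu)$ by the lemma (the likelihood in the binary reversible CSM, which coincides with that of $\M$ by Thm.~\ref{thm: quantum extension}), so this reproduces Pearl's $P(\bu|\be)$. For \emph{action}: the lemma shows that replacing the $\bX$-read-outs by $\{\tau^{\mathrm{do}(|x_i\rangle\langle x_i|)}_{V''_i}\}$ inside $\sigma_{\bA}^{\bm{\lambda}}$ turns it into exactly the truncated factorization of the classical submodel $\M_\bx$ --- the quantum counterpart of Pearl's graph surgery. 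For \emph{prediction}: the lemma gives $P^{\bm{\lambda}}_{\bz'}(\by)=\delta_{\by,\,\bY_\bx(\bu)}$ for each fixed $\bm{\lambda}=\bu$, so Eq.~\eqref{eq: prediction step in QSM} becomes $\sum_{\bu} P(\bu|\be)\,\delta_{\by,\bY_\bx(\bu)} = \sum_{\bu|\bY_\bx(\bu)=\by} P(\bu|\be)$, i.e.\ Eq.~\eqref{eq: prediction step in CSM}. Since every antecedent involved is either certain or of positive probability, the counterpossible symbol $*$ never appears, and the quantum evaluation coincides with Pearl's, which proves the corollary.
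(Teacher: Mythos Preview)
Your overall strategy is sound and, in spirit, is a more explicit version of what the paper does: the paper simply invokes Thm.~\ref{thm: quantum extension} to say that, once the abducted distribution over exogenous variables is fixed, the do-interventional prediction in $\Q$ agrees with Eq.~\eqref{eq: prediction step in CSM}, and then spends the entire proof on the abduction step. Your ``classicality lemma'' makes explicit what the paper leaves implicit in the appeal to Thm.~\ref{thm: quantum extension}, and your termwise matching of action and prediction would go through.

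There is, however, one point where you move too fast and where the paper's proof is concentrated. You write ``identifying $\bm{\lambda}$ with $\bu$ (via the binary encoding) and using $P(\bm{\lambda})=P(\bu)$''. But the exogenous space of $\Q$ is strictly larger than $\bU$: each $\Lambda''_i$ carries, besides the binary-encoded $U_i$, the ancillae $T'_i$ from the reversible extension and the copy ancillae $\Lambda'_i$ (cf.\ Eq.~\eqref{eq: encoding of distributions over exogenous nodes in M}). Your classicality lemma gives diagonality, but diagonality alone does not give $P_\bz(\ba\,|\,\bu,\bt',\bm{\lambda}')=P_\bz(\ba\,|\,\bu)$; you still need two structural facts from the construction: (i) the $\Lambda'_i$ are prepared in a $\delta$-distribution at $|0\rangle\langle 0|$, so Bayesian update on them is trivial; and (ii) the $T'_i$ feed only into the sink $S'_i$ via $f'_i$ (Eq.~\eqref{eq: reversible function extension}), so the endogenous likelihood is independent of $\bt'$ and marginalisation over $\bt'$ commutes with Bayes' rule. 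This is precisely the content of the paper's proof. Once you insert these two observations into your abduction paragraph, your argument is complete and essentially equivalent to the paper's, just organised around an explicit classicality lemma rather than a direct appeal to Thm.~\ref{thm: quantum extension}.
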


\begin{proof}
    Given a distribution over exogenous nodes, Thm.~\ref{thm: quantum extension} assures that do-interventions in Eq.~(\ref{eq: do-intervention}) yield the same prediction---whether evaluated via Eq.~(\ref{eq: prediction step in CSM}) in $\M$ or as a do-interventional counterfactual via Eq.~(\ref{eq: prediction step in QSM}) in $\Q$. This leaves us with the update step in Pearl's analysis of counterfactuals (cf. Thm.~\ref{thm: evaluation of classical counterfactuals}). More precisely, we need to show that the Bayesian update in Eq.~(\ref{eq: Bayesian update in QSM}) does not affect the distribution over the space of additional ancillae $\bT'$ and $\bm{\Lambda}'$ in the proof of Thm.~\ref{thm: quantum extension}. This is a simple consequence of the way distributions $P(\bu)$ over exogenous nodes in $\M$ are encoded in $\Q$.
    
    First, the distribution over copy ancillae $\Lambda'_i$ is given by a $\delta$-distribution peaked on the state $|0\rangle\langle 0|_{\Lambda_i}$ (see Eq.(\ref{eq: encoding of distributions over exogenous nodes in M}) in App.~\ref{app: proof quantum extension}). In other words, we have full knowledge of the initialization of the copy ancillae, hence, the update step in Eq.~(\ref{eq: Bayesian update in QSM}) is trivial in this case.
    
    Second, let $P(\bu') = P(\bu,\bt')$ be any distribution over exogenous nodes in the binary, reversible extension $\M'$ of $\M$ (see (i) and (ii) in App.~\ref{app: proof quantum extension}) such that $P(\bu) = \sum_{\bt' \in \bT'} P(\bu,\bt')$, that is, $P(\bu)$ arises from $P(\bu')$ by marginalisation under the discarding operation $\pi$ (see (ii) in App.~\ref{app: proof quantum extension}).\footnote{A canonical choice for $P(\bu')$ is the product distribution of $P(\bu)$ and the uniform distribution over $\bT'$, $P(\bu') = \frac{1}{|\bT'|} P(\bu)$.} But since the variables $T'_i$ in $U'_i = T'_i \times U_i$ are related only to the sink node $S'_i$ via $f'_i$ (see Eq.~(\ref{eq: reversible function extension}) in App.~\ref{app: proof quantum extension}), we have $P_\bz(\ba|\bu') = P_\bz(\ba|\bu,\bt') = P_\bz(\ba|\bu)$.
    The marginalised updated distribution thus reads
    \begin{equation}
        \sum_{\bt'\in\bT'} P_\bz(\bu,\bt'|\ba)
        = \sum_{\bt'\in\bT'} \frac{P_\bz(\ba|\bu,\bt')P(\bu,\bt')}{P_\bz(\ba)}
        = \sum_{\bt'\in\bT'} \frac{P_\bz(\ba|\bu)P(\bu,\bt')}{P_\bz(\ba)}
        = \frac{P_\bz(\ba|\bu)P(\bu)}{P_\bz(\ba)}
        = P_\bz(\bu|\ba)\; .
    \end{equation}
    In other words, Bayesian inference in Eq.~(\ref{eq: Bayesian update in QSM}) commutes with marginalisation.
\end{proof}

Thm.~\ref{thm: quantum extension} and Cor.~\ref{cor: lifting classical counterfactuals} show that our definition of QSMs in Def.~\ref{def: quantum structural causal model} generalizes that of CSMs in Def.~\ref{def: classical_structural_cm_def}. What is more, this generalization is proper: a QSM cannot generally be thought of as a CSM, while also keeping the relevant independence conditions between the variables of the model. Indeed, casting a QSM to a CSM is to specify a local hidden variable model for the QSM, yet a general QSM will not admit a local hidden variable model.

In short, the \emph{counterfactual probabilities} defined by a QSM can generally not be interpreted as \emph{probabilities of counterfactuals} (to be true). In \red Sec.~\ref{sec:analysis}, we will further analyse the distinctions between classical and quantum counterfactuals, and \blk see some instances of counterfactual queries in the quantum case that do not have an analog in the classical case.

\section{Interpretation of \red quantum \blk counterfactual queries}\label{sec:analysis}

In this section, we emphasize some crucial differences between the semantics of counterfactuals in classical and quantum causal models. Recall that in order to compute the probability of a counterfactual in a classical structural causal model (CSM), a do-intervention has to be considered in at least one of the nodes. Indeed, there is no way for the antecedent of the counterfactual query to be true \red without some modification in the model, \blk since a complete specification of the values of exogenous variables determines the values of endogenous variables, and thus determines the antecedent to have its actual value. CSMs are inherently deterministic.

In contrast, in a quantum structural causal model (QSM) the probability \red that a different outcome would have been obtained \blk can be nonzero even without a do-intervention, since even maximal knowledge of the events at the exogenous nodes does not, in general, determine the outcomes of endogenous instruments. QSMs are inherently probabilistic. 

As a consequence, we will distinguish between two kinds of counterfactuals in the quantum case, namely, \emph{passive} and \emph{active} counterfactuals, which we define and discuss examples of in Sec.~\ref{sec: different counterfactuals for QSM}. In Sec.~\ref{sec: principle of minimality}, we provide an argument for the disambiguation between passive and active counterfactuals, when faced with an ambiguous (classical) counterfactual query. Moreover, as a consequence of the richer semantics of quantum counterfactuals, in Sec.~\ref{sec: Bell scenario} we show how (passive) quantum counterfactuals break the \red link between \blk causal and counterfactual dependence \red that exists \blk in the classical setting. We discuss this explicitly in the case of the Bell scenario.

\subsection{Passive and active counterfactuals}\label{sec: different counterfactuals for QSM}
In Sec.~\ref{sec:cf_q_evaluation} we outlined a three-step procedure to evaluate counterfactual probabilities in quantum causal models. Note that, unlike in its classical counterpart (Thm.~\ref{thm: evaluation of classical counterfactuals}), an arrow-breaking do-intervention is not necessary in order to make the antecedent of the counterfactual true. Counterfactual queries can therefore be evaluated without a do-intervention on the underlying causal graph, and, in particular, without changing the instruments performed at quantum nodes at all. Indeed, \red according to Def.~\ref{def: quantum counterfactual query}, the expected counterfactual probability has a well-defined numerical value whenever the antecedent has a nonzero probability $P_{\bz'}^{\bm{\lambda}}(\bb')$ of occurring for all values of the background variables $\bm{\lambda}$ that are compatible with the evidence $\ba|\bz$, that is,
\begin{equation}\label{eq: condition passive counterfactual}
    \forall \bm{\lambda}\in \bm{\Lambda} \; \left(P_{\bz}(\bm{\lambda}|\ba) > 0 \implies
    P_{\bz'}^{\bm{\lambda}}(\bb') > 0\right)\,.
\end{equation}
It is possible for Eq.~\eqref{eq: condition passive counterfactual} to be satisfied even while keeping the endogenous instruments fixed, i.e.~even if $\bz'=\bz$. \blk 
Crucially, unlike in the classical case, we will see that this may be the case even if the antecedent $\bb'|\bz'$ is incompatible with the observed values $\ba|\bz$. This motivates the following distinction for quantum counterfactuals.

\begin{definition}\label{def: types of quantum counterfactuals}
    Let $\Q=\langle (\bA, \bL, S), \rho_{\bA S|\bA \bL}^U,\{\tau^{\lambda_i}_{\Lambda_i}\}_{\lambda_i} \rangle$ be a quantum structural causal model. A counterfactual query (following Def.~\ref{def: quantum counterfactual query}) is called a \emph{passive counterfactual} if \red  $\bz'_\bB \equiv \bz'|_\bB = \bz|_\bB \equiv \bz_\bB$, \blk that is, if no intervention is performed on the nodes specified by the antecedent; otherwise it is called an \emph{active counterfactual}. 
    
    The special case of an active counterfactual where $\bz'$ specifies a do-intervention, $\tau_A^{\mathrm{do}(\bm{\rho})} = \{(\bm{\rho}_{\bA^{\out}})^T \otimes \mathbb{I}_{\bA^{\ins}}\}$ (see Eq.~(\ref{eq: quantum do-intervention})), will also be called a \emph{do-interventional counterfactual}.
\end{definition}

In the following, we discuss two examples of passive, active and do-interventional counterfactuals.

\begin{figure}
		\centering
		\begin{tikzpicture}[node distance={15mm}, thick, main/.style = {draw, circle}] 
                \node[main] (1)  {$B$};
                \node[main] (2) [below of=1] {$A$};
                \node[main] (3) [below=0.5cm of 2]{$\Lambda$}; 
			  \node[] (4) [ left=0.5cm of 1] {$z_B$}; 
                \node[] (5) [ left=0.5cm of 2] {$z_A$}; 
                \node[] (6) [ right=0.5cm of 1] {$b$}; 
                \node[] (7) [ right=0.5cm of 2] {$a$};
               \node[] (9) [ right=0.5cm of 3] {$\lambda$};
			\draw[dashed, ->] (3) -- (2); 
 			\draw[->] (2) -- (1);
                \draw[->] (4) -- (1);
                \draw[->] (5) -- (2);
                \draw[->] (1) -- (6);
                \draw[->] (2) -- (7);
               \draw[->] (3) -- (9);
 		\end{tikzpicture}
		\caption{A graphical representation of a quantum causal model with endogenous nodes $\bm{A} = \{A,B\}$, and exogenous node $\Lambda$. Here we also represent the choice of instruments as inputs to a node, and their corresponding outcomes as outputs.} \label{cf_example}
\end{figure}
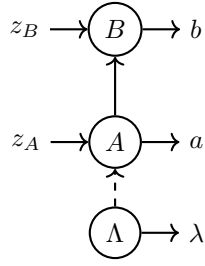

\begin{example}\label{ex: passive vs active CF}
Consider the causal graph in Fig.~\ref{cf_example} and a compatible QSM $M_Q = \langle (\bA,\Lambda), \rho^U_{AB|A\Lambda},\{\tau^\lambda\}_\lambda\rangle$, where $\bA = \{A,B\}$ represent endogenous nodes, and $\Lambda$ represents an exogenous node with the following discard-and-prepare instrument,
\begin{equation}\label{eq: preparation instrument 1}
    \{\tau^\lambda\}_{\lambda=0,1} = \Bigl\{\frac{1}{2}(([0]_{\Lambda^{\mathrm{out}}})^T \otimes \mathbb{I}_{\Lambda^{\mathrm{in}}}), \frac{1}{2}(([1]_{\Lambda^{\mathrm{out}}})^T\otimes \mathbb{I}_{\Lambda^{\mathrm{in}}}) \Bigr\}\; ,
\end{equation}
such that $\tau_\Lambda^{\frac{1}{2}\mathbb{I}} = \sum_{\lambda=0,1} \tau^\lambda_\Lambda$ prepares the maximally mixed state, and we assume identity channels between pairs of nodes,
\begin{equation}\label{eq: example identity process operator}
    \rho_{AB|A \Lambda}^U = \rho_{B|A}^{\mathrm{id}} \rho_{A|\Lambda}^{\mathrm{id}} = \rho_{B^{\ins}|A^{\out}}^{\mathrm{id}} \rho_{A^{\ins}|\Lambda^{\out}}^{\mathrm{id}}\; .
\end{equation}
With respect to the model $M_Q$, we will calculate expected counterfactual probabilities of the form 
\pur{$P^{a=+|\bz}_{a'=-|\bz'}(b')$}, \blk  where we fix the actual instruments $\bz=(z_A=1,z_B)$ at endogenous nodes with
\begin{equation}
     \mathcal{I}_A^{z_A=1} = \{ ([+]_{A^{\mathrm{out}}})^T \otimes [+]_{A^{\mathrm{in}}}, ([-]_{A^{\mathrm{out}}})^T \otimes [-]_{A^{\mathrm{in}}} \}\; , \quad \quad \quad \quad \quad \mathcal{I}_B^{z_B} = \{\tau_B^{b|z_B}\}_b\; ,
\end{equation}
but consider different counterfactual instruments $\bz'=(z'_A,z'_B)$, corresponding to (i) passive, (ii) do-interventional, and (iii) active counterfactual queries. 
To this end, we first calculate the conditional process operators (cf. Eq.~(\ref{eq: conditional process operator})), conditioned on outcomes $\lambda \in \{0,1\}$ of the instrument in Eq.~(\ref{eq: preparation instrument 1}):
\begin{align}\label{eq: l1 conditional process operator}
    \sigma_{AB}^{\lambda=0}
    &= \frac{\tr_\Lambda\left[\rho_{AB|A\Lambda}^U \widetilde{\tau}^{\lambda=0}_{\Lambda}\right]}{P(\lambda=0)}
    = \rho_{B^{\mathrm{in}}|A^{\mathrm{out}}}^{\mathrm{id}} \otimes [0]_{A^{\mathrm{in}}}\; ,\nonumber \\
    \sigma_{AB}^{\lambda=1}
    &= \frac{\tr_\Lambda\left[\rho_{AB|A\Lambda}^U \widetilde{\tau}^{\lambda=1}_{\Lambda}\right]}{P(\lambda=1)}
    = \rho_{B^{\mathrm{in}}|A^{\mathrm{out}}}^{\mathrm{id}} \otimes [1]_{A^{\mathrm{in}}}\; .
\end{align}

\begin{itemize}
    \item[(i)] Passive case: \emph{``Given that $a=+$ occurred in the actually performed instrument $\mc{I}^1_A$, what is the probability that $b'$ would have obtained using the instrument $\mc{I}^{z'_B=1}_B$, had it been that $a'=-$, using the instrument $\mc{I}^1_A$?''}. 
    
\red
In the abduction step, we update our information about the exogenous node, given the evidence. \pur Note that observing the outcome $a=+$ at node $A$ in this particular case gives us no information about the outcome at the exogenous node since both outcomes at $A$ occur with equal probability for both of the possible values of the exogenous variable.  This is expressed in the following conditional probabilities (c.f. Eq.~\eqref{eq: Bayesian update in QSM} and the denominator of Eq.~\eqref{eq: prediction cf prob}), noting that here the counterfactual antecedent is $a'$, in place of $b'$ in Eq.~\eqref{eq: prediction cf prob}:
\begin{align}\label{eq:ex1 abduction_1}
    P_\bz(\lambda=0|a=+) &= \frac{1}{2} \quad \wedge \quad P_{\bz'}^{\lambda=0}(a' = -) =\frac{1}{2}\; ,\\
    \label{eq:ex1 abduction_2}
    P_\bz(\lambda=1|a=+) &= \frac{1}{2} \quad \wedge \quad P_{\bz'}^{\lambda=1}(a' = -) =\frac{1}{2}\; .
\end{align}
\red From the above, we see that we satisfy the conditions for a passive counterfactual to have a well-defined numerical value, given in Eq.~\eqref{eq: condition passive counterfactual}. \pur Since $z'_A = z_A = 1$, this is a passive counterfactual, hence, no action step, that is, no intervention is needed.
For the prediction step, we first compute the required counterfactual probabilities from Eq.~\eqref{eq: prediction cf prob},
\begin{align}
    P_{\bz'}^{\lambda}(b'|a'=-) &= \frac{P_{\bz'}^{\lambda}(b',a'=-)}{P_{\bz'}^{\lambda}(a'=-)}
    = \dfrac{\mathrm{Tr}_{AB}[\sigma_{AB}^{\lambda}(\tau_A^{a'=-|1}\otimes\tau_B^{b'|z'_B=1})]}{\mathrm{Tr}_{AB}[\sigma_{AB}^{\lambda}(\tau_A^{a'=-|1}\otimes\tau_B^{|z'_B=1})]} \; .
\end{align}
Using Eq.~\eqref{eq: l1 conditional process operator}, the counterfactual probability for the different values of $\lambda$ can be written as
\begin{align}\label{eq:ex 1 CFP l0}
    P_{\bz'}^{\lambda=0}(b'|a'=-) 
    &= 2\;\mathrm{Tr}_{AB}[\rho_{B^{\mathrm{in}}|A^{\mathrm{out}}}^{\mathrm{id}} \otimes [0]_{A^{\mathrm{in}}}(\tau_A^{a'=-|1}\otimes\tau_B^{b'|z'_B=1})]\;,
\end{align}
\begin{align}\label{eq:ex 1 CFP l1}
    P_{\bz'}^{\lambda=1}(b'|a'=-) 
    &= 2\;\mathrm{Tr}_{AB}[\rho_{B^{\mathrm{in}}|A^{\mathrm{out}}}^{\mathrm{id}} \otimes [1]_{A^{\mathrm{in}}}(\tau_A^{a'=-|1}\otimes\tau_B^{b'|z'_B=1})]\;.
\end{align}

The expected counterfactual probability (refer Eq.\eqref{eq: prediction step in QSM},
\begin{align}
    P_{a'=-|z_A'=1}^{a=+|\bz}(b') 
    &= \sum_{\lambda \in \Lambda} P_{\bz}(\lambda|a=+) P_{\bz'}^{\lambda}(b'|a'=-)\; \\
    &= \mathrm{Tr}_{AB}[\rho_{B^{\mathrm{in}}|A^{\mathrm{out}}}^{\mathrm{id}} \otimes [0]_{A^{\mathrm{in}}}(\tau_A^{a'=-|1}\otimes\tau_B^{b'|z'_B=1}) + \rho_{B^{\mathrm{in}}|A^{\mathrm{out}}}^{\mathrm{id}} \otimes [1]_{A^{\mathrm{in}}}(\tau_A^{a'=-|1}\otimes\tau_B^{b'|z'_B=1})]\\
    &= \mathrm{Tr}_{AB}[\rho_{B^{\mathrm{in}}|A^{\mathrm{out}}}^{\mathrm{id}} \otimes \mathbb{I}_{A^{\mathrm{in}}}([-]_{A^{\mathrm{out}}})^T \otimes [-]_{A^{\mathrm{in}}}\otimes\tau_B^{b'|z'_B=1})]\\
    &=\mathrm{Tr}_B[[-]_{B^{\mathrm{in}}}\tau_B^{b'|z'_B=1}] \;.
\end{align}
The obtained expected counterfactual probability is \red thus numerically equivalent to the probability of obtaining outcome $b'$ for the instrument specified by $z_B'=1$, given the preparation of state $[-]$ at the input of node $B$. In other words, the counterfactual probability is simply dictated by the counterfactual outcome at $A$, which is a sensible result, since in this case the actual outcome $a$ gives us no information about the exogenous variables $\lambda$, the counterfactual outcome is possible for either of the values of $\lambda$, and furthermore the effect of the exogenous variables on $B$ is screened off by the outcome of the projective counterfactual instrument considered for node $A$.

\blk

\item[(ii)] Do-interventional case: \emph{``Given that $a=+$ occurred in the actually performed instrument $\mc{I}^1_A$, what is the probability that $b'$ would have obtained using the instrument $\mc{I}^{z'_B=2}_B$, had it been that $a'=-$, using the instrument $\tau^{\mathrm{do}([-])}_A$?''}.

Here, instead of \red $\mc{I}^{z'_A=1}_A$, we perform the single-element instrument corresponding to a do-intervention,
\begin{equation}\label{eq: do-example}
    \mc{I}^{z'_A=2}_A = \{\tau_A^{a'=-|2}\}=\{\tau_A^{\mathrm{do}([-])}\} = \{([-]_{A^{\mathrm{out}}})^T \otimes \mathbb{I}_{A^{\mathrm{in}}}\; \},
\end{equation} \blk
which discards the input and prepares the state $[-]$ at the output of A. \red As before, the actual instruments are denoted by $\bz=(z_A=1,z_B)$. Since the actual instruments and evidence are the same as in case (i), the abduction step yields the same result, given in the left side of Eqs.~\eqref{eq:ex1 abduction_1} and \eqref{eq:ex1 abduction_2}. In the action step, however, we modify the instrument at node $A$, and the counterfactual instruments are now denoted by $\bz' = (z'_A=2,z'_B)$. 

For the prediction step, we again first compute the required counterfactual probabilities from Eq.~\eqref{eq: prediction cf prob},
\pur
\begin{align}
    P_{\bz'}^{\lambda}(b'|a'=-) &= \frac{P_{\bz'}^{\lambda}(b',a'=-)}{P_{\bz'}^{\lambda}(a'=-)}
    = \dfrac{\mathrm{Tr}_{AB}[\sigma_{AB}^{\lambda}(\tau_A^{a'=-|2}\otimes\tau_B^{b'|z'_B=2})]}{\mathrm{Tr}_{AB}[\sigma_{AB}^{\lambda}(\tau_A^{a'=-|2}\otimes\tau_B^{|z'_B=2})]} \; .
\end{align}
\red The counterfactual probabilities for the different values of $\lambda$ thus take the same form as in Eqs.~\eqref{eq:ex 1 CFP l0} and \eqref{eq:ex 1 CFP l1}, but with the instrument element $\tau_A^{a'=-|2}$ in place of $\tau_A^{a'=-|1}$. The expected counterfactual probability corresponding to query (ii) can then be computed as
\begin{align}
    P_{a'=-|z_A'=2}^{a|\bz}(b') 
    &= \sum_{\lambda \in \Lambda} P_{\bz}(\lambda|a=+) P_{\bz'}^{\lambda}(b'|a'=-)\; \\
    &= \frac{1}{2}\;\mathrm{Tr}_{AB}[\rho_{B^{\mathrm{in}}|A^{\mathrm{out}}}^{\mathrm{id}} \otimes \mathbb{I}_{A^{\mathrm{in}}}(([-]_{A^{\mathrm{out}}})^T \otimes \mathbb{I}_{A^{\mathrm{in}}}\otimes\tau_B^{b'|z'_B=2})]\\
    &=\mathrm{Tr}_B[[-]_{B^{\mathrm{in}}}\tau_B^{b'|z'_B=2}] \;.
\end{align}
As the instrument $z'_B=2$ is arbitrary, we see that we obtain the same result for the do-interventional as for the passive counterfactual in this case. So we see that although the do-intervention guarantees the conditions for the counterfactual antecedent to have occurred, it is not necessary in situations where it \emph{could} have occurred passively, as in case (i). \blk
\blk
\item[(iii)] Active case: we ask \emph{``Given that $a=+$ occurred in the actually performed instrument $\mc{I}^1_A$, what is the probability that $b'$ would have obtained using the instrument $\mc{I}^{z'_B=3}_B$, had it been that $a'=-$, using the instrument $\mc{I}^3_A$''}.

\red Again, the abduction step yields the same results as in cases (i) and (ii). For the action step, this is an active counterfactual whenever  $\mc{I}^1_A \neq \mc{I}^3_A$. Specifically, let's consider
\begin{equation}\label{eq: active example}
    \mc{I}^3_A = \{([+]_{A^{\mathrm{out}}})^T \otimes [\phi]_{A^{\mathrm{in}}}, ([-]_{A^{\mathrm{out}}})^T \otimes [\overline{\phi}]_{A^{\mathrm{in}}}\}\; ,
\end{equation}
\red where $[{\phi}]$ is an arbitrary projector and $[\overline{\phi}]$ is its orthogonal complement. This instrument performs a projective measurement on the $\{[{\phi}],[\overline{\phi}]\}$ basis on the input of $A$, and prepares the corresponding state $[+]/[-]$ at the output. Considering the counterfactual probabilities from Eq.~\eqref{eq: prediction cf prob}, we find them to have the same value for the two possible values of $\lambda$, largely independent of $[\phi]$,
\begin{align}
    P_{\bz'}^{\lambda}(b'|a'=-) &= \frac{P_{\bz'}^{\lambda}(b',a'=-)}{P_{\bz'}^{\lambda}(a'=-)}
    = \dfrac{\mathrm{Tr}_{AB}[\sigma_{AB}^{\lambda}(\tau_A^{a'=-|3}\otimes\tau_B^{b'|z'_B=3})]}{\mathrm{Tr}_{AB}[\sigma_{AB}^{\lambda}(\tau_A^{a'=-|3}\otimes\tau_B^{|z'_B=3})]} =\mathrm{Tr}_{B}[[-]_{B^{\mathrm{in}}}\otimes\tau_B^{b'|z'_B=3})]\; ,\label{eq: ex_1_active_cf}
\end{align}
 provided the denominator of Eq.~\eqref{eq: ex_1_active_cf} is nonzero (which does depend on $[\phi]$). And since the abducted probabilities for the two values of $\lambda$ in this case are equal, the expected counterfactual probability also has the same numerical value whenever the denominator of Eq.~\eqref{eq: ex_1_active_cf} is nonzero for \emph{both values of $\lambda$}, namely,
\begin{align}\label{eq:ex1.3 ECP}
P_{a'|z_A'=3}^{a|\bz}(b') = \sum_{\lambda \in \Lambda} P_{\bz}(\lambda|a=+) P_{\bz'}^{\lambda}(b'|a'=-) = \mathrm{Tr}_{B}[[-]_{B^{\mathrm{in}}}\otimes\tau_B^{b'|z'_B=3})]\;.
\end{align}
\pur However, there are exceptions to this, for some specific values of $[\phi]$.
The denominator of Eq.~\eqref{eq: ex_1_active_cf} will be zero \red (in other words, the counterfactual antecedent will be impossible) for $\lambda = 0$ when $[\phi] = [0]$  and for $\lambda = 1$ when $[\phi] = [1]$. In these cases we set the value of the corresponding counterfactual probability in Eq.~\eqref{eq: ex_1_active_cf} -- and for the expected counterfactual probability in Eq.~\eqref{eq:ex1.3 ECP} -- to $*$, by definition, to indicate a counterpossible (see Section~\ref{sec:cf_q_evaluation}).
Whenever it is numerically well-defined, on the other hand, the expected counterfactual probability has the same value as in cases (i) and (ii), considering that the instrument denoted by $z'_B=3$ is again arbitrary.
\blk
\end{itemize}
\end{example}

\begin{example}\label{ex: passive vs active CF 2}
    Consider the same setup as in Ex.~\ref{ex: passive vs active CF}, but with a different instrument at the exogenous node $\Lambda$,
    \begin{equation}\label{eq: preparation instrument 2}
        \mathcal{I}_{\Lambda}^2
        = \{\tau^\lambda\}_{\lambda=+,-}
        = \Bigl\{\frac{1}{2}(([+]_{\Lambda^{\mathrm{out}}})^T \otimes \mathbb{I}_{\Lambda^{\mathrm{in}}}), \frac{1}{2}(([-]_{\Lambda^{\mathrm{out}}})^T \otimes \mathbb{I}_{\Lambda^{\mathrm{in}}}) \Bigr\}\; ,
    \end{equation}
    \red This instrument also prepares at the output of the exogenous node, on average, the maximally entangled state, that is, $\tau_\Lambda^{\frac{1}{2}\mathbb{I}} = \sum_{\lambda=+,-} \tau^\lambda_\Lambda$. This implies that, averaging over the values of the exogenous variable, we obtain the same process operator $\sigma_{AB}$ for the endogenous nodes, and thus the same quantum causal model as per Def.~\ref{def: qmc}. However, it implies a \emph{distinct} quantum structural causal model, as per Def.~\ref{def: quantum structural causal model}. Let's analyse some of the consequences of this fact for the evaluation of counterfactuals. \blk
    
    In contrast to Ex.~\ref{ex: passive vs active CF}, the conditional process operators for outcomes $\lambda \in \{+,-\}$ now become
    \begin{align}\label{eq: ex2 conditional process operator}
        \sigma_{AB}^{\lambda=+}
        &= \frac{\tr_\Lambda\left[\rho_{AB|A\Lambda}^U \widetilde{\tau}^{\lambda=+}_{\Lambda}\right]}{P(\lambda=+)}
        = \rho_{B^{\mathrm{in}}|A^{\mathrm{out}}}^{\mathrm{id}} \otimes [+]_{A^{\mathrm{in}}}\; ,\\
        \sigma_{AB}^{\lambda=-}
        &= \frac{\tr_\Lambda\left[\rho_{AB|A\Lambda}^U \widetilde{\tau}^{\lambda=-}_{\Lambda}\right]}{P(\lambda=-)}
        = \rho_{B^{\mathrm{in}}|A^{\mathrm{out}}}^{\mathrm{id}} \otimes [-]_{A^{\mathrm{in}}}\; .
    \end{align}
Again, we compute the (i) passive, (ii) do-interventional, and (iii) active counterfactual probabilities for the same counterfactual queries as in Ex.~\ref{ex: passive vs active CF}.

\red The required abducted probabilities (Eq.~\eqref{eq: Bayesian update in QSM}) will be again the same for all queries -- which share the same actual instruments and evidence -- and are calculated to be \blk 
\begin{align}
    P_{\bz}(\lambda=+|a=+) &= 1\; ,\label{eq3: abduction ex 2 plus}\\
    P_{\bz}(\lambda=-|a=+) &= 0\; .\label{eq3: abduction ex 2 minus}
\end{align}
    \begin{itemize}
        \item[(i)] Passive case: \emph{``Given that $a=+$ occurred in the actually performed instrument $\mc{I}^1_A$, what is the probability that $b'$ would have obtained using the instrument $\mc{I}^{z'_B=1}_B$, had it been that $a'=-$, using the instrument $\mc{I}^1_A$?''} 

\red 
Now the antecedent becomes a counterpossible for $\lambda = +$, and the corresponding counterfactual probability is set to 
\begin{align}
    P_{\bz'}^{\lambda=+}(b'|a'=-) &= \frac{P_{\bz'}^{\lambda=+}(b',a'=-)}{P_{\bz'}^{\lambda=+}(a'=-)}= *\;.
\end{align}

And since the abducted probability for $\lambda = +$ is nonzero, the expected counterfactual probability is also by definition,
\begin{align}
    P_{a'=-|z_A'=1}^{a=+|\bz}(b') 
    &= \sum_{\lambda \in \Lambda} P_{\bz}(\lambda|a=+) P_{\bz'}^{\lambda}(b'|a'=-) = *\;.
\end{align}
We thus see that the change in the preparation instrument at the exogenous node leads to a different result from Ex.~\ref{ex: passive vs active CF}  -- even though both cases lead to the same process operator for the endogenous nodes, when averaged over the background variables. This illustrates the dependence of counterfactual questions on the quantum structural causal model rather than on the quantum causal model for the endogenous nodes alone.

\blk
        \item[(ii)] Do-interventional case: \emph{``Given that $a=+$ occurred in the actually performed instrument $\mc{I}^1_A$, what is the probability that $b'$ would have obtained using the instrument $\mc{I}^{z'_B=2}_B$, had it been that $a'=-$, using the instrument $\mc{I}^{z'_A=2}_A=\tau^{\mathrm{do}([-])}_A$?''}
        
\red The do-intervention Eq.~(\ref{eq: do-example}) now guarantees that the denominator of the counterfactual probability is nonzero (indeed 1),
\begin{align}
    P_{\bz'}^{\lambda}(b'|a'=-) = \frac{P_{\bz'}^{\lambda}(b',a'=-)}{P_{\bz'}^{\lambda}(a'=-)}=P_{\bz'}^{\lambda}(b',a'=-)\;.
\end{align}

Since the abducted probability for $\lambda=-$, given the observed evidence, is zero, the expected counterfactual probability (Eq.~\eqref{eq: prediction step in QSM}) is equal to the counterfactual probability corresponding to $\lambda = +$, which is computed as 
\begin{align}
    P_{a'=-|z_A'=2}^{a=+|\bz}(b') 
    & = P_{\bz'}^{\lambda=+}(b'|a'=-)\\
    &= \mathrm{Tr}_{AB}[\rho_{B^{\mathrm{in}}|A^{\mathrm{out}}}^{\mathrm{id}} \otimes [+]_{A^{\mathrm{in}}}(\tau_A^{a'=-|2}\otimes\tau_B^{b'|z'_B=2})]\\
    &= \mathrm{Tr}_{B}[[-]_{B^{\mathrm{in}}}\tau_B^{b'|z'_B=2}]\; .
\end{align}
Note that this is the same result as for the same query in Example 1, since the do-intervention breaks the causal dependence from the exogenous variables in this particular case.
\blk
    
\item[(iii)] Active case: \emph{``Given that $a=+$ occurred in the actually performed instrument $\mc{I}^1_A$, what is the probability that $b'$ would have obtained using the instrument $\mc{I}^{z'_B=3}_B$, had it been that $a'=-$, using the instrument $\mc{I}^3_A$?''}

Using the instrument in Eq.~(\ref{eq: active example}), we \red find the counterfactual probability to have the same form for the two values of $\lambda$, and largely independent of $[\phi]$ (provided the denominator is nonzero, which again depends on $[\phi]$), 
\begin{align}
    P_{\bz'}^{\lambda}(b'|a'=-) &= \frac{P_{\bz'}^{\lambda}(b',a'=-)}{P_{\bz'}^{\lambda}(a'=-)}=\mathrm{Tr}_{B}[[-]_{B^{\mathrm{in}}}\otimes\tau_B^{b'|z'_B=3})]\;.\label{ex_2_active_cf}
\end{align}

Distinct from Example 1, however, the denominator of Eq.~\eqref{ex_2_active_cf} will be zero (in other words, the counterfactual antecedent will be impossible) for $\lambda = +$ when $[\phi] = [+]$  and for $\lambda = -$ when $[\phi] = [-]$. However, since the only nonzero abducted probability is $P_{\bz}(\lambda=+|a=+) = 1$, the expected counterfactual probability is numerically well-defined for all $[\phi] \neq [+]$,
\begin{align}
    P_{a'=-|z_A'=3}^{a=+|\bz}(b') 
    = P_{\bz'}^{\lambda=+}(b'|a'=-)
    = \mathrm{Tr}_{B}[[-]_{B^{\ins}}\tau_B^{b'|z'_B=3}]\; .
\end{align}

    \end{itemize}
\end{example}

\red 
Comparing the two examples, we see that while the passive counterfactual in Ex.~\ref{ex: passive vs active CF 2} is always a counterpossible, that is, it has an impossible antecedent (and is thus assigned a conventional value $*$), the same passive counterfactual evaluated in Ex.~\ref{ex: passive vs active CF} yields an expected counterfactual probability that is always numerically well-defined. This is in stark contrast to the classical case, where - as a consequence of the intrinsic determinism of classical structural causal models - a passive interpretation of a counterfactual query would always result in a counterpossible. Note also that both examples have the same average state (a maximally mixed state) prepared at the exogenous node, showing that different contexts for the state preparations of the same mixed state, and thus different quantum structural models, can result in different evaluations for a quantum counterfactual. We also see that in both examples the do-interventional counterfactual is numerically well-defined and has the same value, whereas the counterfactual probabilities in active cases are counterpossibles for different counterfactual instruments.

These distinctions then lead to the question: what should we do when a counterfactual statement is not already in standard form, and thus it is unclear whether it should be interpreted as a passive or active counterfactual? \blk

\subsection{Disambiguation of counterfactual queries: the principle of minimality}\label{sec: principle of minimality}

Note that classical counterfactuals evaluated with respect to a probabilistic structural causal model $\langle M,P(\bu) \rangle$ correspond to do-interventional (quantum) counterfactual queries when evaluated with respect to the quantum extension $M_Q$ (cf. Cor.~\ref{cor: lifting classical counterfactuals}). In fact, a classical counterfactual query in Def.~\ref{def: counterfactual} is always defined in terms of a do-intervention, since this is the only way to make the antecedent true. In this sense, we may say that classical counterfactual queries naturally embed into our formalism as do-interventional counterfactuals.

Yet, the richer structure of quantum counterfactuals, as seen in the previous section, may sometimes allow for a different interpretation of a classical counterfactual query, in particular, the antecedent of a quantum counterfactual can sometimes be true without an intervention. This leaves a certain ambiguity if we want to interpret a classical counterfactual as a quantum counterfactual query according to Def.~\ref{def: quantum counterfactual query}: for the latter, one must specify a counterfactual instrument, in particular, one must decide whether to interpret the classical counterfactual query passively or actively (do-interventionally).
For example, again referring to the scenario represented in \red Fig.~\ref{cf_example}, consider the query: 

\begin{quote}
\emph{Given that $a=+$, what is the probability that $b'$, had it been that $a'=-$?}
\end{quote}

Note that \emph{all} of the counterfactuals in the previous section are of this form, until we specify what instruments those outcomes correspond to. And in the model of Example 1, both the passive and do-interventional interpretations of this query are always numerically well-defined. \blk

This ambiguity does not occur in a classical structural causal model (CSM), since in that case all the variables are determined by a complete specification of the exogenous variables. Consequently, the only way the antecedent of a counterfactual like the one above could be realized while keeping the background variables fixed, is via some modification of the model.\footnote{We remark that, contrary to Pearl, a counterfactual may also be interpreted as a \emph{backtracking counterfactual}, where the background conditions are not necessarily kept fixed. A semantics for backtracking counterfactuals within a classical SCM has recently been proposed in Ref.~\cite{vonKugelgen2022}.} Pearl justifies the do-intervention as ``the minimal change (to a model) necessary for establishing the antecedent'' \cite{Pearl2000}. \red In our case, due to the split-node structure, a do-intervention is reflected not as a change in the model itself, but as a change in the instrument used at the antecedent nodes, that is, via the use of a do-instrument. \blk

To decide whether a \red counterfactual query not in standard form \blk should be analyzed as passive or active when interpreted with respect to a QSM, we thus propose a \emph{principle of minimality}, motivated by the minimal changes from actuality required in Pearl's analysis. If the antecedent of a counterfactual can be established with \emph{no} change to \red (the instruments applied to) a model -- that is, \blk as in a passive reading of the counterfactual -- this is by definition the \emph{minimal} change.

\begin{definition}[Principle of Minimality]\label{def: Principle of minimality}
    Whenever it is ambiguous whether a counterfactual query should be interpreted as a passive or active counterfactual in a QSM (as by Def.~\ref{def: types of quantum counterfactuals}), it should be interpreted passively if it is not a counterpossible, that is, if its antecedent is not impossible (as by Eq.~(\ref{eq: condition passive counterfactual})).
\end{definition}

Lewis' account of counterfactuals invokes a notion of \emph{similarity} among possible worlds \cite{lewis1973counterfactuals}. For Lewis, one should order the closest possible worlds by some measure of similarity, based on which a counterfactual is declared true in a world $w$ if the consequent of the counterfactual is true in all the closest worlds to $w$ where the antecedent of the counterfactual is true. Arguably, a world in which both the model and instruments are the same, but where the counterfactual antecedent occurs, is closer to the actual world than any world where a different instrument is used. Thus the Principle of Minimality is also justified by a form of Lewis' analysis applied to our case.

In Example 1, however, where both the active and do-interventional readings are numerically well-defined, they also produce the same (expected) counterfactual probability, rendering the ambiguity essentially irrelevant. We next turn to an important example where this is not the case.

\subsection{Causal dependence and counterfactual dependence in the Bell scenario}\label{sec: Bell scenario}

A conceptually important consequence of our semantics for counterfactuals (and relevant for the disambiguation of passive from active counterfactual queries) is that, unlike in the case of Pearl's framework, counterfactual dependence does not in general imply causal dependence. We establish this claim using the pertinent example of a common cause scenario, as shown in Fig.~\ref{cf_example_Bell}. \red This is essentially the causal structure of a Bell scenario, as in Fig.~\ref{Bell}, although here we omit the nodes associated with the choices of setting $X$ and $Y$, which are now the choices of instrument for the quantum nodes $A$ and $B$, and left implicit\footnote{Similarly, the labels $A$ and $B$ of the quantum nodes themselves are not to be confused with the outcomes of instruments that may be used at those nodes}. \blk 
\begin{figure}[b]
	\centering
	\begin{tikzpicture}[node distance={20mm}, thick, main/.style = {draw, circle}] 
			\node[main] (1) {$C$}; 
			\node[main] (2) [ above left of=1] {$A$}; 
			\node[main] (3) [ above right of=1] {$B$};
			\draw[->] (1) -- (2); 
 			\draw[->] (1) -- (3);
 		\end{tikzpicture}
		\caption{Causal graph of a quantum causal model where $C$ is a common cause of $A$ and $B$ (equivalent to the Bell scenario in (Fig.~\ref{Bell}), but with choices of instruments at the quantum nodes $A$ and $B$ left implicit). } \label{cf_example_Bell}
\end{figure}
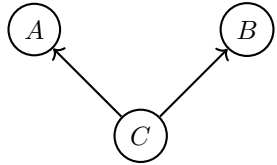

\begin{example}[Bell scenario]
    Consider the causal scenario in Fig.~\ref{cf_example_Bell}, with instruments
    \begin{align}
        \mathcal{I}_{A} &= \{ ([0]_{A^{\mathrm{out}}})^T \otimes [0]_{A^{\mathrm{in}}}, ([1]_{A^{\mathrm{out}}})^T \otimes [1]_{A^{\mathrm{in}}} \}\; ,\\
        \mathcal{I}_{B} &= \{([0]_{B^{\mathrm{out}}})^T \otimes [0]_{B^{\mathrm{in}}}, ([1]_{B^{\mathrm{out}}})^T \otimes [1]_{B^{\mathrm{in}}} \}\; ,\\
        \mathcal{I}_{C} &= \Bigl\{ ([\Phi_+]_{C^{\mathrm{out}}})^T \otimes \mathbb{I}_{C^{\mathrm{in}}}\Bigr\}\; ,
    \end{align}
    where the output of $C$ factorises as $C^{\mathrm{out}} = C_A^{\mathrm{out}} \otimes C_B^{\mathrm{out}}$ and where $|\Phi_+\rangle = \frac{1}{\sqrt{2}}(|0\rangle_{C_A^{\mathrm{out}}}|0\rangle_{C_B^{\mathrm{out}}} + |1\rangle_{C_A^{\mathrm{out}}}|1\rangle_{C_B^{\mathrm{out}}})$ is a Bell state. Let the unitary channel $\rho_{AB|C}^U$ be given by identities
    \begin{equation}\label{rho_AB|C}
        \rho_{AB|C}^U = \rho_{A|C_A^{\mathrm{out}}}^{\mathrm{id}} \rho_{B|C_B^{\mathrm{out}}}^{\mathrm{id}}\;. 
    \end{equation}
    \red Here for simplicity we consider a case where we have complete information about the event at the common cause node $C$ (that is, the single-outcome instrument that prepares the Bell state), and thus there is no exogenous node, and no abduction step is necessary. Now consider the counterfactual query 
    \begin{quote}
    $Q_1$:\emph{``Given that $a = b = 0$, what is the probability that $b' = 1$ had it been that $a' = 1$?"}.
    \end{quote}
    
    This \red query is ambiguous until we specify what instruments those counterfactual outcomes correspond to. \pur On the one hand, interpreting as a do-interventional counterfactual with \red $\tau_A^{a'=1}=\tau_A^{\mathrm{do}([1])}$, while keeping the instrument at $B$ fixed -- the most parsimonious interpretation, since the consequent of a counterfactual does not call for an intervention even in the classical case -- \pur we obtain the (expected) counterfactual probability \red (which in this case are the same as there is no abduction involved),
\begin{equation}\label{eq:Bell_interventional_CFP}
        P_{\mathrm{do}(a'=1)}^{a=b=0}(b'=1) =P_{\mathrm{do}(a'=1)}(b'=1|a'=1)=\frac{1}{2} \;.
\end{equation}
Similarly, consider the query 
\begin{quote}
   $Q_2$:\emph{``Given that $a = b = 0$, what is the probability that $b' = 1$ had it been that $a' = 0$?"}.
\end{quote}
Interpreted do-interventionally, the answer to this query is
\begin{equation}\label{eq:Bell_interventional_CFP}
        P_{\mathrm{do}(a'=0)}^{a=b=0}(b'=1) =P_{\mathrm{do}(a'=1)}(b'=1|a'=1)=\frac{1}{2} \;.
\end{equation}
\red In other words, there is no counterfactual dependence between $b'$ and $a'$ in this case, when the counterfactual antecedent is interpreted as the outcome of a do-intervention.
\blk
    
\red
On the other hand, both $Q_1$ and $Q_2$ can be interpreted as a \emph{passive} counterfactual query (with $\bz'=\bz$), since the antecedent of both queries have nonzero model probabilities $P_{\bz'=\bz}(a'=1) = P_{\bz'=\bz}(a'=0)=\frac{1}{2}$. In this passive reading, we obtain
\begin{equation}\label{eq: Bell_passive_1}
    P^{a=b=0|\bz}_{a'=1|\bz'=\bz}(b'=1) = P_{\bz'=\bz}(b'=1|a'=1)=1\,
\end{equation}
for $Q_1$, and
\begin{equation}\label{eq: Bell_passive_2}
    P_{a'=0|\textbf{z'}=\bz}^{a=b=0|\bz}(b'=1) =  P_{\bz'=\bz}(b'=1|a'=0)=0 \;
\end{equation}
for $Q_2$. In other words, 
According to the above equations, it would have been the case that $b'=1$ with certainty had it been the case that $a'=1$, \emph{and} it would have been the case that $b'=0$ with certainty had it been the case that $a'=0$ (which in fact, is the actual case). In other words, the outcomes at $A$ and $B$ are counterfactually dependent. 
\blk
\end{example}

In Pearl's classical semantics, counterfactual dependence of the type in \red Eqs.~(\ref{eq: Bell_passive_1}) and \eqref{eq: Bell_passive_2} \blk would imply that $A$ is a cause of $B$.\footnote{In Lewis's account \cite{Lewis1973causation}, such counterfactual dependence also implies causal dependence. The difference is that Pearl analyzes counterfactuals in terms of causation, which he takes to be more fundamental, whereas Lewis analyzes causation in terms of counterfactuals, which he takes as more fundamental.} Nevertheless, the quantum structural causal model we used to derive this result has by construction no causal dependence from $A$ to $B$. This shows that \emph{in quantum causal models, (passive) counterfactual dependence does not imply causal dependence}. 

Note also that in the passive reading, the counterfactual antecedent corresponds to an event -- in the technical sense of an instrument element, that is, a CP map -- that was one of the potential outcomes of the actual instrument. A counterfactual antecedent interpreted as a do-intervention, on the other hand, is a \emph{different event} altogether -- technically distinct from any event in the actual instrument. This fact is obscured in the classical case, since in Pearl's formalism we identify the incoming and outgoing systems, and it is implicitly assumed that we can always (at least in principle) perform ideal non-disturbing measurements of the variables involved. Classically, the event `$X=x$' can ambiguously correspond to ``an ideal non-disturbing measurement of $X$ has produced the outcome $x$'' or ``the variable $X$ was set to the value $X=x$''. The distinction between those interpretations, in Pearl, is attributed to the structural relations in the model; the second interpretation is represented as a surgical excision of causal arrows, while leaving the variables themselves otherwise intact. In a quantum causal model, on the other hand, a do-intervention corresponds to a related but \emph{technically distinct} event in an otherwise intact model.

In our view, counterfactual dependence without causal dependence is a much more distinctively quantum feature than the failure of ``counterfactual definiteness'', which we turn to next.

\section{A Note on Counterfactual Definiteness and Bell's theorem}\label{sec: CFD}
As we've seen in Sec.~\ref{sec:analysis}, one of the features of our formalism is that a counterfactual proposition is not always either true or false, unlike in the classical semantics. In the classical case, a structural causal model underpins the deterministic nature of the system by defining functional dependencies among the nodes. This is not in the quantum case, even with full knowledge about a quantum structural model. Whereas in the classical case we can define the \emph{probability of a counterfactual} (to be true), in the quantum case we can in general only define a \emph{counterfactual probability}. 

The lack of definite truth values for counterfactuals in quantum mechanics can be thought of as a failure of ``counterfactual definiteness'', a concept that has a long and controversial history in discussions of Bell's theorem. Skyrms \cite{skyrms1982counterfactual} defines counterfactual definiteness (CFD) as follows, (attributing it to Stapp \cite{stapp1971s}, who expressed the idea that Bell's theorem requires it as an underlying assumption):
\begin{quote}
    ``Counterfactual definiteness; essentially the assumption that subjective conditionals of the form: `If measurement $M$ had been performed, result $R$ would have been obtained' always have a definite truth value (even for measurements that were not carried out because incompatible measurements were being made) and that the quantum mechanical statistics are the probabilities of such conditionals.''
\end{quote}
Skyrms \cite{skyrms1982counterfactual} argues, contrary to Stapp, that some forms of Bell's theorem can be proved using conditional probabilities rather than probabilities of subjunctive conditionals, and hence that CFD is not a necessary assumption in its derivation. Since then there has been a long debate about the status of CFD as an assumption underlying the derivation of Bell's theorem (see e.g.~\cite{blaylock2010epr, maudlin2010bell, griffiths2011epr, maudlin2011bell, lambare2021note, zukowski2014quantum}). Analysing the details of this long and nuanced literature is beyond our scope, but from the perspective of our framework, we can say some (hopefully) clarifying remarks. 

Firstly, Bell inequalities can be derived from many different sets of propositions~\cite{WisemanCavalcanti2017, Cavalcanti2021}. Disagreements often arise due to different ``camps'' using the same terms (most notably `locality') to refer to different concepts. Bell's 1964 theorem~\cite{Bell1964} explicitly used a notion of `locality' and an assumption of determinism (as well as an implicit "free choice" assumption). Bell inequalities can also be derived~\cite{WisemanCavalcanti2017}, however, by replacing Locality and determinism by Bell's 1976 notion of Local Causality~\cite{Bell1976} (stronger than Locality).

In the causal language we are using here, Bell's 1964 propositions, using determinism, are analogous to assuming the existence of a classical structural causal model (over the common-cause causal structure of a Bell scenario). As we've seen above, in Pearl's semantics, given a classical structural causal model, counterfactuals always have well-defined truth values. When one argues that CFD is \emph{necessary} for derivations of Bell inequalities, rather than `locality' alone, one is (we surmise) implicitly assuming something like Bell's 1964 notions of Locality, rather than Local Causality. To derive a Bell inequality, the notion of Locality indeed needs to be supplemented with something else, and the upshot is that CFD carries the same effect in this context as assuming determinism, as Bell did in 1964.

However, one may also assume Local Causality instead, as Bell did in 1976 (or indeed other assumptions~\cite{WisemanCavalcanti2017, Cavalcanti2021}), without assuming determinism. In this case, in causal language, one can effectively assume a Classical Causal Model, without further assuming that there exists an underlying Structural Causal Model. In this case, CFD may in general fail, or at least we fail to have the structure necessary to define Pearl's semantics of counterfactuals. Thus, Bell inequalities \emph{can} be derived without assuming CFD. 

Nevertheless, one may argue (essentially as in \cite{maudlin2010bell}), that determinism can be derived from Local Causality plus the perfect correlations of a pure entangled state. This would then make CFD true after all. However, perfect correlations are not observable in practice in real experiments, and in any case this does not change the fact that the derivation of Bell inequalities does not \emph{require} perfect correlations (and indeed this is what makes them experimentally testable!).

Clearly, CFD as defined by Skyrms \cite{skyrms1982counterfactual} does indeed fail in the quantum framework presented here. This however does not imply that Bell's theorem can be resolved \emph{merely} by rejecting CFD. Indeed, CFD may fail even in a completely \emph{classical} but \emph{indeterministic} causal model. The violation of Bell inequalities is explained within quantum causal models not simply as due to the indeterminism of quantum theory, but something deeper. One way of thinking about it is that it is due to the failure of Reichenbach's principle of common cause, which is in turn a consequence of the Classical Causal Markov Condition: in quantum causal models, a complete specification of the causes of an event (in the case of a Bell scenario, the preparation of the entangled state) does not in general render it uncorrelated with its non-effects -- even if those non-effects are space-like separated, like the outcome of a distant instrument. Another way of thinking about it, we argue here, is that instead of the failure of CFD, the ``quantumness'' of quantum causal models is better captured by the fact that, as discussed in Sec.~\ref{sec: Bell scenario}, quantum causal models allow for counterfactual dependence without causal dependence. A more detailed discussion of this point, and how it does not arise merely as an artefact of the split-node structure of quantum causal models, will be left for future work~\cite{KooderiSuresh2024}.

\section{Generalisations and related work: quantum Bayes' theorem}\label{sec: quantum Bayes}
Thm.~\ref{thm: quantum extension} and Cor.~\ref{cor: lifting classical counterfactuals} show that our formalism for counterfactuals in quantum causal models (see Sec.~\ref{sec: counterfactuals in QSM}) is a valid generalization of Pearl’s formalism in the classical case (see Sec.~\ref{sec: classical causal models}). In this section, we review the key assumptions of our formalism, discuss possible generalizations, and draw parallels with related work on quantum Bayesian inference.

Recall that our notion of a `quantum counterfactual’ in Def.~\ref{def: quantum counterfactual query} is evaluated with respect to a quantum structural causal model (QSM) (see Def.~\ref{def: quantum structural causal model}). A QSM $\Q$ reproduces a given physical process operator $\sigma_\bA$ over observed nodes $\bA$, that is, $\sigma_\bA$ arises from coarse-graining of ancillary (environmental) degrees of freedom in $\Q$ (cf. Eq.~(\ref{eq: compatibility with process operator})). As such, $\Q$ encodes additional information that is not present in $\sigma_\bA$: namely, (i) it assumes an underlying unitary process $\rho_{\bA S|\bA \bL}^U$, and (ii) it incorporates partial knowledge about the preparation of ancillary states at exogenous nodes in the form of preparation instruments $\{\tau^{\lambda_i}_{\Lambda_i}\}_{\lambda_i}$ (cf. Eq.~(\ref{eq: exogenous preparation instruments})), acting on an arbitrary input state. Together, this allowed us to reduce the abduction step in our formalism to classical Bayesian inference.

We remark that this situation (of a unitary background process with ancillas prepared in a fixed basis) arises naturally in the context of quantum circuits, future quantum computers, and thus supposedly in the context of future quantum AI. Nevertheless, for other use cases it might be less clear how to model our background knowledge on a physical process $\sigma_\bA$ in terms of a QSM, thus prompting relaxations of the assumptions baked into Def.~\ref{def: quantum structural causal model}. First, one may want to drop our assumption of a unitary background process. This assumption closely resembles Pearl’s classical formalism, which models any uncertainty about a stochastic physical process as a probabilistic mixture of deterministic processes. 
Yet, one might argue that assuming a unitary background process is too restrictive (or perhaps even in general fundamentally unwarranted) and that one should allow for arbitrary convex decompositions of a quantum stochastic process (CPTP map). 
To this end, note that knowledge about stable facts \cite{DiBiagioRovelli2021} that lead to a preferred convex decomposition of the process operator $\sigma_\bA = \sum_{\bm{\lambda}} P(\bm{\lambda}) \sigma^{\bm{\lambda}}_\bA$ (into valid process operators $\sigma^{\bm{\lambda}}_\bA$) is all that is necessary to perform (classical) Bayesian inference (cf.~Eq.~(\ref{eq: Bayesian update in QSM})).

A more radical generalization could arise by taking our information about exogenous variables to be inherently quantum. That is, without information in the form of stable facts regarding the distribution over outcomes of preparation instruments in a QSM, our knowledge about exogenous variables merely takes the form of a generic quantum state $\rho_{\bm{\Lambda}} = \rho_{\Lambda_1} \otimes \cdots \otimes \rho_{\Lambda_n}$.\footnote{Note that without the extra information about exogenous instruments in a QSM, we reduce to the situation described by a ``unitary process operator with inputs'', as defined in Def.~4.5 of Ref.~\cite{LorenzOreshkovBarrett2019} (see also Thm.~\ref{thm: compatibility and Markovianity}).} In this case, inference can no longer be described by (classical) Bayes’ theorem but requires a quantum generalization. 
Much recent work has been devoted to finding a generalization of Bayes’ theorem to the quantum case, which has given rise to various different proposals for a quantum Bayesian inverse -- 
see Ref.~\cite{ChoJacobs2019,Fritz2020,ParzygnatRusso2021} for a recent categorical (process-diagrammatic) definition and Ref.~\cite{ParzygnatBuscemi2022} for an attempt at an axiomatic derivation. 
Once a definition for the Bayesian inverse has been fixed - and provided it exists\footnote{Ref.~\cite{ParzygnatRusso2021} characterises the existence of a Bayesian inverse in the categorical setting for finite-dimensional $C^*$-algebras.}
- we can perform a generalized abduction step in Sec.~\ref{sec:cf_q_evaluation} and, consequently, obtain a generalised formalism for counterfactuals. We leave a more careful analysis of counterfactuals arising in this way and their comparison to our formalism for future study.

\section{Conclusion}\label{sec: conclusion}
We defined a notion of counterfactual in quantum causal models and provided a semantics to evaluate counterfactual probabilities, generalizing the three-step procedure of evaluating probabilities of counterfactuals in classical causal models due to Pearl \cite{Pearl2000}. The third level in Pearl's ladder of causality (see Fig.~\ref{fig:ladder}) had thus far remained an open gap in the generalization of Pearl's formalism to quantum causal models; here, we fill this gap.

To this end, we introduce the notion of a quantum structural causal model, which takes inspiration from Pearl's notion of a classical structural causal model, yet differs from the latter in several ways. A quantum structural model is fundamentally probabilistic; it does not assign truth values to all counterfactuals, and in this sense violates ``counterfactual definiteness''.

Despite these differences, we prove that every classical structural causal model admits an extension to a quantum structural causal model, which preserves the relevant independence conditions and yields the same probabilities for counterfactual queries arising in the classical case. Thus, quantum structural causal models and the evaluation of counterfactuals therein subsume Pearl's classical formalism.

On the other hand, quantum structural causal models have a richer structure than their classical counterparts. We identify different types of counterfactual queries arising in quantum causal models, and explain how they are distinguished from counterfactual queries in classical causal models. Based on this distinction, we evaluate these different types of quantum counterfactual queries in the Bell scenario and show that counterfactual dependence does not generally imply causal dependence in this case. In this way, our analysis provides a new way of understanding how quantum causal models generalize Reichenbach's principle of common cause to the quantum case \cite{Reichenbach1991,LeiferSpekkens2013,cavalcanti2014modifications}: a quantum common cause allows for counterfactual dependence without causal dependence, unlike a classical common cause.

Our work opens up several avenues for future study. Of practical importance are applications of counterfactuals in quantum technologies. For example, questions such as ``Given that certain outcomes were observed at receiver nodes in a quantum network, what is the probability that different outcomes would have been observed, had there been an eavesdropper in the network?" can be relevant for security applications. 

It is well-known that quantum theory violates ``counterfactual definiteness'' in the sense of the phenomenon often referred to as `quantum contextuality' \cite{KochenSpecker1967}. The latter has been identified as a key distinguishing feature between classical and quantum physics, as well as a resource for quantum computation \cite{Raussendorf2013,HowardEtAl2014,FrembsRobertsBartlett2018,FrembsRobertsCampbellBartlett2023}. It would thus be interesting to study contextuality from the perspective of the counterfactual semantics spelled out here.

Finally, our analysis hinges on the classicality (`stable facts' in Ref.~\cite{DiBiagioRovelli2021}) of background (exogenous) variables in the model, as it allows us to apply (classical) Bayes' inference on our classical knowledge about exogenous variables. In turn, considering the possibility of our `prior' knowledge about exogenous variables to be genuinely quantum motivates a generalization of Bayes' theorem to the quantum case (see Sec.~\ref{sec: quantum Bayes}). We expect that combining our ideas with recent progress along those lines will constitute a fruitful direction for future research.

\paragraph*{Acknowledgements.} The authors acknowledge financial support through grant number FQXi-RFP-1807 from the Foundational Questions Institute and Fetzer Franklin Fund, a donor advised fund of Silicon Valley Community Foundation, and ARC Future Fellowship FT180100317.

\bibliographystyle{unsrt}
\bibliography{bibliography}

\clearpage

\appendix

\section{Proof of Thm.~\ref{thm: quantum extension}}\label{app: proof quantum extension}

The proof consists of several parts: (i) we find a binary extension of a classical structural causal model (CSM), (ii) we provide a protocol that extends a (binary) CSM to a reversible one, where all functional relations are bijective, (iii) we encode classical copy operations in a quantum structural causal model (QSM) using CNOT-gates. In the final part (iv), we combine (i)-(iii) to construct a QSM $\Q$, which (linearly) extends a PSM $\langle \M,P(\bu)\rangle$ as desired.\\

\textbf{(i) binary encoding: every CSM has a binary extension.} Let $\M = \langle \bU,\bV,\bF\rangle$ be a CSM (see Def.~\ref{def: classical_structural_cm_def}). First, we enlarge the sets $\bV,\bU$ in $\M$ to sets of cardinality a power of two. For all $i\in\{1,\cdots,n\}$, let $V^{(b)}_i$ denote a set of cardinality $N_i = |V^{(b)}_i| = \lceil\log_2|V_i|\rceil$. 
Second, we extend $\bF$ to the enlarged sets $U^{(b)}_i$ and $V^{(b)}_i$ as follows. For every $\bF \ni f_i: U_i \times \Pa_i \ra V_i$, let $f^{(b)}_i: U^{(b)}_i \times \Pa^{(b)}_i \ra V_i^{(b)}$ be 
the function given by
\begin{equation}\label{eq: binary encoding}
    f^{(b)}_i(u_i,p_1,\cdots,p_{J_i}) = 
    \begin{cases}
        f_i(u_i,p_1,\cdots,p_{J_i}) &\text{whenever\ } u_i \in U_i, p_j\in \Pa_i\; , \\
        f_i(*,*) &\text{otherwise\; .} 
    \end{cases}
\end{equation}
In words, $f_i^{(b)}$ identifies all elements in $U^{(b)}_i\backslash U_i$ and all elements in $\Pa^{(b)}_i \backslash \Pa_i$ with the same (arbitrary) element $(*,*) \in U_i \times \Pa_i$. It is easy to see that the causal structure of the CSM $\M^{(b)} = \langle \bU^{(b)},\bV^{(b)},\bF^{(b)} \rangle$ is the same as that of $\M$: the extensions $U_i^{(b)}$ and $V_i^{(b)}$ of $U_i$ and $V_i$ are defined locally, and $f^{(b)}_i$ has the same functional form as $f_i$. Consequently, $\M^{(b)}$ admits a (Markov) factorization of the same form as $\M$ in Eq.~(\ref{eq: causal Markov condition for M}).\\

\textbf{(ii) reversibility: every (binary) CSM has a (binary) reversible extension.} Let $\M = \langle \bU,\bV,\bF\rangle$ be a CSM. For every $f_i \in \bF$, we decompose its domain according to its pre-images $f_i^{-1}(v_i)$ for all $v_i \in V_i$. In particular, let $\tilde{S}_i$ be a set of cardinality $|\tilde{S}_i| := \max_{v_i \in V_i} \{|f_i^{-1}(v_i)|\}$. Next, we label the elements in $f^{-1}_i(v_i)$. We will use the lexicographic order $l_i: U_i \times \Pa_i \ra \{1,\cdots,|U_i||\Pa_i|\}$ on (any choice of) alphabets labeling the elements in the sets $U_i$ and $\Pa_i$, respectively. For every $v_i \in V_i$, let $l_{v_i}: f^{-1}_i(v_i) \ra \{1,\cdots,|f^{-1}_i(v_i)|\}$ be any order on $f^{-1}_i(v_i)$ such that $l_{v_i}(x) \leq l_{v_i}(y):\Leftrightarrow l_i(x) \leq l_i(y)$ for all $x,y \in f^{-1}_i(v_i)$. Combining orders $l_{v_i}$ on $f_i^{-1}(v_i)$, we obtain an order on $\bigcupdot_{v_i\in V_i} f^{-1}_i(v_i) = U_i \times \Pa_i$, denoted by $l_{f_i}: U_i \times \Pa_i \ra \{1,\cdots,|V_i||\tilde{S}_i|\}$. With this order, we define an extension $\tilde{f}_i: U_i \times \Pa_i \ra V_i \times \tilde{S}_i$ of the form
\begin{equation*}\label{eq: extension I}
    \tilde{f}_i(u_i,p_1,\cdots,p_{J_i}) \, .
\end{equation*}
Let $\pi_{V_i}: V_i \times \tilde{S}_i \ra V_i$ denote the projection onto the first factor $V_i$, i.e., $\pi_{V_i}(v_i,s_i) := v_i$; clearly, $f_i \cong \pi_{V_i} \circ \tilde{f}_i$.\footnote{$\pi_{V_i}$ can be interpreted as a `discarding operation'.} 

Now, $\tilde{f}_i$ is injective, however, it is not surjective in general. In order to obtain a bijective map, we also need to extend the domain of $f_i$. Let $S'_i$, $T'_i$ and $U'_i$ be sets of cardinalities such that
\begin{equation}
	|U'_i||\Pa_i| = |T'_i| |U_i| |\Pa_i| = |V_i| |S'_i|\; .
\end{equation}
Similar to before, we denote the lexicographic order corresponding to (some choice of) alphabets on $U'_i := T'_i \times U_i$ and $\Pa_i$ by $l_i$. Let $\bigcupdot_{v_i \in V_i} S'_i(v_i) = U'_i \times \Pa_i$ be any decomposition such that $f^{-1}_i(v_i) \subseteq S'_i(v_i)$ and $|S'_i(v_i)| = |S'_i|$ for all $v_i \in V_i$. Moreover, let $l_{f_i}: U'_i \times \Pa_i \ra \{1,\cdots,|V_i||S'_i|\}$ be any order such that $l_{f_i}(x) \leq l_{f_i}(y) \Leftrightarrow l_i(x) \leq l_i(y)$ for all $x,y \in S'_i(v_i)$ and $v_i \in V_i$. Then we define $f'_i: U'_i \times \Pa_i \ra V_i \times S'_i$ by
\begin{equation}\label{eq: reversible function extension}
	f'_i(u'_i,p_1,\cdots,p_{J_i}) := (f_i(u_i,p_1,\cdots,p_{J_i}), l_{f_i}(u'_i,p_1,\cdots,p_{J_i}))\; .
\end{equation}
In this case, we have $f_i \circ \pi_{U_i} = \pi_{V_i} \circ f'_i$. Moreover, $f'_i$ is a bijection by construction.

Finally, we note that the extensions $f_i \ra f'_i$ are local: they only add degrees of freedom $T'_i$ and $S'_i$, which are local to the node $V_i$, $T'_i$ is a cause of $S'_i$ only and the $S'_i$ are all discarded. It follows that---up to an additional `sink' node $S' = \times_{i=1}^n S'_i$---the CSM $\M' = \langle \bU',\bV',\bF'\rangle$ for $\bU' = \{T'_1\times U_1,\cdots,T'_n\times U_n\}$, $\bV' = \{\bV,S'\}$ and $\bF' = \{f'_1,\cdots,f'_n\}$ has the same causal structure as $\M$, that is, $\M'$ admits a (Markov) factorisation of the same form as $\M$ in Eq.~(\ref{eq: causal Markov condition for M}). Moreover, note that we can apply the above protocol to the binary extension $\M^{(b)}$ of a CSM $\M$, resulting in a binary, reversible CSM, which we will denote by $\M'$ also.\footnote{Note, that the binary encoding in Eq.~(\ref{eq: binary encoding}) will generally increase the cardinality of the sink nodes $S'_i$.}\\

\textbf{(iii) copy operation.} Since classical information can be copied and processed to various nodes in a CSM $\M$, we need a way to encode such copying in a quantum extension $\Q$. Indeed, classical information is copied in the CSM $\M = \langle \bU,\bV,\bF \rangle$ whenever there exist $f_j, f_{j'} \in \bF$, $j\neq j'$ such that $f_j^{-1}(V_j) \cap f_{j'}^{-1}(V_{j'}) \neq \emptyset$. Let $\mathrm{Ch}(V_i) := \{V_j \in \bV \mid V_i \in f^{-1}_j(V_j)\} \subset \bV$ be the set containing all nodes $V_j \in \bV$ that are related to $V_i$ via $\bF$. 
Since we are only interested in proving the existence of a quantum extension $\Q$ of $\M$, we will content ourselves with an inefficient copy protocol: namely, we will copy all of $V_i$ to every node $V_j \in \mathrm{Ch}(V_i)$.\footnote{This copy protocol will generally increase the cardinality of variables $\bV$ in the CSM, and is inefficient in this regard.}

The idea is to use CNOT gates to implement classical copying in a fixed basis. This requires additional ancillary (quantum) nodes in the quantum extension $\Q$ of a CSM $\M$, which we will denote by $\bm{\Lambda}'$. More precisely, assume that $V$ are quantum nodes with associated Hilbert spaces $\cH_{V^{\mathrm{in}}_i}, \cH_{V^{\mathrm{out}}_i}$ of respective dimension $\mathrm{dim}(\cH_{V^{\mathrm{in}}_i}) = \mathrm{dim}(\cH_{V^{\mathrm{out}}_i}) = 2^{N_i}$.\footnote{By (i) and (ii), we may assume that $\M$ is a binary, reversible CSM, in particular, $|V_i| = 2^{N_i}$.} Further, let $\Lambda'_i := \otimes_{V_j \in \mathrm{Ch}(V_i)} \Lambda'_{ji}$, where $\Lambda'_{ji}$ denotes a quantum node with associated Hilbert spaces $\cH_{\Lambda^{\mathrm{in}}_{ji}}, \cH_{\Lambda^{\mathrm{out}}_{ji}}$  of the same dimension as $V_i$, that is, $\mathrm{dim}(\cH_{\Lambda_{ji}'^{\mathrm{out}}}) = \mathrm{dim}(\cH_{\Lambda_{ji}'^{\mathrm{out}}}) = 2^{N_i}$. Then we devise a copy protocol involving a total of $|\mathrm{Ch}(V_i)|N_i$ CNOT-gates as follows,\footnote{Here and below, we implicitly assume the individual CNOT gates in $C_i$ to be `padded' with identities on ancillary systems not involved in the present CNOT gate. As an example, see the copy operation between three-qubit nodes in Eq.~(\ref{eq: two qubit copy operation}). \label{fn: padding}} 
\begin{align}\label{eq: copy operation}
    C_i := \prod_{V_j\in \mathrm{Ch}(V_i)} \left(\prod_{k=1}^{N_i} \mathrm{CNOT}_{\Lambda'_{ji,k}V_{i,k}}\right)\; ,
\end{align}
where $\mathrm{CNOT}_{\Lambda'_{ji,k}V_{i,k}}$ denotes the CNOT gate with control and target given by the $k-th$ qubit in $V_i$ and $\Lambda'_{ji}$, respectively; that is,
\begin{equation*}\label{eq: CNOT gate}
    \mathrm{CNOT}_{\Lambda'_{ji,k}V_{i,k}}(|0\rangle \otimes |q\rangle) = |q\rangle \otimes |q\rangle\; ,
\end{equation*}
whenever $|q\rangle \in \{|0\rangle,|1\rangle\}$. We will adopt the convention that the Hilbert space $\mc{H}_{V_i^{\mathrm{in}}}$ after the operation $C_i$ is discarded. Note that Eq.~(\ref{eq: copy operation}) is then valid for $\mathrm{Ch}(V_i) = \emptyset$, namely we have $C_i = \mathbb{I}_{V_i^\mathrm{in}}$ in this case.

We are left to show that the encoding in Eq.~(\ref{eq: copy operation}) satisfies the relations $\lbrace \Lambda'_j \nrightarrow V_i\rbrace$ whenever $j\neq i$ in Def.~\ref{def: quantum structural causal model} for the additional ancillary nodes $\Lambda'_j$. This follows immediately from the commutation relations of CNOT gates (see Prop.~1 in \cite{bataille2020quantum}), together with Thm.~\ref{thm: compatibility and Markovianity}.

Below, we explicitly analyze the case of two CNOT-gates, copying a single qubit from a node $A$ to two children $B,C \in \mathrm{Ch}(A)$. The situation is depicted in Fig.~\ref{fig: two CNOT gates}.

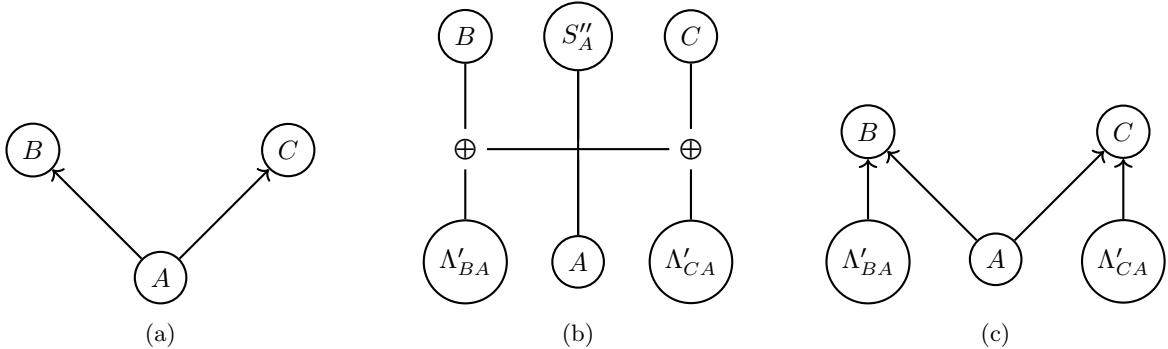
\begin{figure}[!htp]
    \begin{minipage}[b]{0.33\textwidth}
	\centering
	\begin{tikzpicture}[node distance={24mm}, thick, main/.style = {draw, circle}]
		\node[main] (2) [ above right of=1 ] {$B$}; 
		\node[main] (3) [ below right of=2 ] {$A$};
		\node[main] (4) [ above right of=3 ] {$C$}; 
		
            \draw[->] (3) -- (2);
 	      \draw[->] (3) -- (4);
        \end{tikzpicture}
 	\subcaption{}
    \end{minipage}
    \begin{minipage}[b]{0.33\textwidth}
	\centering
	\begin{tikzpicture}[node distance={15mm}, thick, main/.style = {draw, circle}] 
		\node[main] (1) {$\Lambda'_{BA}$}; 
		\node[main] (2) [ right of=1 ] {$A$}; 
		\node[main] (3) [ right of=2 ] {$\Lambda'_{CA}$};
		\node[] (4) [ above of=1 ] {$\bigoplus$}; 
		\node[] (5) [ above of=3 ] {$\bigoplus$};
		\node[] (6) [ above of=2 ] {}; 
		\node[main] (7) [ above of=4 ] {$B$};
		\node[main] (8) [ above of=6 ] {$S''_A$};
		\node[main] (9) [ above of=5 ] {$C$};
		  
            \draw[-] (1) -- (4); 
 		\draw[-] (4) -- (7);
 		\draw[-] (2) -- (6);
 		\draw[-] (3) -- (5);
 		\draw[-] (6) -- (8); 
 		\draw[-] (5) -- (9);
 		\draw[-] (2) -- (8);
 		\draw[-] (4) -- (5);
 	\end{tikzpicture}
        \subcaption{}
    \end{minipage}
    \begin{minipage}[b]{0.33\textwidth}
	\centering
	\begin{tikzpicture}[node distance={24mm}, thick, main/.style = {draw, circle}] 
		\node[main] (A) {$A$}; 	
                \node[main] (B) [ above left of=A] {$B$}; 
                \node[main] (C) [ above right of=A] {$C$}; 
                \node[main] [below=0.8cm of B] (LAB) {$\Lambda'_{BA}$};
			\node[main] (LAC) [below=0.8cm of C] {$\Lambda'_{CA}$};
			\draw[->] (A) -- (B); 
			\draw[->] (A) -- (C);
			\draw[->] (LAB) -- (B); 
			\draw[->] (LAC) -- (C);
 	\end{tikzpicture}
        \subcaption{}
    \end{minipage}
    \caption{Example of copy protocol as described in (iii): (a) a quantum causal model involving three single-qubit quantum nodes $\bA = \{A,B,C\}$, (b) the qubit at node $A$ is copied onto ancillary systems $\Lambda'_{BA}$ and $\Lambda'_{CA}$, using two CNOT gates $\mathrm{CNOT}_{\Lambda'_{BA}A}$ and $\mathrm{CNOT}_{\Lambda'_{CA}A}$, (c) the copy ancillae $\Lambda'_{BA}$ and $\Lambda'_{CA}$ enter as additional endogenous variables in the resulting causal model.}
    \label{fig: two CNOT gates}
\end{figure}
We want to show that $\Lambda'_{BA}$ and $\Lambda'_{CA}$ are local influences only, that is, $\Lambda'_{BA}$ does not signal to $C$ and $\Lambda'_{CA}$ does not signal to $B$.
Let 
\begin{gather}
    \rho_{\Lambda'_{BA}} = 
    \begin{bmatrix}
    \alpha & \beta\\
    \beta^{\ast} & 1-\alpha
    \end{bmatrix}
\end{gather}
be a state at $\Lambda'_{BA}$ and let
\begin{gather}
    \rho_{A\Lambda'_{CA}} = 
    \begin{bmatrix}
    a & b & c & d\\
    b^{\ast} & f & g & h\\
    c^{\ast} & g^{\ast} & k & l\\
    d^{\ast} & h^{\ast} & l^{\ast} & 1-a-f-k
    \end{bmatrix}
\end{gather}
be any bipartite state over the joint system $A$ and $\Lambda'_{CA}$.\footnote{We remark that in the QSM $\Q$ constructed below, $\rho_{A\Lambda'_{CA}}$ is in fact always a product state.} In order to check the no-influence relation `$\Lambda'_{BA} \ra C$' we evaluate
\begin{equation}\label{rho_d}
    \rho_{C} = \mathrm{Tr}_{AB}[U(\rho_{\Lambda'_{BA}} \otimes \rho_{A\Lambda'_{CA}}) U^\dagger]\; ,
\end{equation}
where $U=C_A$ is the copy operation in Eq.~(\ref{eq: copy operation}), in our case, it is the product of two CNOT gates,
\begin{equation}\label{eq: two qubit copy operation}
     C_A = (\mathrm{CNOT}_{\Lambda'_{BA}A}\otimes \mathbb{I}_{\Lambda'_{CA}})
     (\mathbb{I}_{\Lambda'_{BA}} \otimes \mathrm{CNOT}_{\Lambda'_{CA}A})\; ,
\end{equation}
and is given by,
\begin{gather}
C_A =
    \begin{bmatrix}
1 & 0 & 0 & 0 & 0 & 0 & 0 & 0\\
0 & 1 & 0 & 0 & 0 & 0 & 0 & 0\\
0 & 0 & 0 & 0 & 0 & 0 & 0 & 1\\
0 & 0 & 0 & 0 & 0 & 0 & 1 & 0\\
0 & 0 & 0 & 0 & 1 & 0 & 0 & 0\\
0 & 0 & 0 & 0 & 0 & 1 & 0 & 0\\
0 & 0 & 0 & 1 & 0 & 0 & 0 & 0\\
0 & 0 & 1 & 0 & 0 & 0 & 0 & 0
\end{bmatrix}\; .
\end{gather}
By straightforward computation we obtain,
\begin{gather}
    \rho_C = 
    \begin{bmatrix}
    a+k & b^{\ast}+l^{\ast}\\
    b+l & 1-a-k
    \end{bmatrix}\; ,
\end{gather}
which shows that $\rho_C$ is independent of $\rho_{\Lambda'_{BA}}$. Similarly, one shows that $\rho_B$ is not affected by the matrix elements of $\rho_{\Lambda'_{CA}}$.
The argument can be extended to more than two (blocks of) CNOT gates at a node $V_i$ (cf. Eq.~(\ref{eq: copy operation})) and for nodes with more than one child node.\\

\textbf{(iv) quantum extension: every PSM has a quantum extension.} Let $\langle \M',P(\bu')\rangle$ be a PSM (see Def.~\ref{def: prob_cscm}) with CSM $\M' = \langle \bU',\bV',\bF'\rangle$ and $P(\bu')$ a probability distribution over exogenous variables of $\M'$. The structural relations $\bF'$ define independence conditions between variables $\bV'$. Representing these by a DAG, let $(V'_1,\cdots,V'_n)$ be an order compatible with the order of nodes in the DAG, that is, $V'_j \in Pa'(V'_i) \Rightarrow j<i$. Compatibility of $\M'$ with the DAG, equivalently the causal Markov condition in Def.~\ref{def: cmc}, reads
\begin{equation*}
    P(\bv')
    = \prod_{i=1}^n P(v'_i\mid pa'_i)
    = \sum_{u'_i}\prod_{i=1}^n \delta_{v'_i,f'_i(pa'_i,u'_i)} P(u'_i)\; .
\end{equation*}

By steps (i) and (ii), we may assume that $U'_i,V'_i,S'_i$ are sets of cardinality a power of two and that the $f'_i: U'_i \times Pa'_i \ra V'_i \times S'_i$ are bijective functions. In order to construct a QSM from $\langle \M',P(\bu')\rangle$, we promote these classical variables to quantum nodes, by associating them with the (free) Hilbert spaces $\cH_{U_i'^{\mathrm{in}}}$, $\cH_{U_i'^{\mathrm{out}}}$, $\cH_{V_i'^{\mathrm{in}}}$, $\cH_{V_i'^{\mathrm{out}}}$, and $\cH_{S_i'^{\mathrm{in}}}$, $\cH_{S_i'^{\mathrm{out}}}$ over the outcome sets associated to $U'_i,V'_i,S'_i$, together with a respective choice and orthonormal basis, as well as an identification of bases between input and output spaces, e.g. $\{|v'_i\rangle\}_{v'_i}$ in $\cH_{V_i'^{\mathrm{in}}} \cong \cH_{V_i'^{\mathrm{out}}}$. Moreover, since the $f'_i$ are bijective, complex-linear extension yields isometries
\begin{equation}
    \widetilde{W}_i : \cH_{U_i'^{\mathrm{out}}}\otimes\cH_{\Pa_i'^{\mathrm{out}}} \rightarrow \cH_{V_i'^{\mathrm{in}}}\otimes\cH_{S_i'^{\mathrm{in}}}\; .
\end{equation}
Part (iii) further yields unitaries implementing local classical copy operations,
\begin{align}
    C_i: \left(\bigotimes_{V'_j \in \mathrm{Ch}'(V'_i)} \mathcal{H}_{\Lambda_{ji}'^{\mathrm{out}}} \right) \otimes \mathcal{H}_{V_i'^{\mathrm{in}}} \longrightarrow \left(\bigotimes_{V'_j\in\mathrm{Ch}'(V'_i)}\mathcal{H}_{\Lambda_{ji}'^{\mathrm{out}}}\right) \otimes \mc{H}_{V_i'^{\mathrm{in}}}\; ,
\end{align}
where by our above convention, $\mc{H}_{S_i''^{\mathrm{in}}} := \mc{H}_{V_i'^{\mathrm{in}}}$. Setting $W_i := C_i\widetilde{W}_i$ (see Fig.~\ref{fig: local unitaries with copy}) and composing the $W_i$'s for all nodes $V'_i \in V'$ in order, we define a global isometry by$^{\ref{fn: padding}}$
\begin{equation}\label{eq: unitary extension}
    W = \prod_{i=1}^n W_i = \prod_{i=1}^n C_i\widetilde{W}_i\; .
\end{equation}
Recall that by our convention that $\mc{H}_{V_i'^{\mathrm{in}}}$ is discarded after the copy process, we have $C_i = \mathbb{I}_{V_i'^{\mathrm{in}}}$ for $\mathrm{Ch}'(V'_i) = \emptyset$. Moreover, adopting the notation of the copy protocol from Eq.~(\ref{eq: copy operation}), we set $Pa''_i := \otimes_{V'_j \in Pa'_i} \Lambda'_{ij}$ as well as $V''_i := \otimes_{V'_j \in \mathrm{Ch}'(V'_i)} \Lambda'_{ji}$ for all $V'_i$ with $\Pa'(V'_i) \neq \emptyset$ in the input and output quantum nodes of the isometry $\widetilde{W}_i: \cH_{U'^{\mathrm{out}}_i} \otimes \cH_{Pa''^{\mathrm{out}}_i} \ra \cH_{V''^{\mathrm{in}}_i} \otimes \cH_{S'^{\mathrm{in}}_i}$ and consequently for $W_i: \cH_{\Lambda''^{\mathrm{out}}_i} \otimes \cH_{Pa''^{\mathrm{out}}_i} \ra \cH_{V''^{\mathrm{in}}_i} \otimes \cH_{S''^{\mathrm{in}}_i}$, where $\Lambda''_i$ consists of exogenous variables $U'_i$ and $\Lambda'_i$, and $S''_i$ consists of $S'_i$ and $V'_i$ (see Fig.~\ref{fig: local unitaries with copy}).

Note that the ordering of unitaries in Eq.~(\ref{eq: unitary extension}) maintains the partial ordering of the directed acyclic graph corresponding to $\M'$. Consequently, $\rho^W_{\bV''S''|\bV''\bm{\Lambda}''}$ admits the factorisation in Eq.~(\ref{eq: factorisation of \Q}).
\begin{figure}[!htp]
    \begin{minipage}[t]{0.6\textwidth}
        \centering
        $\left.
        \begin{array}{r}
	\begin{tikzpicture}[]
	   \tikzstyle{operator1} = [draw, rectangle, minimum width=4cm, minimum height=1cm]
          \tikzstyle{operator2} = [draw, rectangle, minimum width=6cm, minimum height=1cm]
	   \tikzstyle{line} = [thick,-]
	   \node[operator2] at (1,0) (W1) {$\widetilde{W}_i$};
	   \node[operator1] at (0,3) (C1) {$C_i$};
		
	   \draw[line] (-1,-1) -- (-1,-.5);
	   \draw[line] (3,-1) -- (3,-.5);
	   \draw[line] (-1,.5) -- (-1,2.5);
	   \draw[line] (3,.5) -- (3,4);
			
	   \draw[line] (-1,3.5) -- (-1,4);
	   \draw[line] (1,3.5) -- (1,4);
          \draw[line] (1,2) -- (1,2.5);
			
	   \node[] at (-1, -1.5) {$U'_i$};
          \node[] at (-1.5, -2.25) {$U_i$};
	   \node[] at (-0.5, -2.25) {$T_i'$};
          \draw[line] (-1.4,-1.9) -- (-1.2,-1.7);
	   \draw[line] (-.6,-1.9) -- (-0.8,-1.7);
    
	   \node[] at (3, -1.5) {$\Pa''_i := \! \! \! \underset{V'_j \in \Pa'(V'_i)}{\bigotimes}\! \Lambda'_{ij}$};
	   \node[] at (-1.4, 1.5) {$V_i'^{\mathrm{in}}$};
	   \node[] at (3, 4.5) {$S'_i$};
	   \node[] at (-1, 4.5) {$V_i'^{\mathrm{in}}$};
	   \node[] at (1.3, 4.8) {$\overbrace{\Lambda'_i = \! \! \! \underset{V'_j \in \mathrm{Ch}'(V'_i)}{\bigotimes}\! \Lambda'_{ji}}^{V''_i}$};
        \node[] at (1.3, 1.5) {$\Lambda'_i = \! \! \! \underset{V'_j \in \mathrm{Ch}'(V'_i)}{\bigotimes}\! \Lambda'_{ji}$};
    \end{tikzpicture}
    \end{array}
    \quad \quad \quad \quad \right \}$
    \end{minipage}
    \begin{minipage}[t]{0.4\textwidth}
        \centering
        \vspace{-2.5cm}
	\begin{tikzpicture}[]
	    \tikzstyle{operator} = [draw, rectangle, minimum width=4cm, minimum height=1cm]
		\tikzstyle{line} = [thick,-]
		\node[operator] at (0,0) (W1) {$W_i$};
		\draw[line] (-1,-1) -- (-1,-.5);
		\draw[line] (1,-1) -- (1,-.5);
		\draw[line] (-1,.5) -- (-1,1);
		\draw[line] (1,.5) -- (1,1);
			
		\node[] at (-1, -1.5) {$\Lambda''_i$};
		\node[] at (-1.5, -2.25) {$U'_i$};
		\node[] at (-.5, -2.25) {$\Lambda'_i$};
            \draw[line] (-1.2,-1.75) -- (-1.4,-1.95);
		\draw[line] (-0.9,-1.75) -- (-0.7,-1.95);
  
		\node[] at (1, -1.5) {$\Pa''_i$}; 
		\node[] at (-1, 1.5) {$V''_i$}; 
		\node[] at (1, 1.5) {$S''_i$};
            \node[] at (0.5, 2.25) {$V'^{\mathrm{in}}_i$};
            \node[] at (1.5, 2.25) {$S'_i$};
            \draw[line] (0.6,2) -- (0.8,1.8);
		\draw[line] (1.4,2) -- (1.2,1.8);
        \end{tikzpicture}
    \end{minipage}
    \caption{By Eq.~(\ref{eq: unitary extension}), the global isometry $W$ in the quantum extension $\Q$ of a PSM $\langle \M,P(\bu)\rangle$ factorises---according to the causal Markov condition for $\M$---into local isometries $W_i: \cH_{Pa''_i} \otimes \cH_{\Lambda''_i} \ra \cH_{V''_i} \otimes \cH_{S''_i}$ for every quantum node $V''_i$, of the form on the right. Every local isometry $W_i$ further decomposes as an isometry $\widetilde{W}_i$, given as the complex linear extension of the bijective functions in the reversible, binary CSM $\M'$ (see steps (i) and (ii)), followed by a unitary $C_i$ copying the node $V'_i$ to all its children nodes (see step (iii)). In terms of the copy ancillae $\Lambda'_{ji}$, we therefore have $V''_i := \bigotimes_{V'_j \in \mathrm{Ch}'(V'_i)} \Lambda'_{ji}$ and $\Pa''_i := \bigotimes_{V'_j \in \Pa'(V'_i)} \Lambda'_{ij}$.}
    \label{fig: local unitaries with copy}
\end{figure}

Finally, we encode the distribution $P(\bu)$ over exogenous nodes $U_i \in \bU$ of $\M$. First, we trivially extend $P(\bu)$ to $P^{(b)}(\bu^{(b)})$ by setting $P^{(b)}(\bu) = P(\bu)$ for $\bu \in \bU$ and $P^{(b)}(\bu) = 0$ for $\bu \in \bU^{(b)}\backslash \bU$ in step (i). Second, we define $P(\bu')$ as the product distribution of $P^{(b)}(\bu^{(b)})$ and the uniform distribution over $\bT' = \times_{i=1}^n T'_i$, i.e., $P(\bu') = P(\bu,\bt') = \frac{1}{|\bT'|} P^{(b)}(\bu)$ in step (ii). Finally, step (iii) requires us to initialise the input state of the copy ancillae $\Lambda'_i$. Taken together, we define instruments
\begin{equation}\label{eq: encoding of distributions over exogenous nodes in M}
    \{\tau^{u_i}_{\Lambda''_i}\}_{u_i} = \left\{\left(|0\rangle\langle 0|_{\Lambda'_i} \otimes P(u^{(b)}_i) |u^{(b)}_i\rangle\langle u^{(b)}_i|_{U^{(b)}_i} \otimes \frac{1}{|T'_i|}|t'_i\rangle\langle t'_i|_{T'_i}\right)^T \otimes \mathbb{I}_{\Lambda_i^{in}}\right\}_{u_i}\; .
\end{equation}
In summary, we thus obtain a QSM $\Q = \langle (\bV'',\bm{\Lambda}'',S''), \rho_{\bV''S''\mid \bV''\bm{\Lambda}''}^W, (\{\tau^{u_1}_{\Lambda''_1}\}_{u_1},\cdots,\{\tau^{u_n}_{\Lambda''_n}\}_{u_n})\rangle$, which by construction reproduces the classical probability distribution defined by $\M$ and $P(\bu)$ in Eq.~(\ref{eq: causal Markov condition for M II}), when evaluated on instruments corresponding to projective measurements in the ($|\mathrm{Ch}(V_i)|$-times copied) preferred bases, $\{\tau_{V''_i}^{v_i}\}_{v_i} = \{|v_i\rangle\langle v_i| \otimes |v_i\rangle\langle v_i|\}_{v_i}$. This completes the proof.

\end{document}